\newtheorem{lemma}{Lemma}
\newtheorem{theorem}{Theorem}
\newtheorem{definition}{Definition}
\newtheorem{problem}{Problem}
\theoremstyle{definition}
\theoremstyle{remark}
\newtheorem*{note*}{Note}
\newtheorem*{remark*}{Remark}
\theoremstyle{claimstyle}
\newtheorem*{claim*}{Claim}
\newenvironment{customthm}[1]
  {\innercustomthm}
  {\endinnercustomthm}
\newenvironment{hproof}{%
  \proof}{\endproof}
\providecommand{\customgenericname}{}
\title{Revisiting the Complexity of and Algorithms for the Graph Traversal Edit Distance and Its Variants}
\author[1]{Yutong Qiu\thanks{These authors contributed equally to this work.}}
\newcommand\CoAuthorMark{\footnotemark[\arabic{footnote}]}
\author[1]{Yihang Shen\protect\CoAuthorMark}
\author[1]{Carl Kingsford\thanks{To whom correspondence should be addressed: \texttt{carlk@cs.cmu.edu}}}
\affil[1]{Computational Biology Department, School of Computer Science, \protect\\ Carnegie Mellon University, 5000 Forbes Avenue,  Pittsburgh, PA}
\date{}
\begin{document}
\maketitle

\begin{abstract}
\normalsize
The graph traversal edit distance (GTED), introduced by Ebrahimpour Boroojeny et al.~(2018), is an elegant distance measure defined as the minimum edit distance between strings reconstructed from Eulerian trails in two edge-labeled graphs. GTED can be used to infer evolutionary relationships between species by comparing de Bruijn graphs directly without the computationally costly and error-prone process of genome assembly. Ebrahimpour Boroojeny et al.~(2018) propose two ILP formulations for GTED and claim that GTED is polynomially solvable because the linear programming relaxation of one of the ILPs always yields optimal integer solutions. The claim that GTED is polynomially solvable is contradictory to the complexity results of existing string-to-graph matching problems. 

We resolve this conflict in complexity results by proving that GTED is NP-complete and showing that the ILPs proposed by Ebrahimpour Boroojeny et al. do not solve GTED but instead solve for a lower bound of GTED and are not solvable in polynomial time. In addition, we provide the first two, correct ILP formulations of GTED and evaluate their empirical efficiency. These results provide solid algorithmic foundations for comparing genome graphs and point to the direction of heuristics. \\
The source code to reproduce experimental results is available at\\\url{https://github.com/Kingsford-Group/gtednewilp/}.

\end{abstract}




\newcommand\trails{\textrm{trails}}
\newcommand\CC{\textrm{CC}}
\newcommand\str{\textrm{str}}
\newcommand\strset{\textrm{strset}}
\newcommand\edit{\textrm{edit}}
\newcommand\PP{\textrm{P}\xspace}
\newcommand\NP{\textrm{NP}\xspace}
\newcommand\NPcomplete{\textrm{NP-complete}\xspace}
\newcommand\match{\textrm{match}\xspace}
\newcommand\GTED{\ensuremath{\textsc{GTED}}\xspace}
\newcommand\CCTED{\ensuremath{\textsc{CCTED}}\xspace}
\newcommand\FGTED{\ensuremath{\textsc{FGTED}}\xspace}
\newcommand\ETEW{\ensuremath{\textsc{ETEW}}\xspace}
\newcommand\ETRC{\ensuremath{\textsc{ETRC}}\xspace}
\newcommand\alngraph{\ensuremath{\mathcal{A}}\xspace}
\newcommand\talngraph{\ensuremath{\mathcal{AT}}\xspace}
\newcommand\falngraph{\ensuremath{\mathcal{AF}}\xspace}
\newcommand\boundarym{\ensuremath{[\partial]}\xspace}
\newcommand\Tr{\ensuremath{T}\xspace}
\newcommand\flow{\text{flow}}
\newcommand\cost{\text{cost}}
\newcommand\emedit{\text{emedit}}
\newcommand\boundarylp{the LP in~\eqref{eqn:boundarylp}-\eqref{eqn:xinit}\xspace}
\newcommand\boundaryilp{the ILP in~\eqref{eqn:boundarylp}-\eqref{eqn:xinit}\xspace}
\newcommand\Boundarylp{The LP in~\eqref{eqn:boundarylp}-\eqref{eqn:xinit}\xspace}

\newcommand\Gtedilp{The ILP in~\eqref{eqn:gtedlp_obj}-\eqref{eqn:gtedlp_last}\xspace}
\newcommand\gtedilp{the ILP in~\eqref{eqn:gtedlp_obj}-\eqref{eqn:gtedlp_last}\xspace}
\newcommand\Gtedlp{The LP in~\eqref{eqn:gtedlp_obj}-\eqref{eqn:gtedlp_last}\xspace}
\newcommand\gtedlp{the LP in~\eqref{eqn:gtedlp_obj}-\eqref{eqn:gtedlp_last}\xspace}

\section{Introduction}

 Graph traversal edit distance (GTED)~\citep{gted} is an elegant measure of the similarity between the strings represented by edge-labeled Eulerian graphs. For example, given two de Bruijn assembly graphs~\citep{pevzner2001eulerian}, computing GTED between them measures the similarity between two genomes without the computationally intensive and possibly error-prone process of assembling the genomes. Using an approximation of GTED between assembly graphs of Hepatitis B viruses,~\citet{gted} group the viruses into clusters consistent with their taxonomy. This can be extended to inferring phylogeny relationships in metagenomic communities or comparing heterogeneous disease samples such as cancer. There are several other methods to compute a similarity measure between strings encoded by two assembly graphs~\citep{polevikov2019synteny,minkin2020scalable,mangul2016reference,huntsman2015bruijn}. GTED has the advantage that it does not require prior knowledge on the type of the genome graph or the complete sequence of the input genomes. The input to the GTED problem is two unidirectional, edge-labeled Eulerian graphs, which are defined as:

\begin{definition}[Unidirectional, edge-labeled Eulerian Graph]
    A unidirectional, edge-labeled Eulerian graph is a connected directed graph $G = (V, E, \ell, \Sigma)$, with node set $V$, edge multi-set $E$, constant-size alphabet $\Sigma$, and single-character edge labels $\ell: E\rightarrow \Sigma$, such that $G$ contains an Eulerian trail that traverses every edge $e\in E$ exactly once. The unidirectional condition means that all edges between the same pair of nodes are in the same direction.
\end{definition}
Such graphs arise in genome assembly problems (e.g. the de Bruijn subgraphs). Computing GTED is the problem of computing the minimum edit distance between the two most similar strings represented by Eulerian trails each input graph.
\begin{problem}[Graph Traversal Edit Distance (\GTED)~\citep{gted}]
    \label{prob:gted}
    Given two unidirectional, edge-labeled Eulerian graphs $G_1$ and $G_2$, compute
    \begin{equation}
    \GTED(G_1,G_2) \triangleq \min_{\substack{t_1\in \trails(G_1)\\t_2\in \trails(G_2)}} \edit(\str(t_1), \str(t_2)).
    \end{equation}
    Here, $\trails(G)$ is the collection of all Eulerian trails in graph $G$, $\str(t)$ is a string constructed by concatenating labels on the Eulerian trail $t=(e_0,e_1,\dots ,e_n)$, and $\edit(s_1, s_2)$ is the edit distance between strings $s_1$ and $s_2$.
\end{problem} 

\citet{gted} claim that GTED is polynomially solvable by proposing an integer linear programming (ILP) formulation of GTED and arguing that the constraints of the ILP make it polynomially solvable. This result, however, conflicts with several complexity results on string-to-graph matching problems.~\citet{eulerw} show that it is NP-complete to determine if a string exactly matches an Eulerian tour in an edge-labeled Eulerian graph. 
Additionally,~\citet{jain2020complexity} show that it is NP-complete to compute an edit distance between a string and strings represented by a labeled graph if edit operations are allowed on the graph. On the other hand, polynomial-time algorithms exist to solve string-to-string alignment~\citep{needleman1970general} and string-to-graph alignment~\citep{jain2020complexity} when edit operations on graphs are not allowed.

We resolve the conflict among the results on complexity of graph comparisons by revisiting the complexity of and the proposed solutions to GTED\@. We prove that computing GTED is NP-complete by reducing from the {\sc Hamiltonian Path} problem, reaching an agreement with other related results on complexity. Further, we point out with a counter-example that the optimal solution of the ILP formulation proposed by~\citet{gted} does not solve GTED.

We give two ILP formulations for GTED. The first ILP has an exponential number of constraints and can be solved by subtour elimination iteratively~\citep{dantzig1954solution,dias2022minimum}. The second ILP has a polynomial number of constraints and shares a similar high-level idea of the global ordering approach~\citep{dias2022minimum} in solving the {\sc Traveling Salesman} problem~\citep{miller1960integer}. 

In~\citet{fgted}, Flow-GTED (FGTED), a variant of GTED is proposed to compare two sets of strings instead of two strings encoded by graphs. \FGTED is equal to the edit distance between the most similar sets of strings spelled by the decomposition of flows between a pair of predetermined source and sink nodes. The similarity between the sets of strings reconstructed from the flow decomposition is measured by the Earth Mover's Edit Distance~\citep{emd,fgted}. \FGTED is used to compare pan-genomes, where both the frequency and content of strings are essential to represent the population of organisms. \citet{fgted} reduce \FGTED to \GTED, and via the claimed polynomial-time algorithm of \GTED, argue that \FGTED is also polynomially solvable. We show that this claim is false by proving that \FGTED is also NP-complete.

While the optimal solution to ILP proposed in~\citet{gted} does not solve GTED, it does compute a lower bound to GTED. We characterize the cases when GTED is equal to this lower bound. In addition, we point out that solving this ILP formulation finds a minimum-cost matching between closed-trail decompositions in the input graphs, which may be used to compute the similarity between repeats in the genomes. \citet{gted} claim their proposed ILP formulation is solvable in polynomial time by arguing that the constraint matrix of the linear relaxation of the ILP is always totally unimodular. We show that this claim is false by proving that the constraint matrix is not always totally unimodular and showing that there exists optimal fractional solutions to its linear relaxation.


We evaluate the efficiency of solving ILP formulations for GTED and its lower bound on simulated genomic strings and show that it is impractical to compute GTED on larger genomes.

In summary, we revisit two important problems in genome graph comparisons: Graph Traversal Edit Distance (GTED) and its variant FGTED. We show that both GTED and FGTED are NP-complete, and provide the first correct ILP formulations for GTED. We also show that the ILP formulation proposed by~\citep{gted} is a lower bound to GTED. We evaluate the efficiency of the ILPs for GTED and its lower bound on genomic sequences. These results provide solid algorithmic foundations for continued algorithmic innovation on the task of comparing genome graphs and point to the direction of approximation heuristics.

\section{GTED and FGTED are NP-complete}

\subsection{Conflicting results on computational complexity of GTED and string-to-graph matching}
The natural decision versions of all of the computational problems described above and below are clearly in NP. Under the assumption that $\PP \neq \NP$, the results on the computational complexity of GTED and string-to-graph matching claimed in~\citet{gted} and~\citet{eulerw}, respectively, cannot be both true.

\citet{eulerw} show that the problem of determining if an input string can be spelled by concatenating edge labels in an Eulerian trail in an input graph is NP-complete. We call this problem {\sc Eulerian Trail Equaling Word}. We show in Theorem~\ref{thm:etewp} that we can reduce \ETEW to \GTED, and therefore if \GTED is polynomially solvable, then \ETEW is polynomially solvable. The complete proof is in Appendix~\ref{sec:gtedconflict}.

\begin{problem} [Eulerian Trail Equaling Word~\citep{eulerw}]
    Given a string $s\in \Sigma^*$, an edge-labaled Eulerian graph $G$, find an Eulerian trail $t$ of $G$ such that $\str(t) = s$.
\end{problem}

\begin{theorem} \label{thm:etewp}
    If $\GTED \in \PP$ then $\ETEW \in \PP$.
\end{theorem}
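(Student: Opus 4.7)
The plan is to reduce (the decision version of) \ETEW to \GTED by a single polynomial-time Karp reduction. Given an \ETEW instance consisting of a string $s = s_1 s_2 \cdots s_n \in \Sigma^*$ and a unidirectional edge-labeled Eulerian graph $G$, I will construct in polynomial time a second unidirectional edge-labeled Eulerian graph $G_s$ whose \emph{only} Eulerian trail spells exactly $s$. Invoking the assumed polynomial-time \GTED procedure on $(G_s, G)$ will then reveal whether $G$ admits an Eulerian trail spelling $s$.

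The graph $G_s$ is simply the directed path on nodes $v_0, v_1, \dots, v_n$ with edge $(v_{i-1}, v_i)$ carrying label $s_i$ for each $i \in \{1,\dots,n\}$. This graph is connected, every pair of adjacent nodes is joined by a single directed edge (so the unidirectional condition is trivially met), and it has the unique Eulerian trail $t_s = (v_0,v_1,\dots,v_n)$ with $\str(t_s) = s$. Because $\trails(G_s) = \{t_s\}$, the definition of \GTED collapses to
\[
\GTED(G_s, G) \;=\; \min_{t \in \trails(G)} \edit\bigl(s,\str(t)\bigr),
\]
which equals $0$ if and only if some Eulerian trail of $G$ spells $s$; that is exactly the \ETEW decision question. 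Thus a single polynomial-time \GTED call, followed by a comparison to zero, decides \ETEW in polynomial time.

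The reduction is conceptually straightforward, so there is no deep obstacle. The points that require care are (i) verifying that the path gadget $G_s$ genuinely meets the definition of a unidirectional edge-labeled Eulerian graph from Section~1, and (ii) arguing uniqueness of its Eulerian trail, so that the outer minimization in \GTED degenerates to a single term and the value $0$ can be interpreted as exact string equality rather than just approximate agreement. For the search version of \ETEW (outputting an explicit trail), one either reads the witness off the optimizing pair returned by the \GTED solver, or applies standard self-reduction: repeatedly fix a next edge of $G$ whose label matches the next character of $s$, delete it, and recurse, using the decision procedure at each step to check that the remaining subproblem is still feasible.
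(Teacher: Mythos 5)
Your construction of the path graph $G_s$ and the observation that $\trails(G_s)$ is a singleton, so that $\GTED(G_s,G)$ collapses to $\min_{t\in\trails(G)}\edit(s,\str(t))$, matches the first half of the paper's argument. But there is a genuine gap: \ETEW takes as input an arbitrary edge-labeled Eulerian graph $G$, which need not be \emph{unidirectional}, whereas \GTED is only defined on unidirectional graphs. (This is not a corner case --- the graphs produced by the {\sc Hamiltonian Path} reduction that establishes NP-completeness of \ETEW are full of anti-parallel edge pairs, so restricting \ETEW to unidirectional inputs would leave its hardness unestablished.) Your reduction feeds $G$ directly to the assumed \GTED oracle, which is not a legal instance when $G$ has anti-parallel edges, so the reduction does not cover the full problem and the theorem does not follow.

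Handling this is the main technical content of the paper's proof: it builds $G'$ from $G$ by splitting each anti-parallel pair $(u,v),(v,u)$ into two length-2 paths through fresh vertices, labeling the new half-edges with a fresh symbol $\epsilon$, and then modifying the cost function so that $\cost(-,\epsilon)=0$ and $\cost(a,\epsilon)=1$ for $a\in\Sigma$. With these costs, a zero-cost alignment forces every $\epsilon$-labeled edge to align to a gap, so $\GTED(C,G')=0$ iff $G$ itself has an Eulerian trail spelling $s$. To repair your proof you would need to add this (or an equivalent) unidirectionalization step and verify that it preserves the yes/no answer; as written, checking that $G_s$ is unidirectional addresses only one of the two inputs.
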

\begin{hproof}
    We first convert an input instance $\langle s,G\rangle$ to \ETEW into an input instance $\langle G_1, G_2 \rangle$ to \GTED by (a) creating graph $G_1$ that only contains edges that reconstruct string $s$ and (b) modifying $G$ into $G_2$ by extending the anti-parallel edges so that $G_2$ is unidirectional. We show that if $\GTED(G_1, G_2) = 0$, there must be an Eulerian trail in $G$ that spells $s$, and if $\GTED(G_1, G_2) > 0$, $G$ must not contain an Eulerian trail that spells $s$.
\end{hproof}
    
    Hence, an (assumed) polynomial-time algorithm for $\GTED$ solves $\ETEW$ in polynomial time. This contradicts Theorem~6 of~\citet{eulerw} of the NP-completeness of \ETEW (under $\PP\ne\NP$). 

\subsection{Reduction from Hamiltonian Path to GTED and FGTED}

We resolve the contradiction by showing that GTED is NP-complete. The details of the proof are in Appendix~\ref{sec:gtednpproof}.
\begin{theorem}\label{thm:gtednp}
    \GTED is NP-complete.
\end{theorem}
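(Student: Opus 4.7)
The plan is in two parts. Membership in \NP is immediate: a pair of Eulerian trails $(t_1,t_2)$ in $(G_1,G_2)$ serves as a polynomial-size witness, and given it we can check the decision threshold in polynomial time by running the classical $O(|\str(t_1)|\cdot|\str(t_2)|)$ dynamic program for $\edit$.

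For hardness, I would reduce from \textsc{Directed Hamiltonian Path}. Given an instance $H=(V,E)$ with $V=\{v_1,\dots,v_n\}$, the plan is to output two edge-labeled unidirectional Eulerian graphs $G_1,G_2$ and a threshold $k$ (aiming for $k=0$) such that $H$ admits a Hamiltonian path iff $\GTED(G_1,G_2)\le k$. Graph $G_1$ will be a simple directed path whose labels spell a fixed target string $s$; all of the technical work lies in building $G_2$ from $H$ so that an Eulerian trail spelling $s$ exists iff $H$ has a Hamiltonian path.

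The construction of $G_2$ would proceed in three stages. First, assign each vertex $v_i$ a distinct identifier $\sigma_i$ over a constant-size alphabet (for instance a $\lceil\log n\rceil$-bit binary code padded with delimiters). Second, replace $v_i$ by a \emph{vertex gadget}---a small subgraph whose edges must be traversed together in any Eulerian trail, spelling $\sigma_i$ between a designated entry and exit node. Third, represent each arc $(v_i,v_j)\in E$ by a labeled connector edge (or short labeled path) from the exit of the $v_i$-gadget to the entry of the $v_j$-gadget, carrying a reserved separator symbol. Finally, pad with dummy self-loops carrying a distinguished ``idle'' label, in multiplicities chosen so that every node has balanced in- and out-degree, the graph is connected and unidirectional, and any Eulerian trail is forced to traverse all vertex gadgets before sweeping up the idle loops. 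The target string $s$ spelled by $G_1$ is then the concatenation of all $n$ identifiers separated by the separator symbol, followed by the appropriate run of idle characters.

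The main obstacle I anticipate is ensuring that the Eulerian structure of $G_2$ \emph{forces} each vertex gadget to be entered and exited exactly once, in an order realizing a Hamiltonian path, rather than merely permitting such a trail among many others. Concretely, I would make the identifiers $\sigma_i$ pairwise prefix-free and long enough that any out-of-order traversal contributes strictly positive edit cost to $s$, and I would tune the multiplicity of connector and idle edges so that the degree-balance constraint at each gadget's entry and exit leaves no slack for repeated visits. Once those gadgets are correctly engineered, the forward direction---given a Hamiltonian path in $H$, construct an Eulerian trail in $G_2$ that spells $s$---should follow by walking the vertex gadgets in Hamiltonian order and then sweeping the idle loops; the reverse direction should follow from the uniqueness and prefix-freeness of the identifier strings, which together imply that only Hamiltonian-ordered traversals can match $s$ at zero edit cost.
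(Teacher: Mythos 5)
Your \NP-membership argument is fine, and your high-level plan (reduce from \textsc{Hamiltonian Path}, let $G_1$ be a simple path spelling a fixed target string, build an Eulerian $G_2$ from $H$, and test whether $\GTED=0$) is the same strategy the paper uses. However, the gadgetry has concrete flaws. First, giving each vertex a \emph{distinct} identifier $\sigma_i$ and letting $s$ be ``the concatenation of all $n$ identifiers'' over-constrains the instance: a zero-cost trail must spell $s$ exactly, so the vertex gadgets must be visited in the one order in which their identifiers appear in $s$. The reduction would then only detect whether that \emph{particular} vertex ordering is a Hamiltonian path, which is checkable in polynomial time, so it cannot establish hardness. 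The paper avoids this by labeling all vertex edges identically (\texttt{a}) and all arc edges identically (\texttt{b}); the prefix $\texttt{a\#}(\texttt{b\#a\#})^{n-1}$ of its target string is order-agnostic, and the requirement that an Eulerian trail use each edge exactly once---not label distinctness---is what forces the $n$ traversed vertex edges to be distinct.

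Second, your mechanism for making $G_2$ Eulerian does not work: a directed self-loop adds one to both the in-degree and the out-degree of its node, so self-loops preserve degree imbalances rather than repair them, and they cannot restore connectivity either. The paper instead forms the ``Eulerian closure'': every edge receives an antiparallel counterpart (labeled \texttt{c}) and a hub vertex $w$ is attached to every vertex, which balances all degrees and connects the graph by construction; unidirectionality is then restored by subdividing each antiparallel pair with dummy \texttt{\#}-labeled nodes. Relatedly, your construction never accounts for the arcs of $H$ that are \emph{not} on the Hamiltonian path: an Eulerian trail must traverse their connector edges too, and those traversals spell characters that must appear in $s$. The paper's target string carries the explicit suffix $(\texttt{c\#})^{2n-1}(\texttt{c\#b\#})^{|E|+1}$ precisely to absorb the retrace of the Hamiltonian walk and the mop-up of every unused arc via the hub $w$. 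Without analogues of these devices, both directions of your equivalence break down.
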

\begin{hproof}
    We reduce from the \textsc{Hamiltonian Path} problem, which asks whether a directed, simple graph $G$ contains a path that visits every vertex exactly once. Here simple means no self-loops or parallel edges. The reduction is almost identical to that presented in~\citet{eulerw}, and from here until noted later in the proof the argument is identical except for the technicalities introduced to force unidirectionality (and another minor change described later). 

    Let $\langle G=(V,E)\rangle$ be an instance of \textsc{Hamiltonian Path}, with $n=|V|$ vertices. We first create the Eulerian closure of $G$, which is defined as $G'=(V', E')$ where
    \begin{equation}
    V' = \{v^{in}, v^{out} : v \in V\} \cup \{w\}.
    \end{equation}
    Here, each vertex in $V$ is split into $v^{in}$ and $v^{out}$, and $w$ is a newly added vertex. 
    $E'$ is the union of the following sets of edges and their labels:
    \begin{itemize}
    \item $E_1 = \{(v^{in},  v^{out}) : v \in V\}$,  labeled \texttt{a},
    \item $E_2 = \{(u^{out}, v^{in}) : (u,v) \in E\}$,  labeled \texttt{b},
    \item $E_3 = \{(v^{out}, v^{in}) : v \in V\}$, labeled \texttt{c},
    \item $E_4 = \{(v^{in}, u^{out}) : (u,v) \in E\}$, labeled \texttt{c},
    \item $E_5 = \{(u^{in},w) : u \in V\}$, labeled \texttt{c},
    \item $E_6 = \{(w,u^{in}) : u \in V\}$, labeled \texttt{b}.
    \end{itemize}
    $G'$ is an Eulerian graph by construction but contains anti-parallel edges. We further create $G''$ from $G'$ by adding dummy nodes so that each pair of antiparallel edges is split into two parallel, length-2 paths with labels \texttt{x\#}, where \texttt{x} is the original label.

    We also create a graph $C$ that has the same number of edges as $G''$ and spells out a string
    \begin{equation}
    q = \texttt{a\#}(\texttt{b\#a\#})^{n-1}(\texttt{c\#})^{2n-1}(\texttt{c\#b\#})^{|E| + 1}.
    \end{equation}
    
    We then argue that $G$ has a Hamiltonian path if and only if $G''$ spells out the string $q$, which uses the same line of arguments and graph traversals as in~\citet{eulerw}. We then show that $\GTED(G'', C) = 0$ if and only if $G''$ spells $q$.
\end{hproof}
Following a similar argument, we show that FGTED is also NP-complete, and its proof is in Appendix~\ref{sec:fgtednp}.
\begin{theorem}\label{thm:fgtednp}
    \FGTED is NP-complete.
\end{theorem}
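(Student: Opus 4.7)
The plan is to reduce from \GTED, which Theorem~\ref{thm:gtednp} establishes as NP-complete. Given an instance $\langle G_1, G_2 \rangle$ of \GTED, I will construct in polynomial time an \FGTED instance $\langle G_1', G_2' \rangle$ whose objective value coincides with $\GTED(G_1, G_2)$.

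For each $i \in \{1,2\}$, let $u_i$ and $v_i$ be the start and end vertices of an Eulerian trail of $G_i$ (with $u_i = v_i$ if $G_i$ has an Eulerian circuit). I would form $G_i'$ by attaching a new source $s_i$ via an edge $(s_i, u_i)$ and a new sink $t_i$ via an edge $(v_i, t_i)$, both labeled with a fresh sentinel character $\$ \notin \Sigma$ and given unit capacity, with the total $s_i$-$t_i$ flow value fixed at $1$. The augmented graph $G_i'$ admits an Eulerian trail from $s_i$ to $t_i$, and flow conservation combined with the unit capacity on the sentinel edges forces the unique feasible flow to carry exactly one unit along every edge. Any decomposition of this flow therefore yields a single $s_i$-$t_i$ path that traverses every edge of $G_i$ exactly once, producing the singleton multiset $\{\$ \cdot \str(\tau_i) \cdot \$\}$ for some Eulerian trail $\tau_i \in \trails(G_i)$.

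The Earth Mover's Edit Distance between two singleton multisets equals the ordinary edit distance between their members, so the \FGTED objective collapses to
\begin{equation*}
    \FGTED(G_1', G_2') = \min_{\tau_1 \in \trails(G_1),\,\tau_2 \in \trails(G_2)} \edit(\$\str(\tau_1)\$,\,\$\str(\tau_2)\$) = \GTED(G_1, G_2),
\end{equation*}
and NP-completeness then follows from membership in \NP (noted at the start of this section) together with this polynomial-time reduction.

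The hard part, I expect, is aligning the construction with the precise definition of \FGTED in~\citet{fgted}: because \FGTED minimizes over both flow assignments and their decompositions, I must rule out degenerate flows that skip parts of the graph. The unit-capacity sentinels together with the flow-value constraint are intended to force every original edge onto the single $s_i$-$t_i$ path, but subtleties (such as whether edges may carry zero flow, or how EMD is normalized across multisets of unequal size) may require minor modifications to the sentinel gadgets. If these obstacles prove insurmountable under that definition, the fallback plan is to adapt the \textsc{Hamiltonian Path} reduction from Theorem~\ref{thm:gtednp} directly, attaching sentinel sources and sinks to the graphs $G''$ and $C$ constructed there, and arguing analogously that $\FGTED = 0$ iff $G$ has a Hamiltonian path.
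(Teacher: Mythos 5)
Your central gadget---unit sentinel edges $(s_i,u_i)$ and $(v_i,t_i)$ forcing the unique saturating-flow decomposition to be a single Eulerian trail, plus the observation that the Earth Mover's Edit Distance between singleton multisets is an ordinary edit distance---is fine as far as it goes, but the claimed identity $\FGTED(G_1',G_2')=\GTED(G_1,G_2)$ fails in exactly the case that matters. When $G_i$ has a \emph{closed} Eulerian trail, $\trails(G_i)$ contains Eulerian trails starting at every vertex, whereas your construction anchors the trail at the arbitrarily chosen $u_i$: the single $s_i$-$t_i$ trail in $G_i'$ corresponds only to closed Eulerian trails of $G_i$ based at $u_i$. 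So the FGTED minimization ranges over a strict subset of $\trails(G_1)\times\trails(G_2)$ and can only overestimate: e.g., for two $2$-cycles spelling \texttt{AB} and \texttt{BA} respectively, $\GTED=0$ (choose rotated starting points) while your reduction yields edit distance $2$. This cannot be dismissed as a corner case, because the hard instances produced by Theorem~\ref{thm:gtednp} consist of a path graph $C$ and the Eulerian closure $G''$, and $G''$ always has a closed Eulerian trail. Enumerating all $|V_1|\cdot|V_2|$ anchor pairs and taking the minimum would repair correctness but turns the argument into a Turing reduction rather than the many-one reduction needed for NP-completeness in the usual sense.

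The paper's actual proof is essentially your fallback plan, and it is instructive to see how it sidesteps precisely this anchoring issue: it reduces from \textsc{Hamiltonian Cycle} rather than \textsc{Hamiltonian Path}, attaches the source and sink to a single fixed vertex $v_1^{in}$ of the (closed-trail) Eulerian closure, and prepends/appends sentinel characters to the target string $q$. Because a Hamiltonian cycle passes through every vertex, insisting that the Eulerian trail start and end at $v_1^{in}$ loses no generality; a Hamiltonian path, by contrast, need not start at the chosen vertex, so your fallback as stated (sentinels on the $G''$ and $C$ of Theorem~\ref{thm:gtednp}, still reducing from \textsc{Hamiltonian Path}) would only certify Hamiltonian paths beginning at the attachment vertex and inherits the same gap. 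To complete your argument you need either to switch the source problem to \textsc{Hamiltonian Cycle} as the paper does, or to find some other device that restores the minimization over all starting points of a closed Eulerian trail.
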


\section{Revisiting the correctness of the proposed ILP solutions to GTED}

In this section, we revisit two proposed ILP solutions to GTED by~\citet{gted} and show that the optimal solution to these ILP is not always equal to GTED. 

\subsection{Alignment graph}
The previously proposed ILP formulations for GTED are based on the alignment graph constructed from input graphs. The high-level concept of an alignment graph is similar to the dynamic programming matrix for the string-to-string alignment problem~\citep{needleman1970general}. 

\begin{figure}
\centering
\includegraphics[width=0.7\textwidth]{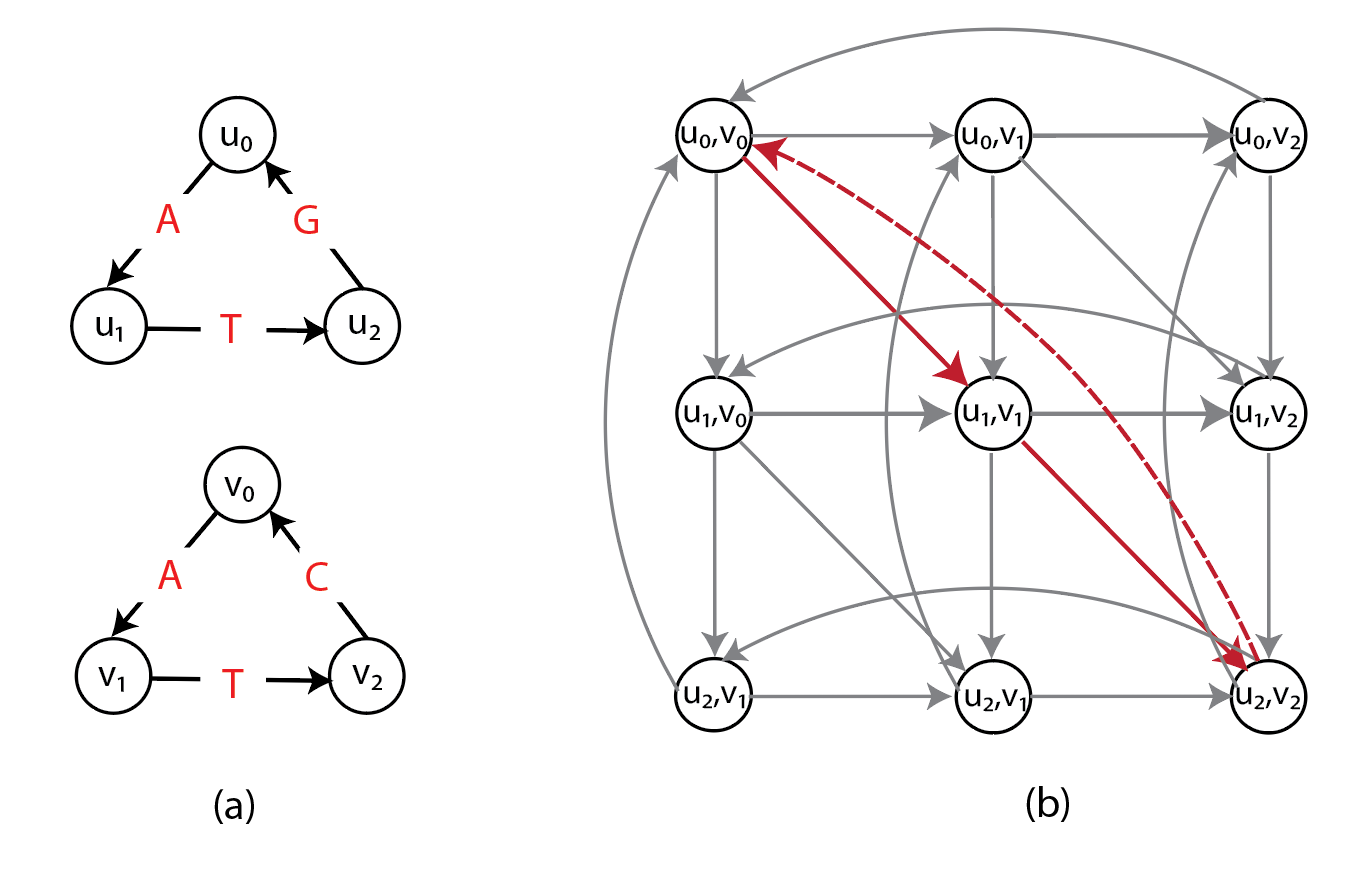}
\caption{(a) An example of two edge labeled Eulerian graphs $G_1$ (top) and $G_2$ (bottom). (b)~The alignment graph $\alngraph(G_1,G_2)$. The cycle with red edges is the path corresponding to $\GTED(G_1, G_2)$. Red solid edges are matches with cost 0 and red dashed-line edge is mismatch with cost 1.}\label{fig:alltri}
\end{figure}


\begin{definition}[Alignment graph]\label{def:align}
Let $G_1,~G_2$ be two unidirectional, edge-labeled Eulerian graphs. The \emph{alignment graph} $\alngraph(G_1,G_2) = (V, E, \delta)$ is a directed graph that has vertex set $V = V_1 \times V_2$ and edge multi-set $E$ that equals the union of the following:
\begin{description}
\item[Vertical edges]$[(u_1,u_2),(v_1,u_2)]$ for $(u_1,v_1) \in E_1$ and $u_2 \in V_2$,
\item[Horizontal edges]$[(u_1,u_2),(u_1,v_2)]$ for $u_1 \in V_1$ and $(u_2,v_2) \in E_2$,
\item[Diagonal edges] $[(u_1,u_2),(v_1,v_2)]$ for $(u_1,v_1) \in E_1$ and $(u_2,v_2) \in E_2$.
\end{description}
Each edge is associated with a cost by the cost function $\delta:E\rightarrow\mathbb{R}$.
\end{definition}

Each diagonal edge $e=[(u_1, v_1),(u_2, v_2)]$ in an alignment graph can be projected to $(u_1, v_1)$ and $(u_2, v_2)$ in $G_1$ and $G_2$, respectively. Similarly, each vertical edge can be projected to one edge in $G_1$, and each horizontal edge can be projected to one edge in $G_2$. 

We define the edge projection function $\pi_i$ that projects an edge from the alignment graph to an edge in the input graph $G_i$. We also define the path projection function $\Pi_i$ that projects a trail in the alignment graph to a trail in the input graph $G_i$. For example, let a trail in the alignment graph be $p = (e_1, e_2,\dots,e_m)$, and $\Pi_i(p) = (\pi_i(e_1), \pi_i(e_2),\dots,\pi_i(e_m))$ is a trail in $G_i$.

An example of an alignment graph is shown in Figure~\ref{fig:alltri}(b).~The horizontal edges correspond to gaps in strings represented by $G_1$, vertical edges correspond to gaps in strings represented by $G_2$, and diagonal edges correspond to the matching between edge labels from the two graphs. In the rest of this paper, we assume that the costs for horizontal and vertical edges are 1, and the costs for the diagonal edges are 1 if the diagonal edge represents a mismatch and 0 if it is a match. The cost function $\delta$ can be defined to capture the cost of matching between edge labels or inserting gaps. This definition of alignment graph is also a generalization of the alignment graph used in string-to-graph alignment~\cite{jain2020complexity}.

\subsection{The first previously proposed ILP for GTED}
\label{sec:gtedilpcounterexp}


Lemma 1 in~\citet{gted} provides a model for computing GTED by finding the minimum-cost trail in the alignment graph. We reiterate it here for completeness.

\begin{lemma}[\citep{gted}]
    For any two edge-labeled Eulerian graphs $G_1$ and $G_2$,
    \begin{equation}
        \begin{aligned}
            \GTED(G_1, G_2) = \normalfont \textrm{minimize}_c \quad& \delta(c)\\
            \normalfont \textrm{subject to}\quad& c\text{ is a trail in }\alngraph(G_1, G_2),\\
            & \Pi_i(c)\text{ is an Eulerian trail in } G_i \text{ for } i=1,2,
        \end{aligned}            
    \end{equation}        
where $\delta(c)$ is the total edge cost of $c$, and $\Pi_i(c)$ is the projection from $c$ to $G_i$.
\label{lem:gted}
\end{lemma}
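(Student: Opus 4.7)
The plan is to establish the equality by proving both inequalities separately, using the standard correspondence between edit-distance alignments of two strings and monotone lattice paths in a Needleman--Wunsch dynamic programming grid.

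\textbf{From GTED to a feasible trail.} I would start by fixing Eulerian trails $t_1, t_2$ that achieve $\GTED(G_1,G_2) = \edit(\str(t_1),\str(t_2))$, together with an optimal edit alignment between $\str(t_1)$ and $\str(t_2)$. Such an alignment is encoded by a monotone lattice path from $(0,0)$ to $(|t_1|,|t_2|)$ in the $(|t_1|+1)\times(|t_2|+1)$ grid whose diagonal, horizontal, and vertical steps carry the usual match/mismatch and gap costs summing to $\edit(\str(t_1),\str(t_2))$. I would lift this lattice path into $\alngraph(G_1,G_2)$ by identifying each grid coordinate $(i,j)$ with the alignment-graph vertex $(u_1^{(i)},u_2^{(j)})$, where $u_1^{(i)}$ (resp.~$u_2^{(j)}$) is the node of $G_1$ (resp.~$G_2$) reached after the first $i$ (resp.~$j$) edges of $t_1$ (resp.~$t_2$). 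A diagonal lattice step at $(i,j)$ lifts to the diagonal alignment-graph edge whose $\pi_1$- and $\pi_2$-projections are the $(i{+}1)$-th edge of $t_1$ and the $(j{+}1)$-th edge of $t_2$; horizontal and vertical steps lift analogously. The resulting walk $c$ satisfies $\Pi_i(c)=t_i$ and $\delta(c)=\edit(\str(t_1),\str(t_2))=\GTED(G_1,G_2)$, giving $\GTED(G_1,G_2)\geq \min_c \delta(c)$.

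\textbf{From a feasible trail to an alignment.} Conversely, for any feasible $c$, define $t_i := \Pi_i(c)$ and read $c$ edge by edge to produce an alignment of $\str(t_1)$ with $\str(t_2)$: each diagonal edge aligns one character from each string at match/mismatch cost, each horizontal edge aligns a gap in $\str(t_1)$ with a character of $\str(t_2)$ at cost $1$, and each vertical edge does so dually. This is a valid edit alignment whose total cost equals $\delta(c)$, so $\delta(c) \geq \edit(\str(t_1),\str(t_2)) \geq \GTED(G_1,G_2)$ because the Eulerian projection constraint guarantees $t_1,t_2$ are Eulerian trails and hence admissible in the definition of $\GTED$. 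Minimizing over $c$ gives the reverse inequality.

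\textbf{Main obstacle.} The technical step that needs the most care is verifying, in the first direction, that the lifted walk $c$ is a \emph{trail}, i.e., repeats no alignment-graph edge. A monotone lattice path exits each column $j$ by at most one horizontal step and each row $i$ by at most one vertical step, so two distinct horizontal (resp.~vertical) lifted edges must project to distinct $G_2$-edges (resp.~$G_1$-edges), which are indeed distinct by the Eulerian property of $t_2$ (resp.~$t_1$). Two distinct diagonal lifted edges differ in either their row or column index and, again by Eulerian-ness, in at least one of their $\pi_1,\pi_2$ projections. Edges of different types are structurally distinguishable by whether their endpoints share a $G_1$- or $G_2$-coordinate, ruling out cross-type collisions. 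Once trail-ness is checked, feasibility of $c$ is immediate and the two bounds combine into the claimed identity.
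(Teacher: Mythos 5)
Your proof is correct. Note, however, that the paper does not actually prove this lemma: it restates it from the cited work of \citet{gted} ``for completeness,'' so there is no in-paper proof to compare against line by line. The closest analogue is the paper's proof of Lemma~\ref{lem:ccted} (the CCTED version), which uses exactly your two-directional projection/lifting correspondence — project matched trails of the input graphs up to the alignment graph for one inequality, and project a feasible alignment-graph solution back down for the other — but states it at a higher level and never verifies that the lifted object is edge-simple. Your contribution beyond that template is the explicit check that the lift of a monotone lattice path is a \emph{trail}: distinct horizontal (resp.\ vertical, diagonal) steps occupy distinct columns (resp.\ rows, row--column pairs), hence by Eulerianness of $t_1,t_2$ they project to distinct elements of the edge multi-sets $E_1,E_2$ and therefore to distinct alignment-graph edges. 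That is the right thing to pin down, since the lemma quantifies over trails rather than walks, and it is the step the paper's own discussion takes for granted. One cosmetic remark: your cross-type argument (``edges of different types are structurally distinguishable by whether their endpoints share a coordinate'') wobbles slightly in degenerate cases such as self-loops, but it is moot because $E$ is defined as the multi-set \emph{union} of the three edge families, so a horizontal and a diagonal edge are distinct multi-set elements regardless of their endpoints. Your reverse direction is also sound, provided one observes (as you implicitly do) that horizontal edges fix the $G_1$-coordinate and vertical edges fix the $G_2$-coordinate, so the projections $\Pi_i(c)$ of a connected walk are themselves connected walks and the characters of $\str(\Pi_1(c))$ and $\str(\Pi_2(c))$ are consumed in order by the induced alignment.
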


An example of such a minimum-cost trail is shown in Figure~\ref{fig:alltri}(b).~\citet{gted} provide the following ILP formulation and claim that it is a direct translation of Lemma~\ref{lem:gted}:
\begin{align}
    \label{eqn:gtedlp_obj}
    \underset{x\in \mathbb{N}^{|E|}}{\textrm{minimize}}\quad &\sum_{e \in E} x_e\delta(e)\\
    \label{eqn:cycle_cst}
    \textrm{subject to}\quad & Ax = 0\\
    \label{eqn:proj_cst}
    &  \sum_{e\in E} x_eI_i(e, f) = 1\quad\text{for }i=1,2\text{ and for all }f\in E_i\\ 
    \label{eqn:gtedlp_last}
    & A_{ue} = \begin{cases}
        -1\quad \text{if $e=(u,v)\in E$ for some vertex $v\in V$}\\
        1\quad \text{if $e=(v,u)\in E$ for some $u\in V$}\\
        0 \quad \text{otherwise}
    \end{cases} 
 \end{align}
 Here, $E$ is the edge set of $\alngraph(G_1, G_2)$. $A$ is the negative incidence matrix of size $|V|\times|E|$, and $I_i(e,f)$ is an indicator function that is 1 if edge $e$ in $E$ projects to edge $f$ in the input graph $G_i$ (and 0 otherwise). We define the domain of each $x_e$ to include all non-negative integers. However, due to constraints~\eqref{eqn:proj_cst}, the values of $x_e$ are limited to either 0 or 1. We describe this ILP formulation with the assumption that both input graphs have closed Eulerian trails, which means that each node has equal numbers of incoming and outgoing edges. We discuss the cases when input graphs contain open Eulerian trails in Section~\ref{sec:newilp}.

While \gtedilp allows the solutions to select disjoint cycles in the alignment graph, the projection of edges in these disjoint cycles does not correspond to a single string represented by either of the input graphs. We show that \gtedilp does not solve GTED by giving an example where the objective value of the optimal solution to \gtedilp is not equal to GTED.


\begin{figure}
    \centering
    \includegraphics[width=0.8\textwidth]{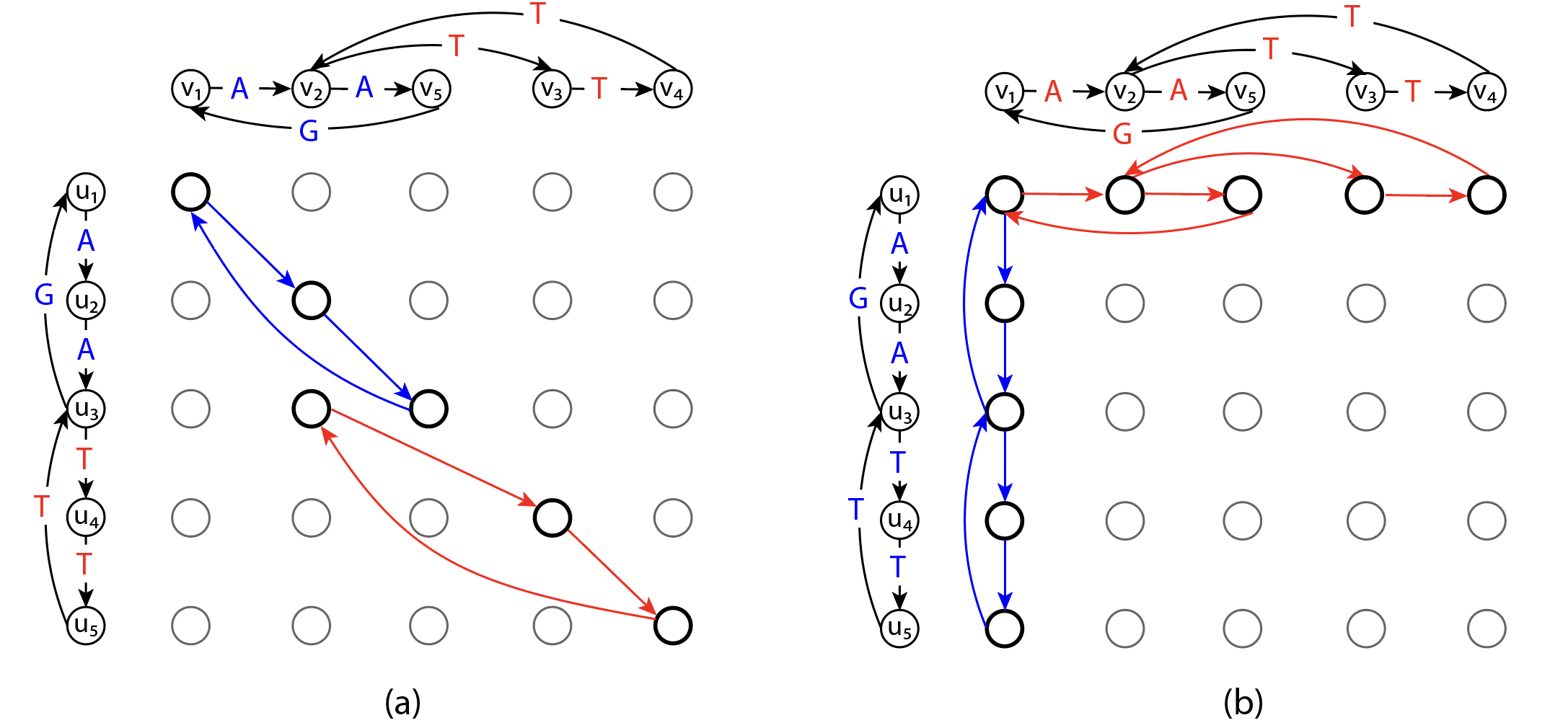}
    \caption{ (a) The subgraph in the alignment graph induced by an optimal solution to \gtedilp and \boundaryilp with input graphs on the left and top. The red and blue edges in the alignment graph are edges matching labels in red and blue font, respectively, and are part of the optimal solution to \gtedilp. The cost of the red and blue edges are zero. (b) The subgraph induced by $x^{init}$ with $s_1 = u_1$ and $s_2=v_1$ according to \boundaryilp. The rest of the edges in the alignment graph are omitted for simplicity. }
    \label{fig:separate_cycle_exp}
\end{figure}

Construct two input graphs as shown in Figure~\ref{fig:separate_cycle_exp}(a). Specifically, $G_1$ spells circular permutations of \texttt{TTTGAA}  and $G_2$ spells circular permutations of \texttt{TTTAGA}. It is clear that $\GTED(G_1, G_2) = 2$ under Levenshtein edit distance. On the other hand, as shown in Figure~\ref{fig:separate_cycle_exp}(a), an optimal solution in $\alngraph(G_1, G_2)$ contains two disjoint cycles with nonzero $x_e$ values that have a total edge cost equal to 0. This solution is a feasible solution to \gtedilp. It is also an optimal solution because the objective value is zero, which is the lower bound on \gtedilp. This optimal objective value, however, is smaller than $\GTED(G_1, G_2)$. Therefore, \gtedilp does not solve GTED since it allows the solution to be a set of disjoint components.

\subsection{The second previously proposed ILP formulation of GTED}

We describe the second proposed ILP formulation of GTED by~\citet{gted}. 
Following~\citet{gted}, we use simplices, a notion from geometry, to generalize the notion of an edge to higher dimensions. A $k$-simplex is a $k$-dimensional polytope which is the convex hull of its $k+1$ vertices. For example, a $1$-simplex is an undirected edge, and a $2$-simplex is a triangle. We use the orientation of a simplex, which is given by the ordering of the vertex set of a simplex up to an even permutation, to generalize the notion of the edge direction~\citep[p.~26]{munkres2018elements}. We use square brackets $[\cdot]$ to denote an oriented simplex. For example, $[v_0, v_1]$ denotes a 1-simplex with orientation $v_0\rightarrow v_1$, which is a directed edge from $v_0$ to $v_1$, and $[v_0, v_1, v_2]$ denotes a 2-simplex with orientation corresponding to the vertex ordering $v_0\rightarrow v_1 \rightarrow v_2\rightarrow v_0$. Each $k$-simplex has two possible unique orientations, and we use the signed coefficient to connect their forms together, e.g. $[v_0, v_1] = -[v_1, v_0]$.  

\begin{figure}
    \centering
    \includegraphics[width=0.7\columnwidth]{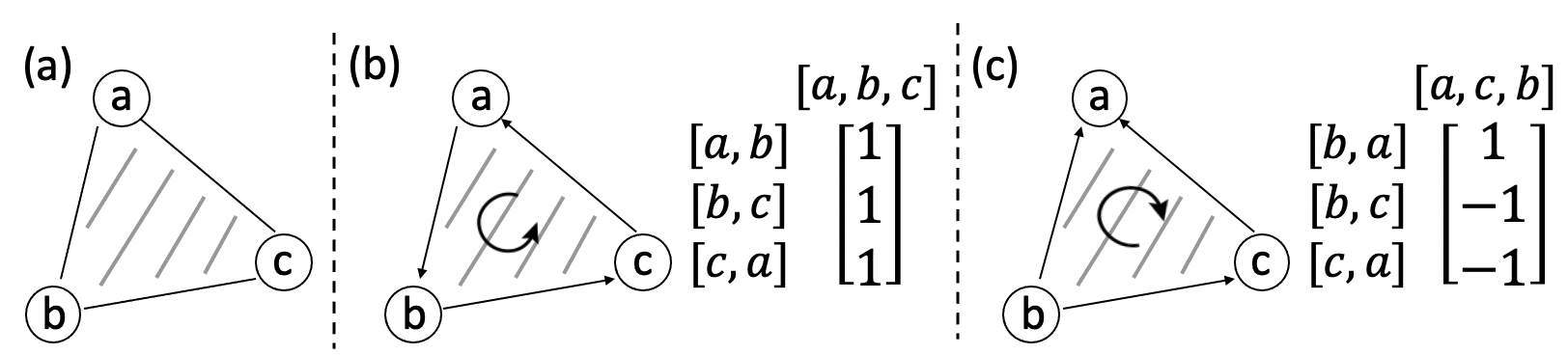}
    \caption{(a) A graph that contains an unoriented 2-simplex with three unoriented 1-simplices. (b), (c) The same graph with two different ways of orienting the simplices and the corresponding boundary matrices.}
    \label{fig:atbackground}
\end{figure}

For each pair of graphs $G_1$ and $G_2$ and their alignment graph $\alngraph (G_1,G_2)$, we define an oriented $2$-simplex set $T(G_1, G_2)$ which is the union of:
\begin{itemize}
\item $[(u_1,u_2),(v_1,u_2), (v_1,v_2)]$ for all $(u_1,v_1) \in E_1$ and $(u_2,v_2) \in E_2$, or
\item $[(u_1,u_2),(u_1,v_2),(v_1,v_2)]$ for all $(u_1,v_1) \in E_1$ and $(u_2,v_2) \in E_2$,
\end{itemize}

We use the boundary operator~\citep[p.~28]{munkres2018elements}, denoted by $\partial$, to map an oriented $k$-simplex to a sum of oriented $(k-1)$-simplices with signed coefficients.
\begin{align}
    \partial [v_0,v_1,\dots , v_k]=\sum_{i=0}^{p}(-1)^{i}[v_0,\dots , \hat{v_i},\dots , v_k],
\end{align}
where $\hat{v_i}$ denotes the vertex $v_i$ is to be deleted. Intuitively, the boundary operator maps the oriented $k$-simplex to a sum of oriented $(k-1)$-simplices such that their vertices are in the $k$-simplex and their orientations are consistent with the orientation of the $k$-simplex. For example, when $k=2$, we have:
\begin{align}
    \partial[v_0, v_1, v_2] = [v_1, v_2] - [v_0, v_2] + [v_0, v_1] 
    =[v_1, v_2] + [v_2, v_0] + [v_0, v_1].
\end{align}
We reiterate the second ILP formulation proposed in~\citet{gted}. Given an alignment graph $\alngraph(G_1, G_2)=(V, E, \delta)$ and the oriented 2-simplex set $T(G_1,G_2)$, 
\begin{equation}\label{eqn:boundarylp}
\begin{aligned}
\underset{x\in \mathbb{N}^{|E|},y\in \mathbb{Z}^{|T(G_1,G_2)|}}{\textrm{minimize}}\quad & \sum_{e \in E} x_e\delta(e)\\
\textrm{subject to}\quad & x = x^{init} + [\partial] y
\end{aligned}
\end{equation}
Entries in $x$ and $y$ correspond to 1-simplices and 2-simplices in $E$ and $T(G_1, G_2)$, respectively. $[\partial]$ is a $|E|\times |T(G_1, G_2)|$ boundary matrix where each entry $[\partial]_{i,j}$ is the signed coefficient of the oriented 1-simplex (the directed edge) in $E$ corresponding to $x_i$ in the boundary of the oriented 2-simplex in $T(G_1, G_2)$ corresponding to $y_j$. The index $i,j$ for each 1-simplex or 2-simplex is assigned based on an arbitrary ordering of the 1-simplices in $E$ or the 2-simplices in $T(G_1,G_2)$. An example of the boundary matrix is shown in Figure~\ref{fig:atbackground}. $\delta(e)$ is the cost of each edge. $x^{init} \in \mathbb{R}^{|E|}$ is a vector where each entry corresponds to a 1-simplex in $E$ with $|E_1|+|E_2|$ nonzero entries that represent one Eulerian trail in each input graph. $x^{init}$ is a feasible solution to the ILP. Let $s_1$ be the source of the Eulerian trail in $G_1$, and $s_2$ be the sink of the Eulerian trail in $G_2$. Each entry in $x^{init}$ is defined by
\begin{align}\label{eqn:xinit}
    x_e^{init} = \begin{cases}
    1\quad &\text{if}~e=[(u_1, s_2),(v_1, s_2)]~\text{or}~e=[(s_1, u_2),(s_1, v_2)],\\
    0\quad &\text{otherwise}.
    \end{cases}
\end{align}
 If the Eulerian trail is closed in $G_i$, $s_i$ can be any vertex in $V_i$. An example of $x^{init}$ is shown in Figure~\ref{fig:separate_cycle_exp}(b).

We provide a complete proof in Section~\ref{sec:ilpequivalence} of the Appendix that \gtedilp is equivalent to \boundaryilp. Therefore, the example we provided in Section~\ref{sec:gtedilpcounterexp} is also an optimal solution to \boundaryilp but not a solution to GTED. Thus, \boundaryilp does not always solve GTED.

\section{New ILP solutions to GTED}
\label{sec:newilp}

\begin{figure}
    \centering
    \includegraphics[width=0.8\textwidth]{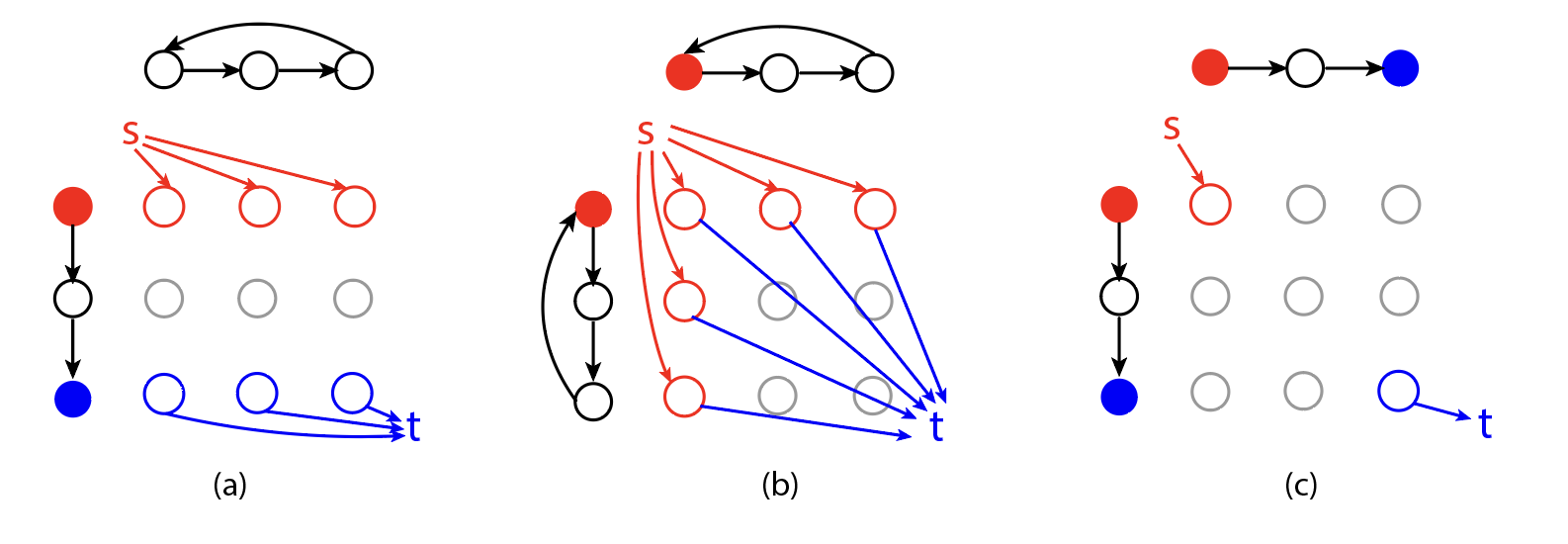}
    \caption{Modified alignment graphs based on input types. (a) $G_1$ has open Eulerian trails while $G_2$ has closed Eulerian trails. (b) Both $G_1$ and $G_2$ have closed Eulerian trails. (c) Both $G_1$ and $G_2$ have open Eulerian trails. Solid red and blue nodes are the source and sink  nodes of the graphs with open Eulerian trails. ``s'' and ``t'' are the added source and sink nodes. Colored edges are added alignment edges directing from and to source and sink nodes, respectively.}
    \label{fig:input_exp_eulerian}
\end{figure} 

To ensure that our new ILP formulations are applicable to input graphs regardless of whether they contain an open or closed Eulerian trail, we add a source node $s$ and a sink node $t$ to the alignment graph. Figure~\ref{fig:input_exp_eulerian} illustrates three possible cases of input graphs.

\begin{enumerate}
\item If only one of the input graphs has closed Eulerian trails, wlog, let $G_1$ be the input graph with open Eulerian trails. Let $a_1$ and $b_1$ be the start and end of the Eulerian trail that have odd degrees. Add edges $[s, (a_1, v_2)]$ and $[(b_1, v_2), t]$ to $E$ for all nodes $v_2\in V_2$ (Figure~\ref{fig:input_exp_eulerian}(a)).

\item If both input graphs have closed Eulerian trails, let $a_1$ and $a_2$ be two arbitrary nodes in $G_1$ and $G_2$, respectively. Add edges $[s, (a_1, v_2)]$, $[s, (v_1, a_2)]$, $[(a_1, v_2), t]$ and $[(v_1, a_2), t]$ for all nodes $v_1\in V_1$ and $v_2\in V_2$ to $E$ (Figure~\ref{fig:input_exp_eulerian}(b)).

\item If both input graphs have open Eulerian trails, add edges $[s, (a_1, a_2)]$ and $[t,(b_1, b_2)]$, where $a_i$ and $b_i$ are start and end nodes of the Eulerian trails in $G_i$, respectively (Figure~\ref{fig:input_exp_eulerian}(c)).
\end{enumerate}

According to Lemma~\ref{lem:gted}, we can solve $\GTED(G_1, G_2)$ by finding a trail in $\alngraph(G_1, G_2)$ that satisfies the projection requirements. This is equivalent to finding a $s$-$t$ trail in $\alngraph(G_1, G_2)$ that satisfies constraints:
\begin{align}
    \sum_{(u,v)\in E} x_{uv} I_i((u,v), f) = 1\quad\text{for all }(u,v)\in E,f\in G_i,~u\neq s,~v\neq t,\label{cons:projection}
\end{align}
where $I_i(e,f) = 1$ if the alignment edge $e$ projects to $f$ in $G_i$.  An optimal solution to GTED in the alignment graph must start and end with the source and sink node because they are connected to all possible starts and ends of Eulerian trails in the input graphs.

Since a trail in $\alngraph(G_1, G_2)$ is a flow network, we use the following flow constraints to enforce the equality between the number of in- and out-edges for each node in the alignment graph except the source and sink nodes.
\begin{align}
    \sum_{(s,u)\in E}x_{su} &= 1\label{cons:source_edges}\\
    \sum_{(v,t)\in E}x_{vt} &= 1\label{cons:sink_edges}\\
    \sum_{(u,v)\in E} x_{uv} &= \sum_{(v,w)\in E} x_{vw} \quad\text{for all }v\in V\label{cons:balance}
\end{align}
Constraints~\eqref{cons:projection} and~\eqref{cons:balance} are equivalent to constraints~\eqref{eqn:proj_cst} and~\eqref{eqn:cycle_cst}, respectively. Therefore, we rewrite \gtedilp in terms of the modified alignment graph. 
\begin{equation}
    \label{eqn:lowboundilp}
    \tag{lower bound ILP}
    \begin{aligned}
        \underset{x\in \mathbb{N}^{|E|}}{\textrm{minimize}}\quad &\sum_{e \in E} x_e\delta(e)\\
        \text{subject to}\quad &\text{constraints~\eqref{cons:projection}--\eqref{cons:balance}}.
    \end{aligned}
\end{equation}

As we show in Section~\ref{sec:gtedilpcounterexp}, constraints~\eqref{cons:projection}-\eqref{cons:balance} do not guarantee that the ILP solution is one trail in $\alngraph(G_1, G_2)$, thus allowing several disjoint covering trails to be selected in the solution and fails to model GTED correctly. We show in Section~\ref{sec:ccted} that the solutions to this ILP is a lower bound to GTED.

According to Lemma 1 in \citet{dias2022minimum}, a subgraph of a directed graph $G$ with source node $s$ and sink node $t$ is a $s$-$t$ trail if and only if it is a flow network and every strongly connected component (SCC) of the subgraph has at least one edge outgoing from it. Thus, in order to formulate an ILP for the GTED problem, it is necessary to devise constraints that prevent disjoint SCCs from being selected in the alignment graph. In the following, we describe two approaches for achieving this.

\subsection{Enforcing one trail in the alignment graph via  constraint generation}
\label{sec:iterativeilp}


Section 3.2 of~\citet{dias2022minimum} proposes a method to design linear constraints for eliminating disjoint SCCs, which can be directly adapted to our problem. Let $\mathcal{C}$ be the collection of all strongly connected subgraphs of the alignment graph $\alngraph (G_1,G_2)$. We use the following constraint to enforce that the selected edges form one $s$-$t$ trail in the alignment graph:
\begin{align}
 &\text{If }\sum_{(u,v)\in E(C)}x_{uv}=|E(C)|\text{, then }\sum_{(u,v)\in \varepsilon^{+}(C)}x_{uv}\geq 1 \quad\text{for all }C\in \mathcal{C},\label{cons:scc}
 \end{align}
where $E(C)$ is the set of edges in the strongly connected subgraph $C$ and $\varepsilon^{+}(C)$ is the set of edges $(u,v)$ such that $u$ belongs to $C$ and $v$ does not belong to $C$. $\sum_{(u,v)\in E(C)}x_{uv}=|E(C)|$ indicates that $C$ is in the subgraph of $\alngraph (G_1,G_2)$ constructed by all edges $(u,v)$ with positive $x_{uv}$, and $\sum_{(u,v)\in \varepsilon^{+}(C)}x_{uv}\geq 1$ guarantees that there exists an out-going edge of $C$ that is in the subgraph. 

We use the same technique as~\citet{dias2022minimum} to linearize the ``if-then'' condition in~\eqref{cons:scc} by introducing a new variable $\beta$ for each strongly connected component:
 \begin{align}
    &\sum_{(u,v)\in E(C)}x_{uv}\geq |E(C)|\beta_{C}\quad\text{for all }C\in \mathcal{C} 
    \label{cons:scclinear1}\\
    &\sum_{(u,v)\in E(C)}x_{uv}-|E(C)|+1-|E(C)|\beta_{C}\leq 0\quad\text{for all }C\in \mathcal{C} \label{cons:scclinear2}\\
    & \sum_{(u,v)\in \varepsilon^{+}(C)}x_{uv}\geq \beta_{C}\quad\text{for all }C\in \mathcal{C}\label{cons:scclinear3}\\
    & \beta_{C}\in\{0,1\}\quad\text{for all }C\in \mathcal{C} \label{cons:scclinear4}
\end{align}

To summarize, given any pair of unidirectional, edge-labeled Eulerian graphs $G_1$ and $G_2$ and their alignment graph $\alngraph(G_1,G_2)=(V,E,\delta)$, $\GTED(G_1, G_2)$ is equal to the optimal solution of the following ILP formulation:
\begin{equation}
\tag{exponential ILP}
 \begin{aligned}
  \label{eqn:newgtedlp_obj}
     \underset{x\in \{0,1\}^{|E|}}{\textrm{minimize}}\quad &\sum_{e \in E} x_e\delta(e)\\
     \textrm{subject to}\quad &\text{constraints~\eqref{cons:projection}--\eqref{cons:balance} and } \\ &\text{constraints~\eqref{cons:scclinear1}--\eqref{cons:scclinear4}.}
 \end{aligned}
 \end{equation}
This ILP has an exponential number of constraints as there is a set of constraints for every strongly connected subgraph in the alignment graph. To solve this ILP more efficiently, we can use the procedure similar to the iterative constraint generation procedure in~\citet{dias2022minimum}. Initially, solve the ILP with only constraints~\eqref{cons:projection}-\eqref{cons:balance}. Create a subgraph, $G'$, induced by edges with positive $x_{uv}$. For each disjoint SCC in $G'$ that does not contain the sink node, add constraints~\eqref{cons:scclinear1}-\eqref{cons:scclinear4} for edges in the SCC and solve the new ILP. Iterate until no disjoint SCCs are found in the solution. 

\begin{algorithm*}[!htbp]
    \caption{Iterative constraint generation algorithm to solve~\eqref{eqn:newgtedlp_obj}}
    \begin{algorithmic}[1]
        \State \textbf{Input} Two unidirectional, edge-labeled Eulerian graphs and their alignment graph
        \State $\mathcal{C}\gets\emptyset$
        \While {true}
            \State Solve the ILP~\eqref{eqn:newgtedlp_obj} with $\mathcal{C}$
            \If {the ILP variables $x_{uv}$ induce a strongly connected component $C$ not satisfying~\eqref{cons:scc}}
                \State $\mathcal{C}=\mathcal{C}\cup \{C\}$
            \Else
            \State\Return the optimal ILP value and the corresponding optimal solution $x$
            \EndIf
        \EndWhile
    \end{algorithmic}
    \label{alg:iter_gtedlp}
\end{algorithm*}

\subsection{A compact ILP for GTED with polynomial number of constraints}
\label{sec:compactilp}
In the worst cases, the number of iterations to solve~\eqref{eqn:newgtedlp_obj} via constraint generation is exponential. As an alternative, we introduce a compact ILP with only a polynomial number of constraints. The intuition behind this ILP is that we can impose a partially increasing ordering on all the edges so that the selected edges forms a $s$-$t$ trail in the alignment graph. This idea is similar to the Miller-Tucker-Zemlin ILP formulation of the \textsc{Travelling Salesman} problem (TSP)~\citep{miller1960integer}.


We add variables $d_{uv}$ that are constrained to provide a partial ordering of the edges in the $s$-$t$ trail and set the variables $d_{uv}$ to zero for edges that are not selected in the $s$-$t$ trail. Intuitively, there must exist an ordering of edges in a $s$-$t$ trail such that for each pair of consecutive edges $(u,v)$ and $(v,w)$, the difference in their order variable $d_{uv}$ and $d_{vw}$ is 1. Therefore, for each node $v$ that is not the source or the sink, if we sum up the order variables for the incoming edges and outgoing edges respectively, the difference between the two sums is equal to the number of selected incoming/outgoing edges. Lastly, the order variable for the edge starting at source is 1, and the order variable for the edge ending at sink is the number of selected edges. This gives the ordering constraints as follows:
\begin{align}
    \text{If } x_{uv} = 0,\text{ then }d_{uv} &= 0 \quad\text{for all }(u,v)\in E \label{cons:zero_order}\\
    \sum_{(v,w)\in E}d_{vw} - \sum_{(u,v)\in E} d_{uv} &= \sum_{(v,w)\in E} x_{vw}\quad\text{for all }v\in V\setminus\{s, t\} \label{cons:diff_order}\\
    \sum_{(s,u)\in E}d_{su} &= 1\label{cons:source_order}\\
    \sum_{(v,t)\in E}d_{vt} &= \sum_{(u,v)\in E} x_{uv}\label{cons:sink_order}
\end{align}
We enforce that all variables $x_e\in\{0,1\}$ and $d_e\in \mathbb{N}$ for all $e\in E$.

The ``if-then'' statement in Equation~\eqref{cons:zero_order} can be linearized by introducing an additional binary variable $y_{uv}$ for each edge~\citep{bradley1977applied,dias2022minimum}:
\begin{align}
    -x_{uv} - |E|y_{uv} &\leq -1\label{cons:orderlinear1}\\
    d_{uv} - |E|(1-y_{uv}) &\leq 0 \label{cons:orderlinearmid}\\
    y_{uv} &\in \{0,1\}\label{cons:orderlinear2}.
\end{align}
Here, $y_{uv}$ is an indicator of whether $x_{uv} \geq 0$. The coefficient $|E|$ is the number of edges in the alignment graph and also an upper bound on the ordering variables. When $y_{uv} = 1$, $d_{uv}\leq 0$, and $y_{uv}$ does not impose constraints on $x_{uv}$. When $y_{uv} = 0$, $x_{uv}\geq 1$, and $y_{uv}$ does not impose constraints on $d_{uv}$.

\subsection{Correctness of~\eqref{eqn:compactgtedilp_obj} for GTED}
To show that the optimal objective value of~\eqref{eqn:compactgtedilp_obj} is equal to GTED, we show that the optimal solutions to~\eqref{eqn:compactgtedilp_obj} always form one connected component.
\label{sec:compactilpproof}
\begin{lemma} 
    Let $x_e$ and $d_e$ be ILP variables. Let $G'$ be a subgraph of $\alngraph(G_1, G_2)$ that is induced by edges with $x_e = 1$.  If $x_e$ and $d_e$ satisfy constraints~\eqref{cons:projection}-\eqref{cons:sink_order} for all $e\in E$, $G'$ is connected with one trail from $s$ to $t$ that traverses each edge in $G'$ exactly once.    \label{lem:compactilp}
\end{lemma}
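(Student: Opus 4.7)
The proof will decouple the constraints. Constraints~\eqref{cons:projection}--\eqref{cons:balance} on $x$ alone make $G'$ into a flow network that carries one unit of flow from $s$ to $t$ and is edge-balanced at every other vertex. By the directed version of Euler's theorem, such a $G'$ admits a single Eulerian $s$-$t$ trail exactly when its edges form one weakly connected subgraph. So the real content of the lemma is that the ordering constraints~\eqref{cons:zero_order}--\eqref{cons:sink_order} exclude any additional weakly connected component; once that is done, the lemma follows directly from Euler's theorem.

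\textbf{Excluding spurious components.} Let $C$ be any weakly connected component of $G'$ that contains at least one edge. A pure flow argument rules out the case when $C$ contains exactly one of $\{s,t\}$: summing (out-degree minus in-degree) in $G'$ over $v\in V_C$ must vanish, since by definition of a weakly connected component no selected edge crosses $\partial V_C$, whereas the source, sink, and balance constraints on $x$ force the sum to be $+1$ or $-1$, a contradiction. It remains to forbid a component $C$ with $s,t\notin V_C$, and this is where the ordering variables enter. Summing~\eqref{cons:diff_order} over $v\in V_C$, edges with both endpoints in $V_C$ contribute $d_{uv}-d_{uv}=0$ by telescoping, while edges crossing $\partial V_C$ have $x_{uv}=0$ (otherwise they would lie inside $C$), so by~\eqref{cons:zero_order} their $d_{uv}=0$ as well; hence the left-hand side collapses to $0$. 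The right-hand side, however, equals the number of selected edges with tail in $V_C$, which is exactly $|E(C)|>0$, a contradiction.

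\textbf{Finish and main obstacle.} The two exclusions above leave only one non-trivial weakly connected component of $G'$, necessarily containing both $s$ and $t$; Euler's theorem then produces a single $s$-$t$ trail traversing every edge of $G'$ exactly once, which is the claim. The delicate step I expect to have to pin down carefully is that edges crossing the boundary of a weakly connected component of $G'$ must have $x=0$, and hence $d=0$ via~\eqref{cons:zero_order}; this is the mechanism by which the ordering constraints ``communicate'' that the selected subgraph is one piece, and it is what makes the telescoping on $V_C$ collapse to $0$. The rest is routine summation and a direct appeal to Euler's theorem, noting that~\eqref{cons:source_edges}--\eqref{cons:sink_edges} already pin the boundary behavior at $s$ and $t$.
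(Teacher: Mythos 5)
Your proposal is correct and follows essentially the same route as the paper's own proof: establish balance at every non-terminal vertex from the flow constraints, then rule out an isolated component $C$ disjoint from $\{s,t\}$ by summing constraint~\eqref{cons:diff_order} over $V_C$, observing that the left-hand side telescopes to $0$ (using constraint~\eqref{cons:zero_order} to kill the $d$-variables on unselected boundary edges) while the right-hand side equals $|E(C)|>0$. The only differences are presentational — you make explicit the case of a component containing exactly one of $s,t$ and the boundary-edge mechanism, both of which the paper leaves implicit — so no further changes are needed.
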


\begin{proof}
We prove the lemma in 2 parts: (1) all nodes except $s$ and $t$ in $G'$ have an equal number of in- and out-edges, (2) $G'$ contains only one connected component.

The first statement holds because the edges of $G'$ form a flow from $s$ to $t$, and is enforced by constraints~\eqref{cons:balance}.


We then show that $G'$ does not contain isolated subgraphs that are not reachable from $s$ or $t$. Due to constraint~\eqref{cons:balance}, the only possible scenario is that the isolated subgraph is strongly connected. Suppose for contradiction that there is a strongly connected component, $C$, in $G'$ that is not reachable from $s$ or $t$. 
    
The sum of the left hand side of constraint~\eqref{cons:diff_order} over all vertices in $C$ is
    \begin{align}
        \sum_{v\in C} \Big(\sum_{(u,v)\in C} d_{uv} - \sum_{(v,w)\in C} d_{vw}\Big)
        & = \sum_{v\in C} \sum_{(u,v)\in C} d_{uv} - \sum_{v\in C} \sum_{(v,w)\in C} d_{vw} \\
        &= \sum_{(u,v)\in E(C)} d_{uv} - \sum_{(v,w)\in E(C)} d_{vw} = 0.
    \end{align}
However, the right-hand side of the same constraints is always positive. Hence we have a contradiction. Therefore, $G'$ has only one connected component.
\end{proof}

Due to Lemma~\ref{lem:gted} and Lemma~\ref{lem:compactilp}, given input graphs $G_1$ and $G_2$ and the alignment graph $\alngraph(G_1, G_2)$, $\GTED(G_1, G_2)$ is equal to the optimal objective of
\begin{equation}
\tag{compact ILP}
    \begin{aligned}
        \label{eqn:compactgtedilp_obj}
     \underset{x\in \{0,1\}^{|E|}}{\textrm{minimize}}\quad  &\sum_{e \in E} x_e\delta(e)\\
     \textrm{subject to}\quad &\text{constraints~\eqref{cons:projection}--\eqref{cons:balance},} \\ 
     & \text{constraints~\eqref{cons:diff_order}--\eqref{cons:sink_order}}\\  
     & \text{and constraints~\eqref{cons:orderlinear1}--\eqref{cons:orderlinear2}.} 
    \end{aligned}
\end{equation}

\section{Closed-trail Cover Traversal Edit Distance}
\label{sec:ccted}
While the~\eqref{eqn:lowboundilp} and \boundaryilp do not solve GTED, the optimal solution to these ILPs is a lower bound of GTED. These ILP formulations also solve an interesting variant of GTED, which is a local similarity measure between two genome graphs. We call this variant Closed-trail Cover Traversal Edit Distance (CCTED). In the following, we provide the formal definition of the CCTED problem and then show that the~\eqref{eqn:lowboundilp} is the correct ILP formulation for solving CCTED. 

We first introduce the min-cost item matching problem between two multi-sets. Let two multi-sets of items be $S_1$ and $S_2$, and, wlog, let $|S_1| \leq |S_2|$. Let $c: (S_1 \cup \{\epsilon\}) \times S_2 \to \mathbb{N}$ be the cost of matching either an empty item $\epsilon$ or an item in $S_1$ with an item in $S_2$. Given $S_1$, $S_2$ and the cost function $c$, min-cost matching problem finds a matching, $\mathcal{M}_c(S_1, S_2)$, such that each item in $S_1\cup\{\epsilon\}^{|S_2|-|S_1|}$ is matched with exactly one distinct item in $S_2$ and the total cost of the matching, $\sum_{(s_1, s_2)\in \mathcal{M}_c(S_1, S_2)} c(s_1, s_2)$, is minimized.

The min-cost item matching problem is similar to the Earth Mover's Distance defined in~\cite{pele2008linear}, except that only integral units of items can be matched and the cost of matching an empty item with another item is not constant. Similar to the Earth Mover's Distance, the min-cost item matching problem can be computed using the ILP formulation of the min-cost max-flow problem~\cite{emd,fgted}. When the cost is the edit distance, the cost to match $\epsilon$ with a string is equal to the length of the string.

Define traversal edit distance, $edit_t(t_1, t_2)$ as the edit distance between the strings constructed from a pair of trails $t_1$ and $t_2$. In other words, $edit_t(t_1, t_2) = edit(\str(t_1), \str(t_2))$. \CCTED is defined as:

\begin{problem} [Closed-Trail Cover Traversal Edit Distance (\CCTED)]
Given two unidirectional, edge-labeled Eulerian graphs $G_1$ and $G_2$ with closed Eulerian trails, compute

\begin{equation}
\CCTED(G_1, G_2) \triangleq \min_{\substack{C_1\in \CC(G_1),\\C_2\in \CC(G_2)}} \sum_{\substack{(t_1, t_2)\in\mathcal{M}_{\edit_t}(C_1, C_2)}}\edit(\str(t_1), \str(t_2)),
\end{equation}
Here, $\CC(G)$ denotes the collection of all possible sets of edge-disjoint, closed trails in $G$, such that every edge in $G$ belongs to exactly one of these trails. Each element of $\CC(G)$ can be interpreted as a cover of $G$ using such trails. $\mathcal{M}_{\edit_t}(C_1, C_2)$ is a min-cost matching between two covers using the traversal edit distance as the cost.
\label{prob:ccted}
\end{problem}

\CCTED is likely a more suitable metric comparison between genomes that undergo large-scale rearrangements. This analogy is to the relationship between the synteny block comparison~\cite{polevikov2019synteny} and the string edit distance computation, where the former is more often used in interspecies comparisons and in detecting segmental duplications~\cite{bourque2004reconstructing,vollger2022segmental} and the latter is more often seen in intraspecies comparisons. 

Following similar ideas as Lemma~\ref{lem:gted}, we can compute CCTED by finding a set of closed trails in the alignment graph such that the total cost of alignment edges is minimized, and the projection of all edges in the collection of selected trails is equal to the multi-set of input graph edges. 

\label{sec:gtedcctedproofs}
\begin{lemma}
    \label{lem:ccted}
    For any two edge-labeled Eulerian graphs $G_1$ and $G_2$, 
    \begin{align}
        \CCTED(G_1, G_2) = \underset{C}{\textrm{minimize}}\quad &\sum_{c\in C} \delta(c) \label{eqn:cctedlemobj}\\
        \text{subject to}\quad &C \text{ is a set of closed trails in }\alngraph(G_1,G_2),\nonumber\\
        &\bigcup_{e\in C}\Pi_i(e) = E_i\quad \text{for }i=1,2,\label{eqn:cctedlemproj}
    \end{align}
where $C$ is a collection of trails and $\delta(c)$ is the total cost of edges in trail $c$.

\end{lemma}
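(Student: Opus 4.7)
I will prove the two inequalities separately, showing that an optimal solution on one side can be converted into a feasible solution on the other with cost at most the original.

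\emph{LHS $\ge$ RHS.} Take an optimal pair of closed-trail covers $C_1\in\CC(G_1)$ and $C_2\in\CC(G_2)$ together with an optimal min-cost matching $\mathcal{M}_{\edit_t}(C_1,C_2)$ realizing $\CCTED(G_1,G_2)$. For every matched pair $(t_1,t_2)$ in the matching I will build a closed trail $c$ in $\alngraph(G_1,G_2)$ whose projections are exactly $t_1$ and $t_2$ and whose cost is $\edit(\str(t_1),\str(t_2))$: when both are nonempty, I use the standard correspondence between an optimal edit-distance alignment of $\str(t_1)$ and $\str(t_2)$ and a minimum-cost path in the Needleman--Wunsch DP, lifted to the alignment graph; since $t_1$ and $t_2$ are closed trails, the start and end vertices in $\alngraph$ coincide, so the lifted path is a closed trail. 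When one side is the empty item $\epsilon$, say $t_1=\epsilon$, I lift $t_2$ using only horizontal edges through a fixed row $\{u_1\}\times V_2$ (which exist by Definition~\ref{def:align}); the resulting closed trail uses $|\str(t_2)|$ horizontal edges each of cost $1$, matching the cost of pairing $\epsilon$ with $t_2$. Collecting these $c$'s into a set $C$ gives a feasible solution to the minimization on the right-hand side: the projections $\bigcup_{c\in C}\Pi_i(c)$ equal $E_i$ because $C_i$ is a cover, and $\sum_{c\in C}\delta(c)$ equals the total matching cost.

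\emph{LHS $\le$ RHS.} Conversely, let $C$ be an optimal set of closed trails in $\alngraph(G_1,G_2)$ satisfying~\eqref{eqn:cctedlemproj}. For each $c\in C$, the projections $\Pi_1(c)$ and $\Pi_2(c)$ are closed trails in $G_1$ and $G_2$, respectively (possibly empty if $c$ uses only horizontal or only vertical edges). The coverage constraint~\eqref{eqn:cctedlemproj} guarantees that $C_1\triangleq\{\Pi_1(c):c\in C,\,\Pi_1(c)\ne\emptyset\}$ is an element of $\CC(G_1)$ and analogously for $C_2$. I then define a matching that pairs $\Pi_1(c)$ with $\Pi_2(c)$ whenever both are nonempty, and pairs the nonempty projection with $\epsilon$ otherwise. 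For each matched pair $(t_1,t_2)=(\Pi_1(c),\Pi_2(c))$ with both sides nonempty, the trail $c$ itself is a valid alignment between $\str(t_1)$ and $\str(t_2)$, so $\edit(\str(t_1),\str(t_2))\le \delta(c)$; for a pair of the form $(\epsilon,t_2)$ (resp.\ $(t_1,\epsilon)$), the matching cost $|\str(t_2)|$ (resp.\ $|\str(t_1)|$) equals the number of horizontal (resp.\ vertical) edges in $c$, which is again $\delta(c)$. Summing, the total cost of this matching is at most $\sum_{c\in C}\delta(c)$, and since $\mathcal{M}_{\edit_t}$ is the \emph{min}-cost matching, $\CCTED(G_1,G_2)$ is bounded by the same quantity.

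\emph{Main obstacle.} The routine direction is the second: the projection map hands us covers and a feasible matching almost for free. The delicate direction is the first. The technical work lies in showing that when $t_1$ and $t_2$ are closed trails, an optimal linear edit-distance alignment can always be lifted to a \emph{closed} trail in $\alngraph$ and that the simultaneous lifts over all matched pairs yield trails whose combined edge projections partition $E_1$ and $E_2$ rather than merely cover them (so that the result is a genuine feasible $C$ and not just a multiset whose projections equal $E_i$ as multisets but violate trail-level edge-disjointness within a single $c$). Both concerns resolve because each $t_i$ is already a trail in $G_i$ and the lift reuses each alignment-graph edge at most once: diagonal/vertical/horizontal edges of the lift inherit distinctness from the distinctness of edges traversed by $t_1$ and $t_2$, and across different $c$'s distinctness follows from $C_1$ and $C_2$ being edge-disjoint covers.
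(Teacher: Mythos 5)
Your proof is correct and follows essentially the same route as the paper's: lift each matched pair of closed trails (including $\epsilon$-matches, via all-horizontal or all-vertical closed trails) to a closed trail in $\alngraph(G_1,G_2)$ for one inequality, and project a feasible $C$ back to edge-disjoint covers with the induced matching for the other. The paper's own proof is considerably terser; your added care about the lift being a genuine closed trail and the projections being edge-disjoint fills in details the paper leaves implicit, but introduces no new idea.
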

\begin{proof}
Given any pair of covers $C_1\in\CC(G_1)$ and $C_2\in\CC(G_2)$ and their min-cost matching based on the edit distance $\mathcal{M}_{\edit_t}(C_1, C_2)$, we can project each pair of matched closed trailed to a closed trail in the alignment graph. For a matching between a trail and the empty item $\epsilon$, we can project it to a closed trail in the alignment graph with all vertical edges if the trail is from $G_1$ or horizontal edges if the trail is from $G_2$. The total cost of the projected edges must be greater than or equal to the objective~\eqref{eqn:cctedlemobj}. On the other hand, every collection of trails $C$ that satisfy constraint~\eqref{eqn:cctedlemproj} can be projected to a cover in each of the input graphs, and $\sum_{c\in C} \delta(c) \geq CCTED(G_1, G_2)$. Hence equality holds.\qedhere
\end{proof}

\subsection{The ILP formulation for CCTED}

We show that \gtedilp proposed by~\citet{gted} solves CCTED.
\begin{theorem}
\label{thm:cctedilp}
    Given two input graphs $G_1$ and $G_2$, the optimal objective value of \gtedilp based on $\alngraph(G_1, G_2)$ is equal to $\CCTED(G_1, G_2)$.
\end{theorem}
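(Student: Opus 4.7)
The plan is to prove Theorem~\ref{thm:cctedilp} by establishing a value-preserving bijection between feasible solutions of \gtedilp and collections $C$ of closed trails in $\alngraph(G_1,G_2)$ that satisfy the projection condition~\eqref{eqn:cctedlemproj}, and then invoke Lemma~\ref{lem:ccted}. Because constraints~\eqref{eqn:proj_cst} force each $x_e \in \{0,1\}$, every feasible $x$ is the indicator vector of some edge subset $H \subseteq E$ of the alignment graph, with objective value $\sum_{e\in H}\delta(e)$.

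First I would show the forward direction: from a feasible $x$ to a collection of closed trails. Constraint~\eqref{eqn:cycle_cst} $Ax = 0$ with $A$ the negative incidence matrix encodes, at every vertex of the alignment graph, the equality of selected in-degree and out-degree. Hence the subgraph $H$ induced by $\{e : x_e = 1\}$ is balanced, and by the classical Eulerian circuit decomposition theorem it can be partitioned into a family $C$ of edge-disjoint closed trails. The projection constraint~\eqref{eqn:proj_cst} states that for every $f \in E_i$ the total count of edges in $H$ projecting to $f$ is exactly one; equivalently, $\bigcup_{e \in H}\Pi_i(e) = E_i$ as multisets, which is precisely~\eqref{eqn:cctedlemproj}. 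The objective equals $\sum_{c\in C}\delta(c)$ because the trails partition the edges of $H$.

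For the reverse direction I would take any $C$ feasible for the program in Lemma~\ref{lem:ccted} and set $x_e = 1$ iff $e$ appears in some trail of $C$ (well-defined since the trails are edge-disjoint). Each closed trail individually satisfies flow conservation at every vertex, so summing gives $Ax = 0$; and the projection hypothesis~\eqref{eqn:cctedlemproj} gives~\eqref{eqn:proj_cst}. The objective matches by the same partition argument. Combining both directions yields that the optimal value of \gtedilp equals the optimum in Lemma~\ref{lem:ccted}, which equals $\CCTED(G_1,G_2)$.

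The main obstacle is not technical but conceptual: one has to recognize that an all-zero-divergence $\{0,1\}$-flow in a directed graph is exactly a disjoint union of closed trails, and that this decomposition is the natural bridge to the combinatorial definition of \CCTED. A minor bookkeeping point to handle carefully is the ``empty-item'' matchings allowed by $\mathcal{M}_{\edit_t}$: these correspond to closed trails in the alignment graph composed entirely of vertical (resp.\ horizontal) edges, whose projection to $G_2$ (resp.\ $G_1$) is empty, and which are already accounted for in Lemma~\ref{lem:ccted}. No changes to the ILP are needed; the bijection above already produces and consumes such trails transparently.
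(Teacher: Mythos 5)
Your proposal is correct and follows essentially the same route as the paper: both directions establish a value-preserving correspondence between feasible ILP solutions (balanced $\{0,1\}$-subgraphs of the alignment graph, decomposed into edge-disjoint closed trails) and the closed-trail collections of Lemma~\ref{lem:ccted}, whose optimum is $\CCTED$. Your version is slightly cleaner in that it factors entirely through Lemma~\ref{lem:ccted} and names the Eulerian decomposition of a balanced digraph explicitly, whereas the paper re-derives the trail-to-alignment construction inline, but the substance is identical.
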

\begin{proof}
As shown in the proof of Lemma~\ref{lem:ccted}, any pair of edge-disjoint, closed-trail covers in the input graph can be projected to a set of closed trails in $\alngraph(G_1, G_2)$, which satisfied constraints~\eqref{eqn:cycle_cst}-\eqref{eqn:gtedlp_last}. The objective of this feasible solution, which is the total cost of the projected closed trails, equals \CCTED. Therefore, $\CCTED(G_1, G_2)$ is greater than or equal to the objective of \gtedilp.

Conversely, we can transform any feasible solutions of \gtedilp to a pair of covers of $G_1$ and $G_2$. We can do this by transforming one closed trail at a time from the subgraph of the alignment graph, $\alngraph'$ induced by edges with ILP variable $x_{uv} = 1$. Let $c$ be a closed trail in $\alngraph'$. Let $c_1 = \Pi_1(c)$ and $c_2 = \Pi_2(c)$ be two closed trails in $G_1$ and $G_2$ that are projected from $c$. We can construct an alignment between $\str(c_1)$ and $\str(c_2)$ from $c$ by adding match or insertion/deletion columns for each match or insertion/deletion edges in $c$ accordingly. The cost of the alignment is equal to the total cost of edges in $c$ by the construction of the alignment graph. We can then remove edges in $c$ from the alignment graph and edges in $c_1$ and $c_2$ from the input graphs, respectively. The remaining edges in $\alngraph'$ and $G_1$ and $G_2$ still satisfy the constraints~\eqref{eqn:cycle_cst}-\eqref{eqn:gtedlp_last}. Repeat this process and we get a total cost of $\sum_{e\in E} x_{e}\delta(e)$ that aligns pairs of closed trails that form covers of $G_1$ and $G_2$. This total cost is greater than or equal to $\CCTED(G_1, G_2)$. 
\end{proof}

\subsection{CCTED is a lower bound of GTED}

Since the constraints for~\eqref{eqn:lowboundilp} are a subset of~\eqref{eqn:newgtedlp_obj}, a feasible solution to~\eqref{eqn:newgtedlp_obj} is always a feasible solution to~\eqref{eqn:lowboundilp}. Since two ILPs have the same objective function, $\CCTED(G_1, G_2) \leq \GTED(G_1, G_2)$ for any pair of graphs. Moreover, when the solution to~\eqref{eqn:lowboundilp} forms only one connected component, the optimal value of~\eqref{eqn:lowboundilp} is equal to GTED.

\begin{theorem}
\label{thm:cctedgtedequal}
     Let $\alngraph'(G_1, G_2)$ be the subgraph of $\alngraph(G_1, G_2)$ induced by edges $(u,v)\in E$ with $x^{opt}_{uv} = 1$ in the optimal solution to~\eqref{eqn:lowboundilp}. There exists $\alngraph'(G_1, G_2)$ that has exactly one connected component if and only if $c^{opt} = \GTED(G_1, G_2)$. 
\end{theorem}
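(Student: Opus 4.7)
The plan is to prove the two directions separately, using Lemma~\ref{lem:gted} together with the already-established bound $\CCTED(G_1,G_2) \le \GTED(G_1,G_2)$ (equivalently, $c^{opt} \le \GTED(G_1,G_2)$) stated in the paragraph immediately preceding the theorem.

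For the easier direction ($\Leftarrow$), I would start from an optimal GTED trail $c^\star$ in the modified alignment graph $\alngraph(G_1,G_2)$ guaranteed by Lemma~\ref{lem:gted}. Setting $x_e = 1$ if $e \in c^\star$ and $x_e = 0$ otherwise gives a 0/1 vector that satisfies the projection constraints~\eqref{cons:projection} (because $\Pi_i(c^\star)$ is Eulerian in $G_i$), the source/sink constraints~\eqref{cons:source_edges}--\eqref{cons:sink_edges}, and the balance constraints~\eqref{cons:balance}. Hence $x$ is feasible for~\eqref{eqn:lowboundilp} with objective value equal to $\GTED(G_1,G_2)$. Since $c^{opt} \le \GTED(G_1,G_2)$ always, this feasible point is in fact optimal, and its induced subgraph $\alngraph'(G_1,G_2)$ is exactly the trail $c^\star$, which has one connected component.

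For the harder direction ($\Rightarrow$), suppose some optimal solution $x^{opt}$ to~\eqref{eqn:lowboundilp} induces a (weakly) connected subgraph $\alngraph'(G_1,G_2)$. The key step is to extract a single $s$--$t$ trail from $\alngraph'$. The flow constraints~\eqref{cons:source_edges}--\eqref{cons:balance} guarantee that $s$ has net out-degree $1$, $t$ has net in-degree $1$, and every other vertex is balanced; combined with weak connectivity, this is precisely the standard necessary-and-sufficient condition for the existence of an Eulerian $s$--$t$ trail traversing every edge of $\alngraph'$ exactly once. I would cite this classical Eulerian trail criterion rather than reprove it. Projecting the resulting trail via $\Pi_i$ yields a trail in $G_i$ that uses each edge of $G_i$ exactly once, by constraint~\eqref{cons:projection}; hence $\Pi_i$ is Eulerian in $G_i$ for $i=1,2$. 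By Lemma~\ref{lem:gted}, this makes the trail a feasible point for the GTED minimization, so $\GTED(G_1,G_2) \le c^{opt}$. Combined with $c^{opt} \le \GTED(G_1,G_2)$, equality follows.

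The main obstacle I expect is making the Eulerian trail extraction argument fully rigorous in the setting of the augmented alignment graph, particularly handling the bookkeeping when the original inputs have closed Eulerian trails and $s,t$ are attached by multiple candidate edges (as in Figure~\ref{fig:input_exp_eulerian}(b)). The constraints~\eqref{cons:source_edges} and~\eqref{cons:sink_edges} still force exactly one unit of flow in and out, so the single-trail conclusion goes through, but I would want to state explicitly that the isolation of $s$ and $t$ to a unique incident edge each in $\alngraph'$ is a consequence of these constraints together with the projection constraints. Everything else is a short deduction from the two directions above.
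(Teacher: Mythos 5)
Your proposal is correct, and its backward direction (assume $c^{opt}=\GTED(G_1,G_2)$, exhibit a one-component optimal solution) is essentially the paper's argument, just phrased directly via Lemma~\ref{lem:gted} rather than by routing through an optimal solution of~\eqref{eqn:newgtedlp_obj}. The forward direction, however, takes a genuinely different route. The paper argues that a one-component optimal solution of~\eqref{eqn:lowboundilp} automatically satisfies the subtour-elimination constraints~\eqref{cons:scclinear1}--\eqref{cons:scclinear4}, is therefore feasible for~\eqref{eqn:newgtedlp_obj}, and hence certifies $\GTED(G_1,G_2)\le c^{opt}$; this leans on the correctness of~\eqref{eqn:newgtedlp_obj}, which in turn rests on the $s$-$t$ trail characterization of Dias et al. You instead bypass~\eqref{eqn:newgtedlp_obj} entirely: the balance and unit source/sink constraints make $\alngraph'(G_1,G_2)$ a degree-balanced connected digraph with excess out-degree $1$ at $s$ and excess in-degree $1$ at $t$, so the classical Eulerian trail criterion yields a single $s$-$t$ trail of cost $c^{opt}$ whose projections are Eulerian in each $G_i$, and Lemma~\ref{lem:gted} finishes. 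Your version is more self-contained and makes explicit exactly which structural facts are used, at the price of verifying the degree conditions and the projection bookkeeping by hand (which you flag and handle). One wording slip to repair in the backward direction: ``since $c^{opt}\le\GTED(G_1,G_2)$ always, this feasible point is in fact optimal'' is not by itself a valid inference --- a feasible point with objective $\GTED(G_1,G_2)\ge c^{opt}$ need not be optimal; you must invoke the standing hypothesis $c^{opt}=\GTED(G_1,G_2)$ of that direction, after which the constructed point attains the optimal value $c^{opt}$ and the conclusion follows.
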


\begin{proof}
    We first show that if $c^{opt}= \GTED(G_1, G_2)$, then there exists $\alngraph'(G_1, G_2)$ that has one connected component. A feasible solution to~\eqref{eqn:newgtedlp_obj} is always a feasible solution to~\eqref{eqn:lowboundilp}, and since $c^{opt}= \GTED(G_1, G_2)$, an optimal solution to~\eqref{eqn:newgtedlp_obj} is also an optimal solution to~\eqref{eqn:lowboundilp}, which can induce a subgraph in the alignment graph that only contains one connected component.
    
    Conversely, if $x^{opt}$ induces a subgraph in the alignment graph with only one connected component, it satisfies constraints~\eqref{cons:scclinear1}-\eqref{cons:scclinear4} and therefore is feasible to the ILP for GTED~\eqref{eqn:newgtedlp_obj}. Since $c^{opt} \leq \GTED(G_1, G_2)$, this solution must also be optimal for $\GTED(G_1, G_2)$.
\end{proof}
In practice, we may estimate GTED approximately by the solution to~\eqref{eqn:lowboundilp}. As we show in Section~\ref{experiments}, the time needed to solve~\eqref{eqn:lowboundilp} is much less than the time needed to solve GTED. However, in adversarial cases, $c^{opt}$ could be zero but GTED could be arbitrarily large. We can determine if the $c^{opt}$ is a lower bound on GTED or exactly equal to GTED by checking if the subgraph induced by the solution to~\eqref{eqn:lowboundilp} has multiple connected components. 

\subsection{NP-completeness of CCTED}
We prove that the CCTED problem (Problem~\ref{prob:ccted}) is NP-complete by reducing from the Eulerian Trail Equaling Word problem~\citep{eulerw}.
\begin{theorem}
    Computing CCTED is NP-complete.
\end{theorem}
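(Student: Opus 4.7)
The plan is to reduce from \ETEW, which is NP-complete by Theorem~6 of~\citet{eulerw}. Membership of \CCTED in \NP is clear: a certificate consists of two covers $C_1 \in \CC(G_1)$, $C_2 \in \CC(G_2)$ and a matching $\mathcal{M}$; each cover contains at most $\min(|E_1|,|E_2|)$ trails, and verifying edge-disjointness, closedness of the trails, and the total cost of the matching can all be done in polynomial time.

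For the hardness reduction, given an \ETEW instance $\langle s, G\rangle$, I introduce a fresh symbol $\$$ not in the label alphabet. I build $G_1$ as a simple directed cycle of $|s|+1$ edges whose consecutive labels spell $s\cdot\$$; since $G_1$ is a simple directed cycle, its only closed-trail cover is the singleton $\{t_1\}$ with $\str(t_1) = s\$$. I build $G_2$ from $G$ by (a) adding a $\$$-labeled edge connecting the end of an Eulerian trail of $G$ back to its start (identified in polynomial time from the two odd-imbalance vertices, or any single vertex if $G$ already has a closed Eulerian circuit), so that $G_2$ admits a closed Eulerian trail spelling $s\$$ if and only if $G$ admits an Eulerian trail spelling $s$, and (b) splitting anti-parallel edges via dummy nodes padded by \# exactly as in the proof sketch of Theorem~\ref{thm:gtednp}, with matching padding applied to the labels of $G_1$, so that both graphs are unidirectional over the same alphabet. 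The construction is clearly polynomial-time.

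For correctness, I will show $\CCTED(G_1, G_2) = 0$ iff $G$ has an Eulerian trail spelling $s$. The forward direction is immediate: the covers $C_1=\{t_1\}, C_2=\{t_2\}$ give matching cost $0$. For the backward direction, assume $\CCTED(G_1, G_2) = 0$. Since $G_1$'s only closed-trail cover is $\{t_1\}$, the optimal matching must pair $t_1$ with some $t_2 \in C_2$ at cost $0$; any additional trail in $C_2$ would be matched to $\epsilon$ at a cost equal to its (strictly positive) label length, forcing $|C_2| = 1$ and $\str(t_2) = \str(t_1)$. Projecting $t_2$ through the inverses of the dummy-node and $\$$-edge transformations recovers an Eulerian trail of $G$ spelling $s$.

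The main obstacle I anticipate is a subtle rotation issue: the string of a closed trail depends on the chosen starting edge, so $\edit(\str(t_1),\str(t_2)) = 0$ must be verified under compatible rotations of the two closed trails. The single occurrence of the terminator $\$$ in each string resolves this automatically, since any cyclic rotation that spells $s\$$ must start immediately after the $\$$-edge, pinning the alignment uniquely. A secondary check is that the dummy-node padding does not create alternative closed-trail covers of $G_1$; this holds because $G_1$ remains a simple directed cycle after padding, so its cover is still unique.
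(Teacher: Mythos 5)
Your reduction is essentially the paper's: both reduce from \ETEW, turn $G$ into a unidirectional Eulerian graph with a closed Eulerian trail by appending a fresh terminator edge, pair it against a simple directed cycle spelling the terminated word, and argue that \CCTED is zero iff some closed Eulerian trail spells a rotation of that word, with the single terminator occurrence pinning the rotation. Your explicit argument for why zero cost forces the cover of the $G$-side graph to consist of a single trail --- any extra trail would be matched to $\epsilon$ at a cost equal to its positive length --- is a point the paper leaves implicit, and it is correct; so is your NP-membership check. One detail to repair: padding only the anti-parallel edges of $G$ with \texttt{\#} does not let you write down a fixed padded target string for the cycle graph, because the positions of the \texttt{\#} symbols in $\str(t_2)$ would then depend on which Eulerian trail is taken. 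Instead, split \emph{every} edge into a length-2 path labeled \texttt{x\#} and let the cycle graph spell the uniformly \texttt{\#}-interleaved version of $s\$$. (The paper's own write-up of this theorem glosses over unidirectionality entirely, so with this fix you are if anything more careful than the original.)
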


\begin{proof}
   Let Eulerian graph $G=(V, E, \ell, \Sigma)$ and $s$ be an instance of the {\sc Eulerian Tour Equaling Word} problem. Construct two graphs, $G_1$ and $G_2$. If $G$ contains open Eulerian trails, add an edge directing from the sink of the graph to the source of the graph. Let the label of the added edge be $\#$ that does not appear in $\Sigma$. Let the modified graph be $G_1$. If $G$ contains closed Eulerian trails, let $G_1$ be the same as $G$. Let $G_2$ be a graph that contains one cycle with $|E_1|$ edges, where $E_1$ is the edge set of $G_1$. Assign labels to the edges in $G_2$ such that the cycle in $G_2$ spells $s$ if $G$ contains closed Eulerian trails, $s\#$ otherwise.

   If $\CCTED(G_1, G_2) = 0$, $G_2$ must contain at least one closed Eulerian trail that spells some circular permutation of $s\#$. If CCTED is not zero, it means that $s$ must not match Eulerian trails in $G$. 
\end{proof}

\section{Empirical evaluation of the ILP formulations for GTED and its lower bound}
\label{experiments}
\subsection{Implementation of the ILP formulations}
\label{sec:implementation}
We implement the algorithms and ILP formulations for ~\eqref{eqn:newgtedlp_obj}, ~\eqref{eqn:compactgtedilp_obj} and ~\eqref{eqn:lowboundilp}. In practice, the multi-set of edges of each input graph may contain many duplicates of edges that have the same start and end vertices due to repeats in the strings. We reduce the number of variables and constraints in the implemented ILPs by merging the edges that share the same start and end nodes and record the multiplicity of each edge. Each $x$ variable is no longer binary but a non-negative integer that satisfies the modified projection constraints~\eqref{cons:projection}:
\begin{align}
    \sum_{(u,v)\in E} x_{uv}I_{i}((u,v), f) = M_i(f)\quad\text{for all }(u,v)\in E,~f\in G_i, u\neq s, v\neq t,
\end{align}
where $M_i(f)$ is the multiplicity of edge $f$ in $G_i$. Let $C$ be the strongly connected component in the subgraph induced by positive $x_{uv}$, now $\sum_{(u,v)\in E(C)} x_{uv}$ is no longer upper bounded by $|E(C)|$. Therefore, constraints~\eqref{cons:scclinear2} is changed to   
\begin{align}
    &\sum_{(u,v)\in E(C)}x_{uv}-|E(C)|+1-W(C)\beta_{C}\leq 0\quad\text{for all }C\in \mathcal{C}, \\
    &W(C)=\sum_{(u,v)\in E(C)}\max\left(\sum_{f\in G_1}M_1(f)I_{1}((u,v), f),\sum_{f\in G_2}M_2(f)I_{2}((u,v), f)\right),\nonumber
\end{align}
where $W(C)$ is the maximum total multiplicities of edges in the strongly connected subgraph in each input graph that is projected from $C$.

Likewise, constraints~\eqref{cons:orderlinearmid} that set the upper bounds on the ordering variables also need to be modified as the upper bound of the ordering variable $d_{uv}$ for each edge no longer represents the order of one edge but the sum of orders of copies of $(u,v)$ that are selected, which is at most $|E|^2$. Therefore, constraint~\eqref{cons:orderlinearmid} is changed to
\begin{align}
    d_{uv} - |E|^2(1-y_{uv}) \leq 0.
\end{align}
The rest of the constraints remain unchanged.

 We ran all our experiments on a server with 48 cores (96 threads) of Intel(R) Xeon(R) CPU E5-2690 v3 @ 2.60GHz and 378 GB of memory. The system was running Ubuntu 18.04 with Linux kernel 4.15.0. We solve all the ILP formulations and their linear relaxations using the Gurobi solver~\citep{gurobi} using 32 threads.

\subsection{GTED on simulated TCR sequences}
\label{sec:tcr_results}
We construct 20 de Bruijn graphs with $k=4$ using 150-character sequences extracted from the V genes from the IMGT database~\citep{lefranc2011imgt}. We solve the linear relaxation of~\eqref{eqn:compactgtedilp_obj}, \eqref{eqn:newgtedlp_obj} and~\eqref{eqn:lowboundilp} and their linear relaxation on all 190 pairs of graphs. We do not show results for solving~\eqref{eqn:compactgtedilp_obj} for GTED on this set of graphs as the running time exceeds 30 minutes on most pairs of graphs.

To compare the time to solve the ILP formulations when GTED is equal to the optimal objective of~\eqref{eqn:lowboundilp}, we only include 168 out of 190 pairs where GTED is equal to the lower bound (GTED is slightly higher than the lower bound in the remaining 22 pairs). On average, it takes 26 seconds wall-clock time to solve~\eqref{eqn:lowboundilp}, and 71 seconds to solve~\eqref{eqn:newgtedlp_obj} using the iterative algorithm. On average, it takes 9 seconds to solve the LP relaxation of~\eqref{eqn:compactgtedilp_obj} and 1 second to solve the LP relaxation of~\eqref{eqn:lowboundilp}. The time to construct the alignment graph for all pairs is less than 0.2 seconds. The distribution of wall-clock running time is shown in Figure~\ref{fig:gtedilp_runtime}(a). The time to solve~\eqref{eqn:newgtedlp_obj} and~\eqref{eqn:lowboundilp} is generally positively correlated with the GTED values (Figure~\ref{fig:gtedilp_runtime}(b)). On average, it takes 7 iterations for the iterative algorithm to find the optimal solution that induces one strongly connected subgraph (Figure~\ref{fig:gtedilp_runtime}(c)).

In summary, it is fastest to compute the lower bound of GTED. Computing GTED exactly by solving the proposed ILPs on genome graphs of size 150 is already time consuming. When the sizes of the genome graphs are fixed, the time to solve for GTED and its lower bound increases as GTED between the two genome graphs increases. In the case where GTED is equal to its lower bound, the subgraph induced by some optimal solutions of~\eqref{eqn:lowboundilp} contains more than one strongly connected component. Therefore, in order to reconstruct the strings from each input graph that have the smallest edit distance, we generally need to obtain the optimal solution to the ILP for GTED. In all cases, the time to solve the~\eqref{eqn:newgtedlp_obj} is less than the time to solve the~\eqref{eqn:compactgtedilp_obj}.

\begin{figure}
    \centering
    \includegraphics[width=0.95\textwidth]{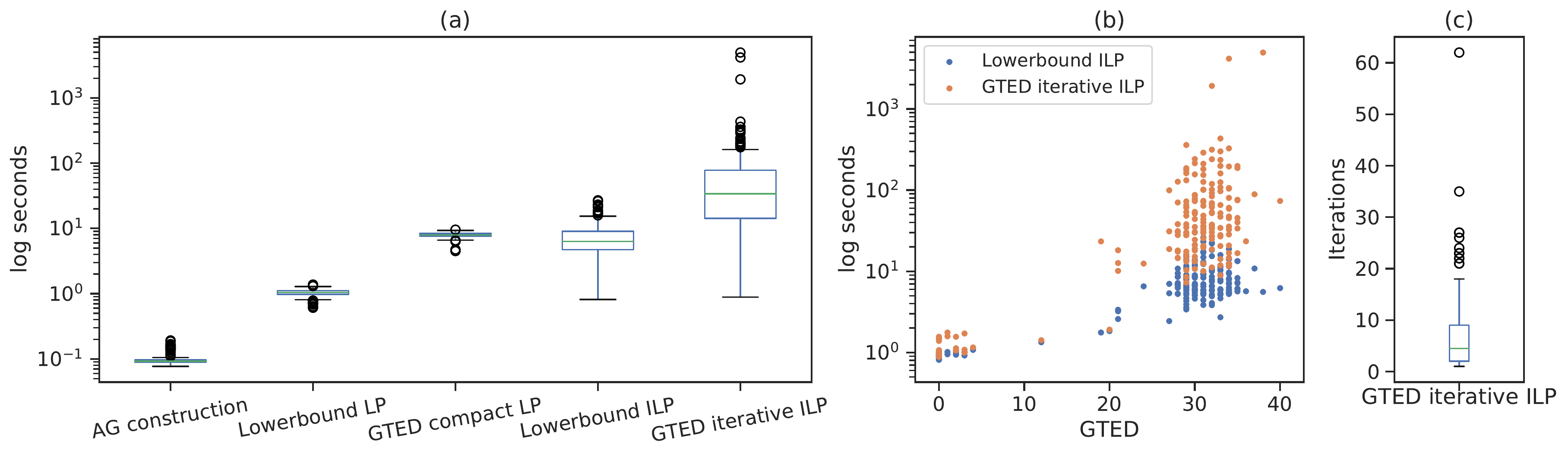}
    \caption{(a) The distribution of wall-clock running time for constructing alignment graphs, solving the ILP formulations for GTED and its lower bound, and their linear relaxations on the log scale. (b) The relationship between the time to solve~\eqref{eqn:lowboundilp},~\eqref{eqn:newgtedlp_obj} iteratively and GTED. (c) The distribution of the number of iterations to solve~\ref{eqn:newgtedlp_obj}. The box plots in each plot show the median (middle line), the first and third quantiles (upper and lower boundaries of the box), the range of data within 1.5 inter-quantile range between Q1 and Q3 (whiskers), and the outlier data points.}
    \label{fig:gtedilp_runtime}
\end{figure}

\subsection{GTED on difficult cases}

Repeats, such as segmental duplications and translocations~\citep{li2020patterns,darai2008segmental} in the genomes increase the complexity of genome comparisons. We simulate such structures with a class of graphs that contain $n$ simple cycles of which $n-1$ peripheral cycles are attached to the $n$-th central cycle at either a node or a set of edges (Figure~\ref{fig:runtime_3cycle}(a)). The input graphs in Figure~\ref{fig:separate_cycle_exp} belong to this class of graphs that contain 2 cycles. This class of graphs simulates the complex structural variants in disease genomes or the differences between genomes of different species.

We generate pairs of 3-cycle graphs with varying sizes and randomly assign letters from \texttt{\{A,T,C,G\}} to edges. We compute the lower bound of GTED and GTED using \eqref{eqn:lowboundilp} and~\eqref{eqn:compactgtedilp_obj}, respectively. We denote the lower bound of GTED computed by solving ~\eqref{eqn:lowboundilp} as $\GTED_l$. We group the generated 3-cycle graph pairs based on the value of $(\GTED- \GTED_l)$ and select 20 pairs of graphs randomly for each $(\GTED- \GTED_l)$ value ranging from 1 to 5. The maximum number of edges in all selected graphs is 32. 

We show the difficulty of computing GTED using the iterative algorithm on the 100 selected pairs of 3-cycle graphs. We terminate the ILP solver after 20 minutes. As shown in Figure~\ref{fig:runtime_3cycle}, as the difference between GTED and $\GTED_l$ increases, the wall-clock time to solve~\eqref{eqn:newgtedlp_obj} for GTED increases faster than the time to solve~\eqref{eqn:compactgtedilp_obj} for GTED. For pairs on graphs with $(\GTED- \GTED_l)= 5$, on average it takes more than 15 minutes to solve~\eqref{eqn:newgtedlp_obj} with more than 500 iterations. On the other hand, it takes an average of 5 seconds to solve~\eqref{eqn:compactgtedilp_obj} for GTED and no more than 1 second to solve for the lower bound. The average time to solve each ILP is shown in Table~\ref{tab:runtime}.

In summary, on the class of 3-cycle graphs introduced above, the difficulty to solve GTED via the iterative algorithm increases rapidly as the gap between GTED and $\GTED_l$ increases. Although~\eqref{eqn:newgtedlp_obj} is solved more quickly than~\eqref{eqn:compactgtedilp_obj} for GTED when the sequences are long and the GTED is equal to $\GTED_l$ (Section~\ref{sec:tcr_results}), \eqref{eqn:compactgtedilp_obj} may be more efficient when the graphs contain overlapping cycles such that the gap between GTED and $\GTED_l$ is larger.

\begin{figure}
    \centering
    \includegraphics[width=0.95\textwidth]{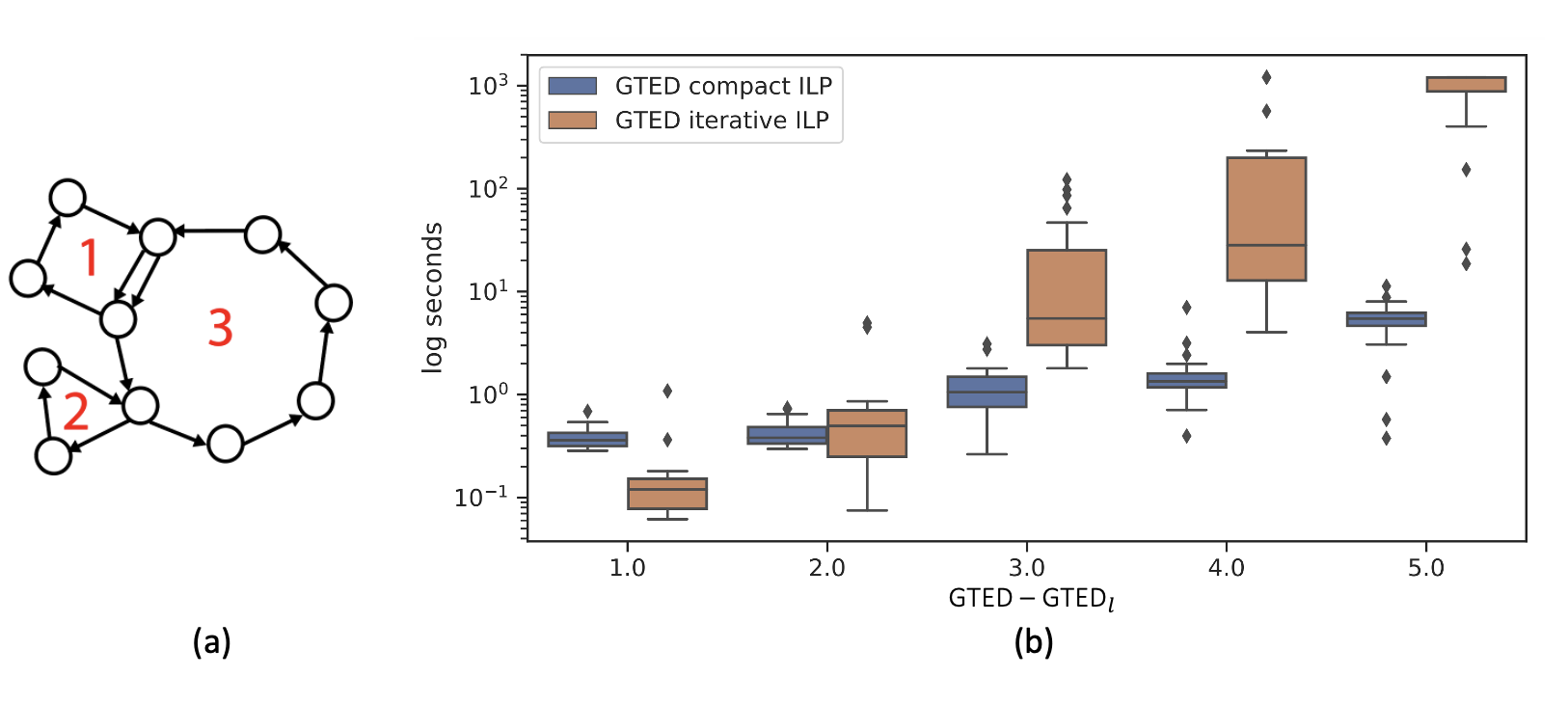}
    \caption{(a) An example of a 3-cycle graph. Cycle 1 and 2 are attached to cycle 3. (b) The distribution of wall-clock time to solve the~\ref{eqn:compactgtedilp_obj} and the iterative~\ref{eqn:newgtedlp_obj} on 100 pairs of 3-cycle graphs.}
    \label{fig:runtime_3cycle}
\end{figure}

\section{Conclusion}

We point out the contradictions in the result on the complexity of labeled graph comparison problems and resolve the contradictions by showing that GTED, as opposed to the results in~\citet{gted}, is NP-complete. On one hand, this makes GTED a less attractive measure for comparing graphs since it is unlikely that there is an efficient algorithm to compute the measure. On the other hand, this result better explains the difficulty of finding a truly efficient algorithm for computing \GTED exactly. In addition, we show that the previously proposed ILP of GTED~\citep{gted} does not solve GTED and give two new ILP formulations of GTED. 

While the previously proposed ILP of GTED does not solve GTED, it solves for a lower bound of GTED, and we show that this lower bound can be interpreted as a more ``local" measure, CCTED, of the distance between labeled graphs. Further, we characterize the LP relaxation of \boundaryilp and show that, contrary to the results in~\citet{gted}, \boundarylp does not always yield optimal integer solutions.

As shown previously~\citep{gted,fgted}, it takes more than 4 hours to solve~\eqref{eqn:lowboundilp} for graphs that represent viral genomes that contain $\approx 3000$ bases with a multi-threaded LP solver. Likewise, we show that computing GTED using either~\eqref{eqn:newgtedlp_obj} or \eqref{eqn:compactgtedilp_obj} is already slow on small genomes, especially on pairs of simulated genomes that are different due to segmental duplications and translations. The empirical results show that it is currently impossible to solve GTED or its lower bound directly using this approach for bacterial- or eukaryotic-sized genomes on modern hardware. The results here should increase the theoretical interest in GTED along the directions of heuristics or approximation algorithms as justified by the NP-hardness of finding GTED.

\section*{Acknowledgements}

The authors would like to thank the members of the Kingsford Group for their helpful comments throughout this project, in particular Guillaume Mar\c{c}ais. 
The authors thank Marina L Knittel, Jacob M Gilbert, and Cenk Sahinalp for their insightful discussion on the NP-completeness of CCTED. This work was supported in part by the US National Science Foundation [DBI-1937540, III-2232121], the US National Institutes of Health [R01HG012470] and by the generosity of Eric and Wendy Schmidt by recommendation of the Schmidt Futures program. \\Conflict of Interest: C.K. is a co-founder of Ocean Genomics, Inc.

\bibliographystyle{unsrtnat}
\bibliography{gtednotp}
\newpage
\begin{appendices}

\section{Proofs for the NP-completeness of GTED}
\subsection{Reduction from ETEW to GTED}
\label{sec:gtedconflict}
We provide below the complete proof for Theorem~\ref{thm:etewp}. 
\begin{customthm}{\ref{thm:etewp}}
    If $\GTED \in \PP$ then $\ETEW \in \PP$.
\end{customthm}
\begin{proof}
    Let $\langle s, G\rangle$ be an instance of \ETEW. Construct a directed, acyclic graph (DAG), $C$, that has only one path. Let the path in $C$ be $P=(e_1,\dots,e_{|s|})$ and the edge label of $e_{i}$ be $s[i]$. Clearly, $C$ is a unidirectional, edge-labeled Eulerian graph, $P$ is the only Eulerian trail in $C$, and $str(P) = s$.
    
    For the graph $G=(V_G,E_G,\ell_G,\Sigma)$ from the \ETEW instance, which may not be unidirectional, create another graph $G'$ that contains all of the nodes and edges in $G$ except the anti-parallel edges. Let $\Sigma_{G'} = \Sigma\cup\{\epsilon\}$, where $\epsilon$ is a character that is not in $\Sigma$. For each pair of anti-parallel edges $(u,v)$ and $(v,u)$ in $G$, add four edges $(u,w_1), (w_1,v), (v,w_2), (w_2,u)$ by introducing new vertices $w_1,w_2$ to $G'$. Let $\ell_{G'}(u,w_1) = \ell_{G}(u, v)$ and $\ell_{G'}(w_2, u) = \ell_G(v, u)$. Let $\ell_{G'}(w_1,v) = \ell_{G'}(v, w_2) = \epsilon$ for every newly introduced vertex. $G'$ has at most twice the number of edges as $G$ and is Eulerian and unidirectional.
    
    Define the cost of changing a character from $a$ to $b$ $\cost(a,b)$ for $a,b\in\Sigma\cup\{-\}$ to be 0 if $a=b$ and 1 otherwise. ``$-$'' is the gap character indicating an insertion or a deletion. Define $\cost(a, \epsilon)$ with $a\in\Sigma$ to be 1. Define $\cost(-, \epsilon)$ to be 0. 
    
    Use the (assumed) polynomial-time algorithm for $\GTED$ to ask whether $\GTED(C, G') \leq 0$ under edit distance $\Sigma$. If yes, then let $(s_1,s_2)$ be the 0-cost alignment of the strings spelled out by the trails in $C$ and $G'$, respectively. The non-gap characters of $s_1$ must spell out $s$ since there is only one Eulerian trail in $C$. Because the alignment cost is $0$, any $-$ (gap) characters in $s_1$ must be aligned with $\epsilon$ characters in $s_2$ and any non-gap characters in $s_1$ must be aligned to the same character in $s_2$. The trail in $G'$ that spells $s_2$ can be transformed to a trail that spells $s_3$ by collapsing the edges with $\epsilon$ character labels, and $s_3 = s_1$.
    
    If $\GTED(C,G')> 0$, $G$ must not contain an Eulerian trail that spells $s$. Otherwise, such a trail could be extended to a trail introducing some $\epsilon$ characters that could be aligned to $s$ with zero cost by aligning gaps with $\epsilon$ characters.
    
    Hence, an (assumed) polynomial-time algorithm for $\GTED$ solves $\ETEW$ in polynomial time.
\end{proof}
\subsection{Reduction from Hamiltonian Path to GTED}
\label{sec:gtednpproof}
We provide below the complete proof for Theorem~\ref{thm:gtednp}. 
\begin{customthm}{\ref{thm:gtednp}}
    GTED is NP-complete.
\end{customthm}
\begin{proof}
    We reduce from the \textsc{Hamiltonian Path} problem, which asks whether a directed, simple graph $G$ contains a path that visits every vertex exactly once. Here simple means no self-loops or parallel edges. Let $\langle G=(V,E)\rangle$ be an instance of \textsc{Hamiltonian Path}, with $n=|V|$ vertices. The reduction is almost identical to that presented in~\citet{eulerw}, and from here until noted later in the proof the argument is identical except for the technicalities introduced to force unidirectionality (and another minor change described later). The first step is to construct the Eulerian closure of $G$, which is defined as $G'=(V',E')$ where
    \begin{equation}
    V' = \{v^{in}, v^{out} : v \in V\} \cup \{w\},
    \end{equation}
    and $E'$ is the union of the following sets of edges and their labels:
    \begin{itemize}
    \item $E_1 = \{(v^{in},  v^{out}) : v \in V\}$,  labeled \texttt{a},
    \item $E_2 = \{(u^{out}, v^{in}) : (u,v) \in E\}$,  labeled \texttt{b},
    \item $E_3 = \{(v^{out}, v^{in}) : v \in V\}$, labeled \texttt{c},
    \item $E_4 = \{(v^{in}, u^{out}) : (u,v) \in E\}$, labeled \texttt{c},
    \item $E_5 = \{(u^{in},w) : u \in V\}$, labeled \texttt{c},
    \item $E_6 = \{(w,u^{in}) : u \in V\}$, labeled \texttt{b}.
    \end{itemize}
    Since $G'$ is connected and every outgoing edge in $G'$ has a corresponding antiparallel incoming edge, $G'$ is Eulerian. It is not unidirectional, so we further create $G''$ from $G'$ by adding dummy nodes to each pair of antiparallel edges and labelling the length-2 paths so created with \texttt{x\#}, where \texttt{x} is the original label of the split edge (\texttt{a}, \texttt{b}, or \texttt{c}) and \texttt{\#} is some new symbol (shared between all the new edges). We call these length-2 paths introduced to achieve unidirectionality ``split edges''.
    
    We now argue that $G$ has a Hamiltonian path iff $G''$ has an Eulerian trail that spells out 
    \begin{equation}
    q = \texttt{a\#}(\texttt{b\#a\#})^{n-1}(\texttt{c\#})^{2n-1}(\texttt{c\#b\#})^{|E| + 1}.
    \end{equation}
    If such an Eulerian trail exists, then the trail starts with spelling the string $\texttt{a\#}(\texttt{b\#a\#})^{n-1}$, which corresponds to a Hamiltonian trail in $G$ since it visits exactly $n$ ``vertex split edges'' (type $E_1$, labeled \texttt{a\#}) and each vertex split edge can be used only once (since it is an Eulerian trail). Further, successively visited vertices must be connected by an edge in $G$ since those are the only \texttt{b\#} split edges in $G''$ (except those leaving $w$, but $w$ must not be involved in spelling out $\texttt{a\#}(\texttt{b\#a\#})^{n-1}$, since entering $w$ requires using a split edge labeled $\texttt{c\#}$).
    
    \newcommand\pred{\text{pred}}
    
    For the other direction,  if a $G$ has a Hamiltonian path $v_1,\dots,v_n$, then walking that sequence of vertices in $G''$ will spell out $\texttt{a\#}(\texttt{b\#a\#})^{n-1}$. This path will cover all $E_1$ edges and the $E_2$ edges that are on the Hamiltonian path. Retracing the path so far in reverse will use $2n-1$ split edges labeled \texttt{c\#}, consuming the $(\texttt{c\#})^{2n-1}$ term in $q$ and covering all nodes' reverse vertex edges $E_3$ (since the path is Hamiltonian). The reverse path also covers the $E_4$ edges corresponding to reverse Hamiltonian path edges. Our Eulerian trail is now ``at'' node $v_1^{in}$.
    
    What remains is to complete the Eulerian walk covering (a)~edges and their antiparallel counterparts corresponding to 
    edges in $G$ that were not used in the Hamiltonian path, and (b)~the edges adjacent to node $w$. To do this, define $
    \pred(v)$ be the vertices $u$ in $G$ for which edge $(u,v)$ exists and $u$ is not the predecessor of $v$ along 
    the Hamiltonian path. For each $u\in\pred(v_1)$, traverse the split 
    edge labeled \texttt{c\#} to $u^{out}$ then traverse the forward split edge labeled $\texttt{b\#}$ back to $v_1^{in}$. This results in a string 
    $(\texttt{c\#b\#})^{|\pred(v_1)|}$. Once the predecessors of $v_1$ are exhausted, traverse the split edge labeled \texttt{c\#} from $v_1^{in}$
    into node $w$ and then traverse the split edge labeled \texttt{b\#} to $v_2^{in}$. This again generates a \texttt{c\#b\#} string. Repeat the process, covering the edges of 
    $v_2$'s predecessors and returning to $w$ to move to the next node along the Hamiltonian path for each node 
    $v_3,\dots,v_n$. After covering the predecessors of $v_n^{in}$, go to $v_1^{in}$ through the remaining edges in $E_5$ and $E_6$, $(v_n^{in},w)$ and $(w, v_1^{in})$, which 
    completes the Eulerian tour. This covers all the edges of $G''$. The word spelled out in this last section of the Eulerian trail is a sequence of repetitions of \texttt{c\#b\#}, with one repetition for 
    each  edge that is not in the Hamiltonian path ($|E|-n+1$) and all of the edges in $E_5$ and $E_6$ for entering and leaving each node ($2n$), with a total of $|E|+1$ repetitions, which is the final $(\texttt{c\#b\#})^{|E| + 1}$ term in $q$.
    
    This ends the slight modification of the proof in~\citet{eulerw}, where the differences are (a)~the introduction of the \texttt{\#} characters and (b)~using the exponent $|E|+1$ of the final part of $q$ instead of $|E|+n+1$ as in~\citet{eulerw} since we create $w$-edges only to $v^{in}$ vertices. (This second change has no material effect on the proof, but reduces the length of the string that must be matched.)
    
    Now, given an instance $\langle G=(V,E)\rangle$ of \textsc{Hamiltonian Path}, with $n=|V|$ vertices, we construct 
    $G''$ as above (obtaining a unidirectional Eulerian graph) and create graph $C$ that only represents string $q$. Note that 
    $|\Sigma|=4$ and $G''$ and $C$ can be constructed in polynomial time. $\GTED(G'',C)= 0$ if and only if an Eulerian path in $G''$ spells out $q$, since there can be no indels or mismatches. By the above argument, an 
    An eulerian tour that spells out $q$ exists if and only if $G$ has a Hamiltonian path.
\end{proof}

\subsection{FGTED is NP-complete}
\label{sec:fgtednp}
\begin{problem}[Flow Graph Traversal Edit Distance (\FGTED)~\citep{fgted}] 
Given unidirectional, edge-labeled Eulerian graphs $G_1$ and $G_2$, each of which has distinguished $s_1,s_2$ source and $t_1,t_2$ sink vertices, compute
\begin{equation}
\FGTED(G_1,G_2) \triangleq \min_{\substack{D_1\in \flow(G_1, s_1, t_1)\\D_1\in \flow(G_2, s_2, t_2)}} \emedit(\strset(D_1), \strset(D_2)),
\end{equation}
where $\flow(G_i, s_i, t_i)$ is the collection of all possible sets of $s_1$-$t_1$ trail decomposition of saturating flow from $s_i$ to $t_i$, $\strset(D)$ is the multi-set of strings constructed from trails in $D$.
\end{problem}

\begin{customthm}{\ref{thm:fgtednp}}
    FGTED is NP-complete.
\end{customthm}
\begin{proof}
        Let $G = (v, E)$ be an instance of the {\sc Hamiltonian Cycle} problem. Let $n = |V|$ be the number of vertices in $G$. Construct the Eulerian closure of $G$ and split the anti-parallel edges. Let the new graph be $G' = (V', E')$. Attach a source $s$ and a sink node $t$ to an arbitrary node $v_1^{in}$ by adding edge $(s, v_1^{in})$ and $(v_1^{in}, t)$ with labels \texttt{s} and \texttt{t}, respectively. 
        
        Construct a string $q$, such that 
        \begin{align}
            q=\texttt{sa\#}(\texttt{b\#a\#})^{n-1}(\texttt{c\#})^{2n-1}(\texttt{c\#b\#})^{|E| + 1}\texttt{t}.
        \end{align}
        Create a graph $Q$ that only contains one path with labels on the edges of the path that spell the string $q$. 
       The union of the set of trails in any flow decomposition of $G'$ is equal to a set of Eulerian trails, $\mathcal{E}$, that starts at $s$ and ends at $t$. All Eulerian trails in $\mathcal{E}$ are also closed Eulerian trails of $G'\setminus \{s,t\}$ that starts and ends at $v_1^{in}$.
    
        Using the same line of argument in the proof of Theorem~\ref{thm:gtednp}, an Eulerian trail in $G'$ that spells $q$ is equivalent to a Hamilton Cycle in $G$. In addition, $\FGTED(Q, G') = 0$ if and only if all Eulerian trails in $\mathcal{E}$ spell out $q$. Therefore, if $\FGTED(Q, G') = 0$, then there is a Hamiltonian Cycle in $G$. Otherwise, then there must not exist a Hamiltonian Cycle in $G$.
    \end{proof}

\section{Equivalence between two ILPs proposed by~\citeauthor{gted}}
\label{sec:ilpequivalence}

The analysis provided by~\citet{gted} states that the LP relaxation of \gtedilp does not always yield integer solutions, but the LP relaxation of \boundaryilp always yields integer solutions. This suggests that the two LP relaxations have difference feasibility regions for $x$. We show that these two LP relaxations are actually equivalent in Theorem~\ref{thm:lpequal}. Further, we show that \gtedilp and \boundaryilp are also equivalent. Since \gtedilp does not solve for $\GTED(G_1, G_2)$ as shown in~\ref{sec:gtedilpcounterexp}, we conclude that \boundaryilp also does not solve $\GTED(G_1, G_2)$.

\begin{theorem}\label{thm:lpequal}
    Given two unidirectional, edge-labeled Eulerian graphs $G_1$, $G_2$, the feasibility region of $x$ in the LP relaxation of \boundaryilp is the same as the feasibility region of $x$ in the LP relaxation of \gtedilp. 
\end{theorem}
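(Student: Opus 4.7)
The plan is to prove Theorem~\ref{thm:lpequal} by establishing both inclusions between the $x$-feasibility regions. Since nonnegativity of $x$ and the objective coincide between the two LPs, the task reduces to matching the equality constraints. Writing $P$ for the projection operator defined by $(Px)_{i,f} = \sum_{e\in E} x_e I_i(e,f)$, the \gtedlp constraints on $x$ are $Ax = 0$ and $Px = \mathbf{1}$, while \boundarylp requires $x = x^{init} + [\partial] y$ for some $y \in \mathbb{R}^{|T(G_1,G_2)|}$.

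For the direction that every $x$ feasible in \boundarylp is also feasible in \gtedlp, I would first verify that $x^{init}$ satisfies the \gtedlp constraints: its positive entries encode one closed Eulerian trail of $G_1$ placed at row $s_2$ and one closed Eulerian trail of $G_2$ placed at column $s_1$, so $Ax^{init} = 0$ follows from flow conservation in each trail, and $Px^{init} = \mathbf{1}$ because each input-graph edge is traversed exactly once. Then I would show that every column of $[\partial]$ lies in $\ker A \cap \ker P$, so adding $[\partial] y$ preserves both constraints. The containment in $\ker A$ is the chain-complex identity $\partial_1 \circ \partial_2 = 0$, which here says that the boundary of any 2-simplex is a closed directed 3-cycle in the alignment graph. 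The containment in $\ker P$ follows from a direct computation on each of the two triangle types at a pair $(u_1,v_1) \in E_1, (u_2,v_2) \in E_2$: the boundary consists of one vertical, one horizontal, and one diagonal edge with signs such that the vertical and diagonal contributions to the projection onto the $G_1$-edge $(u_1,v_1)$ cancel, and analogously for $G_2$.

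For the reverse inclusion, fix $x$ feasible in \gtedlp and set $z = x - x^{init}$, so $z \in \ker A \cap \ker P$; the task is to produce $y$ with $[\partial] y = z$. This reduces to the key lemma that $\ker A \cap \ker P = \mathrm{image}([\partial])$. The $\supseteq$ direction is exactly what was just shown. For the $\subseteq$ direction, I would proceed constructively by first eliminating diagonal entries: for each diagonal edge $D$ at pair $(e_1, e_2)$ with $z_D \neq 0$, setting the coefficient of the upper triangle at $(e_1,e_2)$ equal to $-z_D$ cancels the $D$-component and adjusts one same-row vertical edge and one same-column horizontal edge by $-z_D$. After one sweep the residual $z'$ lies in $\ker A \cap \ker P$ with all diagonal entries zero. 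It remains to show that any purely vertical/horizontal 1-cycle in $\ker A \cap \ker P$ is a combination of ``swap boundaries'' $\partial T_{\mathrm{upper}} - \partial T_{\mathrm{lower}}$, each of which has no diagonal component; this is a finite linear-algebra exercise over the alignment grid that can be handled by induction on rows and columns.

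The main obstacle is this last step of the reverse inclusion: proving that every diagonal-free 1-cycle in $\ker A \cap \ker P$ is a sum of swap boundaries requires tracking how vertical and horizontal values at different rows and columns can be simultaneously zeroed out without reintroducing diagonals. An alternative and more abstract route is a dimension count comparing $\dim \mathrm{image}([\partial])$, which equals $2|E_1||E_2|$ minus the $|E_1||E_2|$ independent relations coming from pairs $(T_{\mathrm{upper}}, T_{\mathrm{lower}})$ that share a common diagonal, against $\dim(\ker A \cap \ker P)$, computed from the alignment-graph cycle-space dimension minus the rank of $P$ restricted to that cycle space. Either route completes the equivalence.
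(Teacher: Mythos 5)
Your forward direction is sound and matches the paper's argument (the paper's Lemma~\ref{lem:eqonedir} is exactly your claim that each column of $[\partial]$ lies in $\ker A \cap \ker P$, verified triangle by triangle), and your first move in the reverse direction --- sweeping out diagonal entries by one boundary operation per diagonal edge --- is also the paper's first move. The genuine gap is the step you yourself flag as ``the main obstacle'': you never prove that a diagonal-free residual in $\ker A \cap \ker P$ lies in $\mathrm{image}([\partial])$. Calling it ``a finite linear-algebra exercise over the alignment grid that can be handled by induction on rows and columns'' is not a proof, and the alternative dimension count is also unsubstantiated: the relations among the columns of $[\partial]$ are not only the $|E_1||E_2|$ pairs sharing a diagonal (whether further relations exist is precisely the question of the rank of $\partial_2$, which you would have to compute, not assume). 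Moreover, your reformulation as the purely linear-algebraic identity $\ker A \cap \ker P = \mathrm{image}([\partial])$ discards the one piece of structure that makes the paper's argument go through: nonnegativity. The residual $z = x - x^{init}$ has mixed signs, so no flow decomposition of $z$ is available, and an ``induction on rows and columns'' would have to control arbitrary sign patterns of vertical and horizontal mass.

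The paper instead decomposes the nonnegative feasible $x$ itself (not $z$) into weighted $s$--$t$ paths and cycles via the flow decomposition theorem, then shows each path can be normalized by boundary operations to the canonical trail $P_0$ (all horizontal edges first, then all vertical), and handles each cycle by splicing in an auxiliary $s$--$t$ trail through a node of the cycle, normalizing the spliced walk and the auxiliary trail separately, and subtracting. Summing these normalized pieces and invoking the projection constraints forces the result to equal $x^{init}$ exactly, which yields the required $y$. To complete your proposal you would need either to carry out this flow-decomposition argument (or an equivalent one) or to actually establish the kernel/image identity, e.g.\ by exhibiting a basis of the diagonal-free part of $\ker A \cap \ker P$ consisting of ``row-shifted cycle differences'' and showing each is a telescoping sum of paired triangle boundaries along a connecting path in $G_1$ or $G_2$; as written, the central inclusion is asserted rather than proven.
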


Let $\alngraph (G_1,G_2)=(V,E,\delta)$ be the alignment graph of $G_1=(V_1, E_1, \ell_1, \Sigma_1)$ and $G_2=(V_2, E_2, \ell_2, \Sigma_2)$, and let $T(G_1,G_2)$ be its two-simplex set. First, we have the following result:
\begin{lemma}
    Let $[y_{i}]\in\mathbb{R}^{|T(G_1,G_2)|}$ be a vector such that the $j$-th entry of $[y_{i}]$, $[y_{i}]_{j}$ is equal to $0$ for all $j\neq i$. The vector $x'=x+[\partial][y_{i}]$ satisfies the constraints~\eqref{eqn:cycle_cst}-\eqref{eqn:proj_cst} if the vector $x$ satisfies the constraints~\eqref{eqn:cycle_cst}-\eqref{eqn:proj_cst}. 
    \label{lem:eqonedir}
\end{lemma}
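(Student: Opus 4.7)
The plan is to prove both constraints by linearity. Since $[y_i]$ is nonzero only in coordinate $i$, the correction term $[\partial][y_i]$ equals $[y_i]_i$ times the $i$-th column of the boundary matrix, which is exactly the signed boundary of the single oriented $2$-simplex $\tau_i \in T(G_1, G_2)$ (viewed as a signed combination of three $1$-simplices of $\alngraph(G_1,G_2)$). Thus, writing $x' = x + [\partial][y_i]$ and using that $x$ satisfies constraints~\eqref{eqn:cycle_cst}--\eqref{eqn:proj_cst}, it suffices to show that this signed combination of three $1$-simplices has zero image under (a) the node-incidence map $A$ and (b) every projection map $I_k(\cdot, f)$ for $k=1,2$ and $f \in E_k$.

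For the cycle constraint, the claim $A[\partial][y_i] = 0$ is an instance of the fundamental simplicial identity $\partial \circ \partial = 0$. I would also verify it directly on both orientation types in $T(G_1, G_2)$. For a first-type simplex $\tau_i = [(u_1, u_2), (v_1, u_2), (v_1, v_2)]$, the boundary is
\[
\partial \tau_i = [(v_1, u_2), (v_1, v_2)] - [(u_1, u_2), (v_1, v_2)] + [(u_1, u_2), (v_1, u_2)],
\]
and applying the incidence map $A$ to each signed directed edge yields the three differences $(v_1, v_2) - (v_1, u_2)$, $-\bigl((v_1, v_2) - (u_1, u_2)\bigr)$, and $(v_1, u_2) - (u_1, u_2)$, which telescope to zero. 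The second simplex type is handled identically.

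The main work is the projection constraint. The key structural observation about $T(G_1, G_2)$ is that every simplex in it has exactly one horizontal, one vertical, and one diagonal edge as its boundary faces. Moreover, by construction of the two simplex types, the diagonal edge projects to the same edge of $G_1$ as the vertical edge does, and to the same edge of $G_2$ as the horizontal edge does. I would check case by case that in the signed boundary $\partial \tau_i$, the diagonal edge appears with a sign opposite to that of both the horizontal and vertical edges. Consequently, when summing $\sum_{e \in E} ([\partial][y_i])_e \, I_1(e, f)$ for any $f \in E_1$, only the vertical and diagonal faces contribute (with canceling signs), and similarly for $f \in E_2$ only the horizontal and diagonal faces contribute (with canceling signs). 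Hence the correction vector adds zero to each projection sum, and the equalities in~\eqref{eqn:proj_cst} are preserved.

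The only real subtlety is bookkeeping of signs, because the diagonal edge's orientation differs between the two types of $2$-simplices in $T(G_1, G_2)$; a clean way around this is to write out $\partial$ of a generic representative of each type using the formula in the paper's definition of the boundary operator, and then read off the coefficients of the horizontal, vertical, and diagonal faces. Once that table is assembled, the cancellation in both (a) and (b) is immediate, and by linearity the conclusion extends from $[y_i]_i \cdot \tau_i$ to the full correction $[\partial][y_i]$, completing the proof.
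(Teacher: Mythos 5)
Your proposal is correct and follows essentially the same route as the paper's proof: expand $[\partial][y_i]$ as the signed boundary of the single $2$-simplex $\sigma_i$, verify $A[\partial][y_i]=0$ by telescoping of incidences (equivalently $\partial\circ\partial=0$), and observe that the diagonal face carries the opposite sign to the vertical and horizontal faces while projecting to the same edges of $G_1$ and $G_2$, so the projection sums are unchanged. The sign/projection bookkeeping you defer to a case check is exactly the computation the paper carries out explicitly for one simplex type.
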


\begin{proof}
    Let $\sigma_{i}\in T(G_1,G_2)$ be the 2-simplex corresponding to the entry $i$ of $[y_{i}]$. Based on the construction of $T(G_1,G_2)$, $\sigma_i$ has two forms: $[(u_1,u_2),(v_1,u_2), (v_1,v_2)]$ or $[(u_1,u_2),(u_1,v_2), (v_1,v_2)]$. Without loss of generality, we assume $\sigma_i=[(u_1,u_2),(v_1,u_2), (v_1,v_2)]$. We can prove this lemma by using the same way when $\sigma_i=[(u_1,u_2),(u_1,v_2), (v_1,v_2)]$. 
    Since 
   \begin{align*}
       \partial \sigma_i=[(u_1,u_2),(v_1,u_2)]+[(v_1,u_2),(v_1,v_2)]-[(u_1,u_2),(v_1,v_2)],
   \end{align*} 
   We have
    \begin{align*}
        [\partial][y_i]=[y_i]_i[x_{e_1}]+[y_i]_i[x_{e_2}]-[y_i]_i[x_{e_3}],
    \end{align*}
where $e_1=[(u_1,u_2),(v_1,u_2)]$, $e_2=[(v_1,u_2),(v_1,v_2)]$, $e_3=[(u_1,u_2),(v_1,v_2)]$, and $[x_e]\in\mathbb{R}^{|E|}$ is a vector such that all the entries are $0$ except that the one corresponding to edge $e$ is $1$. we also let $[x_v]\in\mathbb{R}^{|V|}$ be a vector such that all the entries are $0$ except that the one corresponding to vertex $v$ is $1$. Therefore, we have  
\begin{align*}
    Ax' = Ax+[y_i]_i[x_{v_2}]-[y_i]_i[x_{v_1}]+[y_i]_i[x_{v_3}]-[y_i]_i[x_{v_2}]-[y_i]_i[x_{v_3}]+[y_i]_i[x_{v_1}]=Ax,
\end{align*}
where $v_1=(u_1,u_2)$, $v_2=(v_1,u_2)$, and $v_3=(v_1,v_2)$.
Hence, $x'$ satisfies the constraint~\eqref{eqn:cycle_cst} if $x$ satisfies the constraint~\eqref{eqn:cycle_cst}. 

In addition, since $\sum_{e\in E} x_e'I_i(e, f)=\sum_{e\in E} x_eI_i(e, f)+[y_i]_i I_i(e_1, f)+[y_i]_i I_i(e_2, f)-[y_i]_i I_i(e_3, f)$, and:
   \begin{itemize}
       \item $I_1(e_1,(u_1,v_1))=1$ and $I_i(e_1,f)=0$ for other $f\in G_i$, 
       \item $I_2(e_2,(u_2,v_2))=1$ and $I_i(e_2,f)=0$ for other $f\in G_i$, 
       \item $I_1(e_3,(u_1,v_1))=1$, $I_2(e_3,(u_2,v_2))=1$, and  $I_i(e_3,f)=0$ for other $f\in G_i$, 
   \end{itemize}
   we have:
   \begin{itemize}
       \item $[y_i]_i I_1(e_1,(u_1,v_1))+[y_i]_i I_1(e_2,(u_1,v_1))-[y_i]_i I_1(e_3,(u_1,v_1))=[y_i]_i+0-[y_i]_i=0$,
       \item $[y_i]_i I_2(e_1,(u_2,v_2))+[y_i]_i I_2(e_2,(u_2,v_2))-[y_i]_i I_2(e_3,(u_2,v_2))=0+[y_i]_i-[y_i]_i=0$,
       \item $[y_i]_i I_i(e_1,f)+[y_i]_i I_i(e_2,f)-[y_i]_i I_i(e_3,f)=0+0-0=0$ for any other $i=1,2$ and $f\in E_i$. 
   \end{itemize}
Therefore, $\sum_{e\in E} x_e'I_i(e, f)=\sum_{e\in E} x_eI_i(e, f)$, meaning that $x'$ satisfies the constraint~\eqref{eqn:proj_cst} if $x$ satisfies the constraint~\eqref{eqn:proj_cst}.
\end{proof}
With Lemma~\ref{lem:eqonedir}, we prove that any feasible solution of $x$ in~\eqref{eqn:boundarylp} is a feasible solution of~\eqref{eqn:gtedlp_obj}-\eqref{eqn:gtedlp_last}. First, it is easy to check that $x^{init}$ satisfies the constraints~\eqref{eqn:cycle_cst}-\eqref{eqn:proj_cst}. For each feasible solution of $x$ in~\eqref{eqn:boundarylp}, since $x=x^{init}+[\partial]y=x^{init}+\sum_{i}[\partial][y_{i}]$, by iteratively using Lemma~\ref{lem:eqonedir}, we get that $x$ satisfies the constraints~\eqref{eqn:cycle_cst}-\eqref{eqn:proj_cst}. Since $x_e\geq 0$ for all $e\in E$ is a constraint existing in both linear relaxations, $x$ is a feasible solution of~\eqref{eqn:gtedlp_obj}-\eqref{eqn:gtedlp_last}. 


We now show that any feasible solution of~\eqref{eqn:gtedlp_obj}-\eqref{eqn:gtedlp_last} is a feasible solution of~\eqref{eqn:boundarylp}. Let $x$ be a feasible solution of~\eqref{eqn:gtedlp_obj}-\eqref{eqn:gtedlp_last}. We show that $x$ is also a feasible solution of~\eqref{eqn:boundarylp} by proving that $x$ can be converted to $x^{init}$ in~\eqref{eqn:boundarylp} via the boundary operator $\partial$. First, if there is a diagonal edge $e=[(u_1,u_2),(v_1,v_2)]$ in $E$ such that $x_e>0$, then it can be replaced by the horizontal edge $e_h=[(u_1,u_2),(u_1,v_2)]$ followed by the vertical edge $e_v=[(u_1,v_2),(v_1,v_2)]$ by using one boundary operation on the 2-simplex $[(u_1,u_2),(u_1,v_2),(v_1,v_2)]$. Hence, $x$ can be converted to a new vector $x'$, such that $x'_{e}=0$, $x'_{e_{h}}=x_{e_{h}}+x_{e}$, $x'_{e_{v}}=x_{e_{v}}+x_{e}$, and all the other entries in $x'$ are the same as those in $x$. It is easy to check that $x'$ is also a feasible solution of~\eqref{eqn:gtedlp_obj}-\eqref{eqn:gtedlp_last}. Therefore, without loss of generality, we assume $x$ to be a vector such that all the entries corresponding to diagonal edges in $\alngraph(G_1,G_2)$ are zero. 

We then prove that any $x$ can be converted to $x^{init}$ in~\eqref{eqn:boundarylp} via the boundary operator. Let the source and the sink node of $x$ in $\alngraph (G_1,G_2)$ be $(s_1^{1},s_1^{2})$ and $(s_2^{1},s_2^{2})$, where $s_1^{i}$ is the source node of $G_i$ and $s_2^{i}$ is the sink node of $G_i$. When the Eulerian trail is closed (meaning that it is an Eulerian tour) in $G_i$, we let $s_1^{i}=s_2^{i}$ be an arbitrary vertex in $V_i$. $x^{init}$ can be seen as a trail (tour) in $\alngraph(G_1,G_2)$ that starts from $(s_1^{1},s_1^{2})$, walks along an Eulerian trail of $G_2$ via all the horizontal edges $P_h$, 
\begin{equation*}
    P_h=\{[(s_1^{1},s_1^{2}),(s_1^{1},v_1^{2})],[(s_1^{1},v_1^{2}),(s_1^{1},v_2^{2})],\dots,[(s_1^{1},v_{i-1}^{2}),(s_1^{1},v_{i}^{2})],
[(s_1^{1},v_{i}^{2}),(s_1^{1},s_{2}^{2})]\},
\end{equation*}
and then walks along an Eulerian trail of $G_1$ via all the vertical edges $P_v$, 
\begin{equation*}
P_v=\{[(s_1^{1},s_2^{2}),(v_1^{1},s_2^{2})], [(v_1^{1},s_2^{2}),(v_2^{1},s_2^{2})],\dots,[(v_{j-1}^{1},s_2^{2}),(v_j^{1},s_2^{2})], [(v_j^{1},s_2^{2}),(s_2^{1},s_2^{2})]\},
\end{equation*}
until the sink node $(s_2^{1},s_2^{2})$. Here $\{s_1^{2},v_1^{2},v_2^{2},\dots, v_{i-1}^{2}, v_{i}^{2}, s_{2}^{2}\}$ is an Eulerian trail of $G_2$ and $\{s_1^{1},v_1^{1},v_2^{1},\dots, v_{i-1}^{1}, v_{i}^{1}, s_{2}^{1}\}$ is an Eulerian trail of $G_1$. We use $P_0=\{P_h,P_v\}$ to denote the trail from $(s_1^{1},s_1^{2})$ to $(s_2^{1},s_2^{2})$ that is the concatenation of $P_h$ and $P_v$. It is easy to see that each edge in $P_0$ is unique. 

As shown in~\citet{fgted}, $x$ is a flow of $\alngraph (G_1,G_2)$ with the additional constraint~\eqref{eqn:proj_cst}. Therefore, according to the flow decomposition theorem~\citep[p.~80]{ahujia1993network}, $x$ can be decomposed into a finite set of weighted paths in $\alngraph(G_1,G_2)$ from $(s_1^{1},s_1^{2})$ to $(s_2^{1},s_2^{2})$, which is denoted as $\{(p_1,w_1^{p}),\dots, (p_n,w_n^{p})\}$, and a finite set of weight cycles in $\alngraph(G_1,G_2)$, which is denoted as $\{(c_1,w_1^{c}),\dots, (c_m,w_m^{c})\}$. Each path or cycle only contains horizontal and vertical edges.

 For path $i$, we use a vector $x^{p,i}$ to represent $(p_i,w_i^{p})$, 
    \begin{align}
    x^{p,i}_e = \begin{cases}
    w_i^{p}\quad &\text{if}~e\in p_i\\
    0\quad &\text{otherwise},
    \end{cases}
\end{align}

\begin{figure}
    \centering
    \includegraphics[width=0.7\columnwidth]{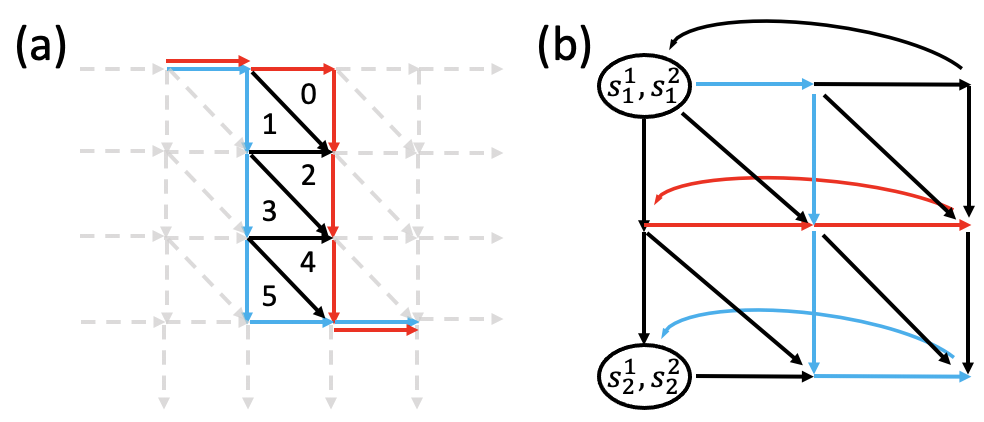}
    \caption{(a) An example of converting three vertical edges followed by one horizontal edge (blue line) to one horizontal edge followed by three vertical edges (red line). It can be done by doing boundary operations on 2-simplices labeled from $0$ to $5$. (b) An example of a cycle path (red line) and its auxiliary trail (blue line). }
    \label{fig:lpequal}
\end{figure}

By using the boundary operator, each path $p_i$ can actually be converted to a new trail $p_i'$ such that each edge in $p_i'$ is also an edge in $P_0$. To prove this, we consider the following two cases:
\begin{itemize}
    \item If $p_i$ walks along all the horizontal edges followed by all the vertical edges, then every edge in $p_i$ is an edge in $P_0$. To see that, let $e$ be an horizontal edge in $p_i$, since $p_i$ starts from $(s_1^{1},s_1^{2})$, $e$ has the form $[(s_1^{1},v),(s_1^{1},v')]$ where $[v,v']\in E_2$. Since $P_h$ corresponds to the Eulerian trail of $G_2$, for each $[v,v']\in E_2$, we have $[(s_1^{1},v),(s_1^{1},v')]\in P_h$. Therefore $e\in P_0$. We can use the same way to prove $e\in P_0$ when $e$ is a vertical edge. Note that in this case, the number of horizontal edges or vertical edges can be zero. 
    \item If not, then we let $p_i=\{e^i_1,e^i_2,\dots , e^i_{m}\}$, and let $e^i_{t}$ be the vertical edge with the smallest index $t$. There exists an integer $k~(k\geq 1)$ such that $\{e^{i}_{t},e^{i}_{t+1},\dots , e^{i}_{t+k-1}\}$ are all vertical edges and $e^{i}_{t+k}$ is an horizontal edge. We denote each vertical edge $e_{t+w}^{i}\in \{e^{i}_{t},e^{i}_{t+1},\dots , e^{i}_{t+k-1}\}$ as $[(v_w,v_t),(v_{w+1},v_t)]$ and denote $e^{i}_{t+k}$ as $[(v_k,v_t),(v_k,v_{t+1})]$. It is easy to see that when $w=0$, $v_w=s_1^{1}$. By using the boundary operator, this subpath $\{e^{i}_{t},e^{i}_{t+1},\dots , e^{i}_{t+k-1},e^{i}_{t+k}\}$ can be replaced by another subpath with one horizontal edge $[(s_1^{1},v_t),(s_1^{1},v_{t+1})]$ followed by $k$ vertical edges: $$\{[(s_1^{1},v_{t+1}),(v_1,v_{t+1})],[(v_1,v_{t+1}),(v_2,v_{t+1})],\dots , [(v_{k-1},v_{t+1}),(v_k,v_{t+1})].$$ Now we have a new path, denoted as $p_i^{1}$, in which the smallest index of the vertical edges becomes $t+1$. Figure~\ref{fig:lpequal}(a) shows an example, in which the blue line represents the subpath of $p_i$ and the red line represents the new subpath in $p_i^{1}$. 
    
    To create a new vector that represents $p_i^{1}$, we first create a zero vector $y^{p,i,1}\in\mathbb{R}^{|T(G_1,G_2)|}$, and from $w=0$ to $w=k-1$, we iteratively update $y^{p,i,1}$ via the following equations:
    \begin{align}
    y^{p,i,1}_\sigma = \begin{cases}
    y^{p,i,1}_\sigma-w_i^{p}\quad &\text{if}~\sigma = [(v_w,v_t),(v_{w+1},v_t),(v_{w+1},v_{t+1})] \\
     y^{p,i,1}_\sigma+w_i^{p}\quad &\text{if}~\sigma = [(v_w,v_t),(v_{w},v_{t+1}),(v_{w+1},v_{t+1})] \\
    0\quad &\text{otherwise}.
    \end{cases}
\end{align}
The vector $x^{p,i,1}=x^{p,i}+[\partial]y^{p,i,1}$ is the one that represents $p_{i}^{1}$.

Since the length of $p_i$ is finite, by doing such a transformation a finite number of times, we can convert $p_i$ to a new path $p_i'$ such that $p_i'$ walks along all the horizontal edges first followed by all the vertical edges, therefore each edge in $p_i'$ is also an edge in $P_0$. We use the vector $\hat{x}^{p,i}$ to represent $p_i'$, $\hat{x}^{p,i}=x^{p,i}+[\partial]\sum_{j=1}^{q}y^{p,i,j}$ where $q$ is the number of transformations. Apperantly, $\hat{x}^{p,i}_e=0$ when $e\notin P_0$. Let $y^{p,i}=\sum_{j=1}^{q}y^{p,i,j}$, we have $\hat{x}^{p,i}=x^{p,i}+[\partial]y^{p,i}$. 
\end{itemize}

For cycle $i$, we also use a vector $x^{c,i}$ to represent $(c_i,w_i^{c})$, 
    \begin{align}
    x^{c,i}_e = \begin{cases}
    w_i^{c}\quad &\text{if}~e\in c_i\\
    0\quad &\text{otherwise},
    \end{cases}
\end{align}
Let $(v,v')$ be an arbitrary chosen node in $c_i$, 
we construct a trail $p_{aux}^{i}$ that passes $(v,v')$ as follows: 
\begin{itemize}
    \item From $(s_1^{1},s_1^{2})$, walk along $P_{h}$ until the node $(s_1^{1},v')$. It corresponds to a part of an Eulerian trail of $G_2$. 
    \item From $(s_1^{1},v')$, walk along an Eulerian trail of $G_1$ to $(s_2^{1},v')$. It must passes the node $(v,v')$. 
    \item From $(s_2^{1},v')$, walk along the remaining part of the Eulerian trail of $G_2$ to the node $(s_2^{1},s_2^{2})$. 
\end{itemize}
Figure~\ref{fig:lpequal}(b) shows an example, in which the blue line represents $p_{aux}^{i}$ and the red line represents $c_i$. 

We use $x^{aux,i}$ to denote the vector representing $p_{aux}^{i}$. The combination of $c_i$ and $p_{aux}^{i}$, represented by the vector $x^{c,i}+x^{aux,i}$ creates a new trail (may have repeated edges) from $(s_1^{1},s_1^{2})$ to $(s_2^{1},s_2^{2})$: (1) walk along $p_{aux}^{i}$ from $(s_1^{1},s_1^{2})$ to $(v,v')$, (2) walk along $c_i$ from $(v,v')$ to itself, and (3) walk along the remaining part of $p_{aux}^{i}$ from $(v,v')$ to $(s_2^{1},s_2^{2})$. By using the same way as we described above, each $c_i + p_{aux}^{i}$ or $p_{aux}^{i}$ can be converted to a new trail in which each edge is also an edge in $P_0$. We use $\hat{x}^{c,i}$ or $\hat{x}^{aux,i}$ to represent the new trail accordingly, therefore, we have  $\hat{x}^{c,i}=x^{c,i}+x^{aux,i}+[\partial]y^{c,i}$ and $\hat{x}^{aux,i}=x^{aux,i}+[\partial]y^{aux,i}$. Likewise, $\hat{x}^{c,i}_{e}=\hat{x}^{aux,i}_{e}=0$ when $e\notin P_0$.

We define a new vector $\hat{x}$ such that:
\begin{align*}
    \hat{x}&=\sum_{i=1}^{n}\hat{x}^{p,i}+\sum_{j=1}^{m}\hat{x}^{c,j}-\hat{x}^{aux,j} \\
    &=\sum_{i=1}^{n}x^{p,i}+[\partial]y^{p,i}+\sum_{j=1}^{m}x^{c,j}+x^{aux,j}+[\partial]y^{c,j}-\left(\sum_{j=1}^{m}x^{aux,j}+[\partial]y^{aux,j}\right) \\
    &=\sum_{i=1}^{n}x^{p,i} + \sum_{j=1}^{m}x^{c,j} + [\partial]\left(\sum_{i=1}^{n}y^{p,i}+\sum_{j=1}^{m}y^{c,j}-\sum_{j=1}^{m}y^{aux,j}\right) \\
    &=x+[\partial]\left(\sum_{i=1}^{n}y^{p,i}+\sum_{j=1}^{m}y^{c,j}-\sum_{j=1}^{m}y^{aux,j}\right).
\end{align*}
Therefore, $\hat{x}$ is a vector converted from $x$ via boundary operations. $\hat{x}$ is equal to $x^{init}$ because:
\begin{enumerate}
    \item $\hat{x}_{e}=0$ when $e\notin P_0$ since $\hat{x}^{p,i}_{e}=\hat{x}^{c,i}_{e}=\hat{x}^{aux,i}_{e}=0$ when $e\notin P_0$ for each $i$. 
    \item As we have proved above, the boundary operator preserves the constraints~\eqref{eqn:cycle_cst}-\eqref{eqn:proj_cst}. Therefore, $\hat{x}$ satisfies the constraints~\eqref{eqn:cycle_cst}-\eqref{eqn:proj_cst} since $x$ is a feasible solution of~\eqref{eqn:gtedlp_obj}-\eqref{eqn:gtedlp_last}. Combined with the first point, we have that  $\hat{x}_{e}=1$ if $e\in P_0$ and  $\hat{x}_{e}=0$ otherwise, meaning that $\hat{x}=x^{init}$.
\end{enumerate}

Hence, for each feasible solution $x$ of~\eqref{eqn:gtedlp_obj}-\eqref{eqn:gtedlp_last}, we have:
\begin{align*}
    x &=x^{init}-[\partial]\left(\sum_{i=1}^{n}y^{p,i}+\sum_{j=1}^{m}y^{c,j}-\sum_{j=1}^{m}y^{aux,j}\right) \\
    &=x^{init}+[\partial]\left(-\sum_{i=1}^{n}y^{p,i}-\sum_{j=1}^{m}y^{c,j}+\sum_{j=1}^{m}y^{aux,j}\right),
\end{align*}
meaning that $x$ is also a feasible solution of~\eqref{eqn:boundarylp}. 

We proved that the feasibility region of $x$ in~\eqref{eqn:boundarylp} is the same as the feasibility region of $x$ in~\eqref{eqn:gtedlp_obj}-\eqref{eqn:gtedlp_last}, and since the objective functions of these two linear relaxations are the same, the optimal solutions of them are equal. 

By employing the same approach and taking into account that if all edge weights in a flow network are non-negative integers, the flow decomposition theorem guarantees that the network can be decomposed into a finite set of weighted paths and cycles, each with positive integer weight, we can prove that \gtedilp and \boundaryilp are also equivalent.

Based on the proof, we can conclude that the way to index the vertices or edges in the alignment graph, or the 2-simplices in $T(G_1,G_2)$, will not affect the equivalence result. Additionally, different choices of orientations for the 2-simplices in $T(G_1,G_2)$ will also not impact the equivalence result. This is because for any two sets $T(G_1,G_2)$ and $T'(G_1,G_2)$ containing the same 2-simplices with the same indices but different orientations, if $(x,y)$ is a feasible solution of~\boundaryilp (or its relaxation) that corresponds to $T(G_1,G_2)$, then $(x,y')$ is a feasible solution of~\boundaryilp (or its relaxation) that corresponds to $T'(G_1,G_2)$, where $y_i=y'_i$ when $\sigma_i\in T(G_1,G_2)$ has the same orientation as $\sigma'_i\in T'(G_1,G_2)$, and $y_i=-y'_i$ when $\sigma_i\in T(G_1,G_2)$ has the opposite orientation to $\sigma'_i\in T'(G_1,G_2)$. Therefore, it is acceptable to specify a particular orientation for each 2-simplex when defining $T(G_1,G_2)$.

\section{The linear relaxation of \boundaryilp does not always yield integer solutions}
\label{tu}
\subsection{$\boundarym$ is not necessarily totally unimodular}
\label{sec:nottu}

\begin{figure*}
    \centering
    \includegraphics[width=\textwidth]{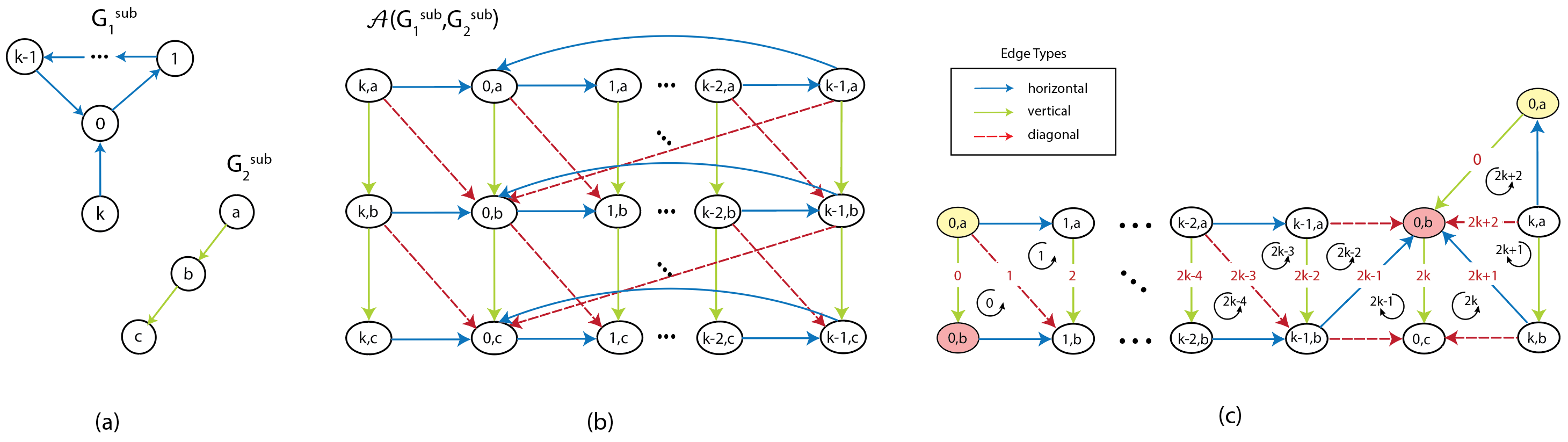}
    \caption{(a) Subgraphs $G_1^{sub}$ and $G_2^{sub}$ of input graphs $G_1$ and $G_2$. Dots represent a path from node 1 to $k-1$ with middle nodes omitted. (b) The alignment graph $\alngraph(G_1^{sub}, G_2^{sub})$ with different edges labeled with colors. (c) A subgraph of the alignment graph in (b) with edges and triangles numbered. Dots represent horizontal and diagonal edges omitted. The same vertices that are repeated in (c) are marked with yellow and red filling colors.}
    \label{fig:alngraph}
\end{figure*}
A linear programming formulation always yields integer solutions if its constraint matrix is totally unimodular, which means that all of its square submatrices have determinants of 0, -1 or 1~\citep{dey2010optimal}. To show that the constraint matrix of the LP relaxation of \boundaryilp is not totally unimodular, we first write the LP in standard form.

In a standard form of a LP, all variables are greater than, or equal to 0. Since $y$ vectors in the LP relaxation of \boundaryilp can contain negative entries, we decompose it into $y^+ - y^-$.  Given alignment graph $\alngraph(G_1, G_2) = (V, E, \delta)$ and $T(G_1, G_2)$, 
we can now write the standard form of \boundarylp as
\begin{equation}
\begin{aligned}
    \underset{x\in \mathbb{R}^{|E|},y^+,y^-\in \mathbb{R}^{|T(G_1,G_2)|}}{\textrm{minimize}}\quad & \sum_{e \in E} x_e\delta(e)\\
    \textrm{subject to}&\quad \left[I,-[\partial],[\partial]\right][x,y^{+},y^{-}]^{\top} = x^{init}\\
    &\quad x,y^{+},y^{-} \geq 0.
\end{aligned}    
\label{eqn:standaradboundarylp}
\end{equation}
Hence the constraint matrix of the LP relaxation is $A = \left[I, -[\partial], [\partial]\right]$. According to the characteristics of a totally unimodular matrix~\citep[p.~280]{schrijver1998theory} $A$ is not totally unimodular if $[\partial]$ is not totally unimodular. We show that $[\partial]$ is not TU when the input graphs satisfy the constraints given in the following theorem.




\begin{theorem} Given two unidirectional, edge-labeled Eulerian graphs $G_1$ and $G_2$ where $|E_1| \geq 2$ and $|E_2| \geq 2$, the boundary matrix $\boundarym$ constructed from $\alngraph (G_1,G_2)=(V,E,\delta)$ and $T(G_1,G_2)$ is not totally unimodular if there is a vertex $v\in V_1 \text{~or~} V_2$ such that there are at least 3 unique edges in $E_1$ or $E_2$ that are incident to $v$. Here, unique edges are edges that connect to $v$ at one end but have different endpoints at the other end.
\label{thm:nontu}
\end{theorem}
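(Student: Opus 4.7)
The plan is to exhibit a concrete square submatrix of $[\partial]$ whose absolute determinant is at least $2$, which directly certifies failure of total unimodularity by definition. The construction will be local: it will use only the portion of $\alngraph(G_1, G_2)$ induced by the high-degree vertex $v$, an arbitrary edge of the other input graph, and the associated triangles of $T(G_1, G_2)$.

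Without loss of generality I would take the high-degree vertex to lie in $V_1$ (the case $v \in V_2$ follows by swapping the roles of $G_1$ and $G_2$ in the symmetric construction of $\alngraph$). Fix three unique incident edges of $v$ in $G_1$, say with other endpoints $u_1, u_2, u_3$, and pick an arbitrary edge $(a, b) \in E_2$, which exists because $|E_2| \geq 2$. For each pair (incident edge, $(a,b)$), Definition of $T(G_1,G_2)$ produces two triangles, one from each simplex family, giving $6$ triangles in total. Applying the boundary operator yields explicit $\pm 1$ coefficients on the vertical, horizontal, and diagonal edges of the local subgraph depicted in Figure~\ref{fig:alngraph}. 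The structurally important observation is that the horizontal edge $[(v,a),(v,b)]$ appears with the same sign in the boundaries of three ``second-family'' triangles, while each diagonal $[(v,a),(u_i,b)]$ is shared between one ``first-family'' and one ``second-family'' triangle with opposite signs.

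From the resulting local boundary submatrix (at most $13$ rows and $6$ columns) I would identify a square block whose columns mix both simplex families and whose rows include both the triply-shared horizontal edge and diagonal edges spanning multiple incident edges of $v$. Cofactor expansion along the sparsest rows and columns should reduce the determinant calculation to a small invariant determined by the shared-edge geometry. The hypothesis that $v$ has at least three unique neighbors, rather than only two, is exactly what pushes that invariant to $\pm 2$ instead of $\pm 1$: with only two neighbors the corresponding ``book of two squares'' can be verified directly to admit only determinants in $\{-1, 0, 1\}$.

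The main obstacle is pinpointing the correct submatrix, since many natural $3 \times 3$ or $4 \times 4$ candidates built purely from one simplex family turn out to have determinant $0$ or $\pm 1$ (a fan of triangles all sharing a single edge is in fact totally unimodular). The submatrix must therefore carefully interleave the two simplex families around $v$. A secondary subtlety is handling the case in which the three unique edges at $v$ mix in- and out-directions, but the local combinatorics of the alignment-graph triangles is identical up to a transposition of rows and columns, so the determinant calculation carries over without change. Finally, I would note that the argument never uses any edge of $G_2$ beyond the fixed $(a,b)$, so the hypothesis $|E_2| \geq 2$ is used only to guarantee that $(a,b)$ exists.
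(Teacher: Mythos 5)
There is a genuine gap: the local submatrix you propose to build is in fact totally unimodular, so no square submatrix of it can have determinant $\pm 2$. Concretely, fix the three unique edges at $v$ with far endpoints $u_1,u_2,u_3$ and a single edge $(a,b)\in E_2$, and write out the $13\times 6$ block of $\boundarym$ on the six resulting triangles. After deleting duplicated rows and rows that are signed unit vectors (neither operation affects total unimodularity), what survives is a $4\times 6$ matrix whose rows correspond to the horizontal edge $[(v,a),(v,b)]$ (which is the sum of the three unit vectors indexing the three triangles containing it) and the three diagonals $[(v,a),(u_i,b)]$ (each of which is minus the sum of the two unit vectors indexing the two triangles built from the $i$-th incident edge; note the diagonal enters both of its triangles with the \emph{same} sign $-1$, not opposite signs as you assert). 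Every column of this $4\times 6$ matrix has at most one $+1$ and at most one $-1$, so it is a network matrix and hence TU, and therefore the entire $13\times 6$ block is TU. A fan of triangles over a single edge of $G_2$ cannot produce a determinant of magnitude $2$ no matter how many incident edges of $v$ you include, and reorientation does not help, since flipping simplex orientations only multiplies rows and columns by $-1$.

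The missing idea is that the certificate of non-unimodularity must be a M\"{o}bius cycle of 2-simplices: an odd-length cyclic sequence $\sigma_0,\dots,\sigma_{2k+2}$ in which consecutive simplices (cyclically) share exactly one 1-simplex, whose $(2k+3)\times(2k+3)$ cycle matrix has determinant $\pm 2$. Building such a cycle forces you to leave the star of $v$. The paper's proof uses the Eulerian hypothesis to extract a directed cycle of $G_1$ through $v$ together with one additional incident edge (this is where ``at least 3 unique edges'' is actually used), and, crucially, \emph{two consecutive} edges $(v_a,v_b),(v_b,v_c)$ of $G_2$ --- this is the real role of $|E_2|\ge 2$, not merely the existence of one edge. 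The triangles march around the $G_1$-cycle over the edge $(v_a,v_b)$ and then use the second edge $(v_b,v_c)$ together with the pendant edge of $G_1$ to close the band with the odd twist. Your own observation that your argument ``never uses any edge of $G_2$ beyond the fixed $(a,b)$'' is precisely the warning sign that the construction is too local to succeed.
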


\begin{proof}

To prove that the boundary matrix is not TU, we only need to show that it is not TU under one specific chosen orientation for 1- and 2-simplices, as well as one specific chosen set of indices for 1- and 2-simplices. This is because changing the orientations or indices of 1-simplices in $E$ or 2-simplices in $T(G_1,G_2)$ corresponds to permuting rows and columns of $[\partial]$ or multiplying rows and columns of $[\partial]$ by $-1$, which preserves the total unimodularity~\citep[p. 280]{schrijver1998theory}.
    

    Without loss of generality, let $v_0\in V_1$ be a node that is incident to at least 3 unique edges. Since $G_1$ is an Eulerian graph, $v$ must be part of a cycle $C$ in $G_1$. Also, there must exist another node $v_k$ and an edge between $v_0$ and $v_k$ in either direction, such that the edge between $v_0$ and $v_k$ is not contained in cycle $C$ (Figure~\ref{fig:alngraph}(a)). Suppose the number of nodes in the cycle is $k$ ($k\geq 3$ due to the unidirectionality constraint), and let the cycle $C= v_0, v_1,\dots,v_{k-1}$. Since a specific choice of 1-simplex orientations does not affect the total unimodularity of the boundary matrix, we assume the edge between $v_0$ and $v_k$ is $[v_k, v_0]$ without loss of generality. We use $G_1^{sub}=(V_1^{sub}, E_1^{sub})$ to denote the subgraph with $V_1^{sub}=\{v_0,\dots,v_{k-1},v_k\}$ and  $E_1^{sub}=\left\{[v_i,v_{i+1}]: i\in\{0,1,\dots k-2\}\right\}\cup\{[v_k,v_0]\}$. 
    Since $|E_2|\geq 2$ and $G_2$ is a connected graph, there exist two consecutive, directed edges in $G_2$. We use $G_2^{sub}=(V_2^{sub}, E_2^{sub})$ to denote the subgraph of $G_2$ with $V_2^{sub}=\{v_a,v_b,v_c\}$ and $E_2^{sub}=\{[v_a,v_b],[v_b,v_c]\}$. The alignment graph $\alngraph(G_1^{sub},G_2^{sub})$ is formed with $G_1^{sub}$ and $G_2^{sub}$ and is a subgraph of $\alngraph(G_1, G_2)$, therefore, each subgraph of $\alngraph(G_1^{sub},G_2^{sub})$ is also a subgraph of $\alngraph(G_1, G_2)$. Similarly, the 2-simplex set $T(G_1^{sub},G_2^{sub})$ is a subset of $T(G_1,G_2)$. 

    We extract a sequence of 2-simplices (Figure~\ref{fig:alngraph}(c)), $T_c$, from $\Tr(G_1^{sub}, G_2^{sub})$ via following steps:
    \begin{enumerate}
        \item Extract all oriented 2-simplices $[(v_i,v_a),(v_i,v_b),(v_{i+1},v_b)]$ and \\$[(v_i,v_a),(v_{i+1},v_a),(v_{i+1},v_b)]$ for $0\leq i \leq k-2$ from $\Tr(G_1^{sub}, G_2^{sub})$. \\Flip the orientations of $[(v_i,v_a),(v_{i+1},v_a),(v_{i+1},v_b)]$ for all $0\leq i \leq k-2$, obtaining $[(v_i,v_a),(v_{i+1},v_b),(v_{i+1},v_a)]$. Use $\sigma_{2i}$ to denote $[(v_i,v_a),(v_i,v_b),(v_{i+1},v_b)]$, and $\sigma_{2i+1}$ to denote $[(v_i,v_a),(v_{i+1},v_b),(v_{i+1},v_a)]$.
        \item Add to the sequence another five oriented 2-simplices from $\Tr(G_1^{sub}, G_2^{sub})$ in the order as specified: $\sigma_{2k-2}=[(v_{k-1}, v_a),(v_{k-1}, v_b),(v_0, v_b)]$, $\sigma_{2k-1}=[(v_{k-1}, v_b), (v_0, v_b), (v_0, v_c)]$, $\sigma_{2k}=[(v_k, v_b), (v_0, v_b), (v_0, v_c)]$, $\sigma_{2k+1}=[(v_k, v_a), (v_k, v_b), (v_0, v_b)]$ and finally $\sigma_{2k+2}=[(v_k, v_a), (v_0, v_a), (v_0, v_b)]$.
    \end{enumerate}
    In total, we extract a sequence of $(2k+3)$ oriented 2-simplices, $T_c = \{\sigma_0, \sigma_1,\dots, \sigma_{2k+2}\}$, such that $\sigma_{i}$ and $\sigma_{i + 1\mod (2k+3)}$ share one edge. The extracted 2-simplices and their orientations as well as all shared edges are shown in Figure~\ref{fig:alngraph}(c). We flip the orientations of $[(v_i,v_a),(v_{i+1},v_a),(v_{i+1},v_b)]$ solely to ensure that the submatrix constructed below has a simple form, which makes it easier to compute the determinant.
    
    Based on $T_c$, we obtain $M_1$, a $(2k+3)\times (2k+3)$ submatrix of $[\partial]$ where each roll corresponds to a shared edge and each column corresponds to a 2-simplex in $T_c$. 
    The entry values of $M_1$ are the signed coefficients of each selected 1-simplex from the boundaries of selected 2-simplices. 
    \begin{align*}
    & M_1= \begin{bmatrix}
        1 & 0  & \dots & 0 & 0 & 0 & 0 & 0 & 0 & 1 \\
        -1 & 1  & \dots & 0 & 0 & 0 & 0 & 0 & 0 & 0\\
        0 & -1  & \dots & 0 & 0 & 0 & 0 & 0 & 0 & 0\\
        \vdots  & \vdots & \ddots & \vdots & \vdots & \vdots & \vdots & \vdots & \vdots & \vdots \\
         0 & 0  & \dots & 1 & 0 & 0 & 0 & 0 & 0 & 0\\
         0 & 0  & \dots & -1 & 1 & 0 & 0 & 0 & 0 & 0\\
         0 & 0  & \dots & 0 & -1 & 1 & 0 & 0 & 0 & 0\\
         0 & 0  & \dots & 0 & 0 & 1 & 1 & 0 & 0 & 0\\
         0 & 0  & \dots & 0 & 0 & 0 & 1 & 1 & 0 & 0\\
         0 & 0  & \dots & 0 & 0 & 0 & 0 & 1 & 1 & 0\\
         0 & 0  & \dots & 0 & 0 & 0 & 0 & 0 & -1 & -1
    \end{bmatrix}
    \end{align*}
The determinant of $M_1$ is:
\begin{align*}
    &\det M_1 \\ &= \det \begin{bmatrix}
        -1 & 1  & \dots & 0 & 0 & 0 & 0 & 0 & 0 \\
        0 & -1  & \dots & 0 & 0 & 0 & 0 & 0 & 0 \\
        \vdots  & \vdots & \ddots & \vdots & \vdots & \vdots & \vdots & \vdots & \vdots  \\
         0 & 0  & \dots & 1 & 0 & 0 & 0 & 0 & 0 \\
         0 & 0  & \dots & -1 & 1 & 0 & 0 & 0 & 0 \\
         0 & 0  & \dots & 0 & -1 & 1 & 0 & 0 & 0 \\
         0 & 0  & \dots & 0 & 0 & 1 & 1 & 0 & 0 \\
         0 & 0  & \dots & 0 & 0 & 0 & 1 & 1 & 0 \\
         0 & 0  & \dots & 0 & 0 & 0 & 0 & 1 & 1 \\
         0 & 0  & \dots & 0 & 0 & 0 & 0 & 0 & -1 
    \end{bmatrix} 
    -\det \begin{bmatrix}
        1 & 0  & \dots & 0 & 0 & 0 & 0 & 0 & 0  \\
        -1 & 1  & \dots & 0 & 0 & 0 & 0 & 0 & 0 \\
        0 & -1  & \dots & 0 & 0 & 0 & 0 & 0 & 0 \\
        \vdots  & \vdots & \ddots & \vdots & \vdots & \vdots & \vdots & \vdots & \vdots  \\
         0 & 0  & \dots & 1 & 0 & 0 & 0 & 0 & 0 \\
         0 & 0  & \dots & -1 & 1 & 0 & 0 & 0 & 0 \\
         0 & 0  & \dots & 0 & -1 & 1 & 0 & 0 & 0 \\
         0 & 0  & \dots & 0 & 0 & 1 & 1 & 0 & 0 \\
         0 & 0  & \dots & 0 & 0 & 0 & 1 & 1 & 0 \\
         0 & 0  & \dots & 0 & 0 & 0 & 0 & 1 & 1 
    \end{bmatrix} \\
    &=(-1)^{2k-2}\times (-1)-1^{2k+2}=-2.
\end{align*}
Since the determinant of $M_1$ is -2, and $M_1$ is a submatrix of $[\partial]$, $[\partial]$ is not totally unimodular.
\end{proof}

The minimal pair of input graphs that satisfy the conditions in Theorem~\ref{thm:nontu} is a graph with one 3-node cycle and one additional edge incident to the cycle and an acyclic, connected graph with three nodes. In practice, most non-trivial edge-labeled Eulerian graphs satisfy these conditions.

According to the definitions in~\citet{dey2010optimal}, the subgraph used to construct $M_1$ in the above proof (Figure~\ref{fig:alngraph}(c)) is a M\"{o}bius subcomplex, and $M_1$ is a $(2k+3)$-M\"{o}bius cycle matrix (MCM). Theorem~\ref{thm:nontu} also establishes that there may exist a M\"{o}bius subcomplex in an alignment graph, which corrects the false claim made in Lemma 2 in~\citep{gted}.


Theorem 2 in \citet{gted} attempts to employ a more algebraic approach to attempt to demonstrate that $[\partial]$ is TU by establishing that the alignment graph is a M\"{o}bius-free product space. However, the property of being M\"{o}bius-free globally does not imply the absence of M\"{o}bius subcomplexes locally. As we show in Theorem~\ref{thm:nontu}, although the alignment graph $\alngraph(G_1^{sub}, G_2^{sub})$ is homotopically equivalent to the one-dimensional circle, which is M\"{o}bius-free, it still contains a M\"{o}bius subcomplex.



\begin{figure}
    \centering
    \includegraphics[width=0.8\columnwidth]{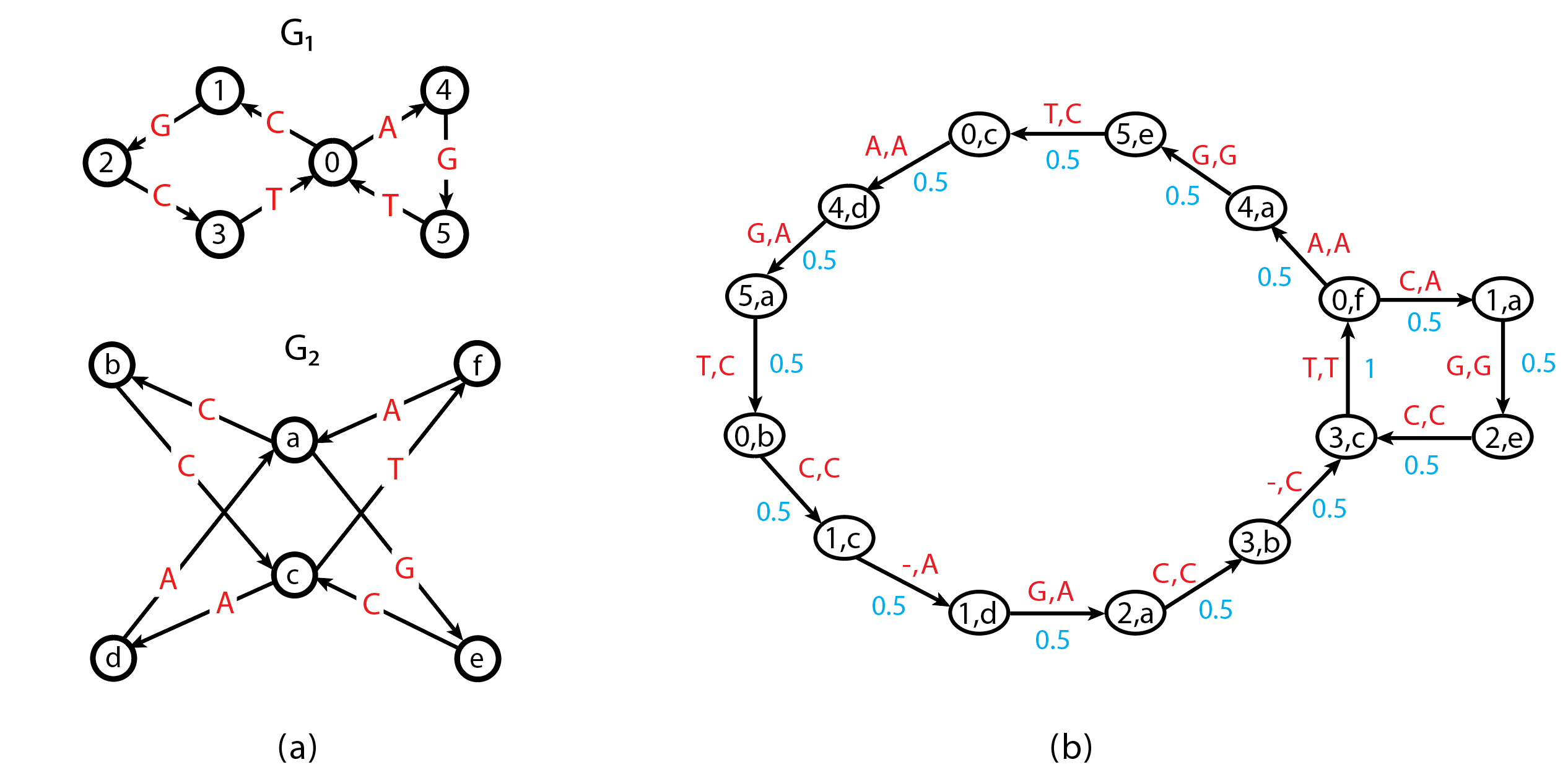}
    \caption{An example of a fractional optimal solution to \boundarylp and \gtedlp. (a) A pair of input graphs to \boundarylp and \gtedlp. Letters in red are edge labels. (b) A subgraph of $\alngraph(G_1, G_2)$ that is induced by alignment edges with non-zero weights (blue font) in an optimal solution to the LPs. The letters in red show the matching between the edge labels or between edge labels and gaps.} 
    \label{fig:ilpexp}
\end{figure}

\subsection{The LP yields optimal fractional solutions}
\label{sec:nonint}

The fact that $[\partial]$ is not totally unimodular does not guarantee that \boundarylp has a fractional optimal objective value. In this section, we prove that \boundarylp does not always yield integer optimal solutions by constructing a specific example with a fractional optimal objective value.


\begin{theorem}
    \Gtedlp and \boundarylp do not always yield optimal integer solutions.
    \label{thm:nonint}
\end{theorem}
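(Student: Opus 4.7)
The plan is to exhibit a concrete pair of input graphs $G_1,G_2$ for which \gtedlp (equivalently \boundarylp, by Theorem~\ref{thm:lpequal}) admits a fractional feasible solution whose objective value is strictly less than the optimal objective of the corresponding ILP, and hence strictly less than any integer feasible solution of the LP. Since the two LPs have identical feasibility regions and objective functions, it suffices to work with \gtedlp, whose constraints~\eqref{eqn:cycle_cst}--\eqref{eqn:proj_cst} are easier to verify by hand.

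First I would take the small example already drawn in Figure~\ref{fig:ilpexp}(a), i.e.\ a pair of short edge-labeled Eulerian graphs chosen so that no integer cover of each input edge by a single trail in $\alngraph(G_1,G_2)$ can avoid paying for at least one mismatch or indel, yet a ``half-and-half'' fractional cover can split the responsibility for each input edge between two different alignment edges (a match and an indel, say) each with weight $1/2$. Concretely, I would take the fractional assignment $x^\ast$ indicated in Figure~\ref{fig:ilpexp}(b), in which each edge of the induced subgraph carries weight $1/2$, and I would verify the three things needed for feasibility in the LP relaxation of \gtedlp: (i)~$x^\ast\ge 0$; (ii)~$Ax^\ast=0$ (flow conservation at every alignment vertex, which follows because the support of $x^\ast$ decomposes into two edge-disjoint cycles in $\alngraph(G_1,G_2)$, each of which contributes $0$ to the incidence sum at every vertex); and (iii)~$\sum_{e\in E}x^\ast_e I_i(e,f)=1$ for every input edge $f\in E_i$ (the projection constraint), which holds because each input edge is covered by exactly two alignment edges in the support, each carrying weight $1/2$.

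Next I would compute the objective $\sum_e x^\ast_e \delta(e)$ for this assignment and compare it to an exhaustive lower bound on integer feasible solutions. The integer lower bound follows from Lemma~\ref{lem:gted} combined with the counterexample-style reasoning of Section~\ref{sec:gtedilpcounterexp}: any integer feasible solution of \gtedlp must project to an Eulerian trail cover in each $G_i$, so enumerating the finitely many Eulerian trails of $G_1$ and $G_2$ in the chosen small example reduces the integer minimum to a pairwise edit-distance computation. I would pick the example so that this minimum strictly exceeds the fractional objective -- for instance, fractional objective $1$ versus integer minimum $2$ -- which immediately gives Theorem~\ref{thm:nonint}.

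The main obstacle I expect is \emph{optimality} of the fractional point rather than just feasibility. Strictly speaking, the theorem statement only requires that the LP \emph{does not always yield an integer optimum}, so it is enough to show that the LP optimum is strictly smaller than the ILP optimum; this is automatic once I exhibit a fractional feasible $x^\ast$ whose objective is strictly below every integer feasible solution's objective. So I would not need to produce a dual certificate; I only need to certify the integer lower bound, which in a graph with a handful of Eulerian trails is a finite case check. The presentation would close by noting that, by Theorem~\ref{thm:lpequal}, the same $x^\ast$ (together with a suitable $y=y^+ - y^-$ arising from flow decomposition as in the proof of that theorem) is feasible for \boundarylp with the same fractional objective, establishing the statement for both LPs simultaneously.
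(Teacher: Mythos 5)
Your overall strategy is logically sound and genuinely different from the paper's: the paper proves that the fractional point $x^*$ is \emph{optimal} for \gtedlp by exhibiting a dual feasible $y^*$ satisfying complementary slackness with objective value $3.5$, and then concludes from the integrality of all edge costs that no integer solution can attain this value. You instead propose to skip the dual certificate and show only that some fractional feasible point has objective strictly below the minimum over all \emph{integer} feasible points. That implication is valid, and feasibility of $x^*$ is indeed routine to check (one cosmetic slip: in the paper's $x^*$ not every support edge carries weight $1/2$; the edge $[(v_3,v_c),(v_0,v_f)]$ carries weight $1$, which is what makes the projection constraints balance).

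The genuine gap is in how you certify the integer lower bound. You claim that enumerating the Eulerian trails of $G_1$ and $G_2$ and taking pairwise edit distances bounds the integer minimum of \gtedilp from below. But that enumeration computes $\GTED(G_1,G_2)$, whereas the integer optimum of \gtedilp is $\CCTED(G_1,G_2)$ (Theorem~\ref{thm:cctedilp}): integer feasible solutions project to \emph{closed-trail covers} of each $G_i$, not to single Eulerian trails, and the counterexample in Section~\ref{sec:gtedilpcounterexp} --- which you yourself invoke --- shows that $\CCTED$ can be strictly smaller than $\GTED$. So establishing ``fractional objective $< \GTED$'' does not establish ``fractional objective $<$ integer optimum,'' and your lower-bound certificate bounds the wrong quantity. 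To repair your route you would have to enumerate all edge-disjoint closed-trail covers of $G_1$ and $G_2$ together with their min-cost matchings (still finite for this example, but a substantially larger case analysis), or else fall back on the paper's dual certificate, which sidesteps the issue entirely by pinning the LP optimum at the fractional value $3.5$ directly.
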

We prove the above theorem by giving an example where \gtedlp yields a fractional optimal solution. Since by Theorem~\ref{thm:lpequal}, two LPs are equivalent, it follows that \boundarylp also yields the same fractional optimal solution.

Construct $G_1$ and $G_2$ such that their edges and edge labels are equal to the ones specified in Figure~\ref{fig:ilpexp}(a). Let the edge multi-set of $\alngraph(G_1, G_2)$ be $E$. We assign an edge cost to 0 if the edge matches two equal characters and 1 otherwise. Construct vector $x^*\in\mathrm{R}^{|E|}$ and set entries corresponding to edges in Figure~\ref{fig:ilpexp}(b) to 0.5 except edge $[(v_3,v_c),(v_0,v_f)]$ to which the corresponding entry is set to 1. Set the rest of the entries of $x^*$ to 0.

\begin{lemma}
    $x^*$ is an optimal solution to \gtedlp constructed with $\alngraph(G_1, G_2)$ and $\Tr(G_1, G_2)$.
\end{lemma}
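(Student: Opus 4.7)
The plan is to establish the lemma in two stages: verify feasibility of $x^*$ for \gtedlp, and then exhibit an optimality certificate matching $\delta(x^*)$.

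For feasibility, I would walk through the three constraint families directly on the small support pictured in Figure~\ref{fig:ilpexp}(b). Non-negativity is immediate since every nonzero entry is $\tfrac12$ or $1$. For the flow-conservation block $Ax^* = 0$, I would check at each vertex of $\alngraph(G_1,G_2)$ that the sum of $x^*_e$ on incoming edges equals the sum on outgoing edges. The subgraph of Figure~\ref{fig:ilpexp}(b) should decompose into two half-weighted closed trails that share the weight-$1$ diagonal $[(v_3,v_c),(v_0,v_f)]$; at every vertex off that shared edge the two halves each contribute $\tfrac12$ in and $\tfrac12$ out, and at the two endpoints of the shared edge the extra $\tfrac12$ on each side combines with the weight-$1$ edge to preserve balance. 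For the projection constraints $\sum_e x^*_e I_i(e,f) = 1$, I would go through each $f \in E_1 \cup E_2$ in turn and confirm that either two half-weight alignment edges both project to $f$ (summing to $1$), or one weight-$1$ alignment edge projects to it.

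Next, I would compute $\delta(x^*)$ by summing the costs of the mismatch and gap edges in the support, each weighted by its $x^*_e$; the matches contribute nothing. Denote this value $v^*$. To certify that $v^*$ is the LP optimum I would exhibit a matching dual-feasible solution. The LP dual of \gtedlp assigns a free variable $\alpha_u$ to each vertex of $\alngraph(G_1,G_2)$ (from $Ax=0$) and a free variable $\beta_{i,f}$ to each input edge $f \in E_i$ (from the projection constraints), with one inequality per alignment edge $e = [u,v]$ of the form $\alpha_v - \alpha_u + \sum_{i,f} I_i(e,f)\beta_{i,f} \leq \delta(e)$ and objective $\sum_{i,f}\beta_{i,f}$. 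Guided by complementary slackness, I would force these inequalities to be tight on every edge in the support of $x^*$ and solve the resulting small linear system on the vertices and input edges appearing in Figure~\ref{fig:ilpexp}(b), extending $\alpha$ to the remaining alignment vertices and $\beta$ to the remaining input edges so as to dominate the costs on the off-support alignment edges.

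The main obstacle is the dual-feasibility verification off the support: the LP has one constraint for every horizontal, vertical, and diagonal edge in $\alngraph(G_1,G_2)$, not only those drawn in Figure~\ref{fig:ilpexp}(b), so after pinning down $\alpha$ and $\beta$ from complementary slackness I must still show that $\alpha_v - \alpha_u + \sum_{i,f} I_i(e,f)\beta_{i,f} \leq \delta(e)$ holds on every remaining alignment edge. Because $|V_1|$ and $|V_2|$ are tiny here, this amounts to a finite case check that I would enumerate by edge type (match diagonal, mismatch diagonal, horizontal gap, vertical gap). If that direct check proves unwieldy, a viable alternative is to bypass duality and argue combinatorially: show that any feasible $x$ induces a fractional multiset of closed walks in $\alngraph(G_1,G_2)$ whose projections cover every edge of $G_1$ and $G_2$ exactly once, and then lower-bound the cost of any such covering family by $v^*$ using the specific label patterns of the two cycles in Figure~\ref{fig:ilpexp}(a), for which $v^*$ can be read off as the minimum achievable in a short case analysis.
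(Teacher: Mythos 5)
Your proposal follows essentially the same route as the paper's proof: verify primal feasibility of $x^*$ directly, compute its objective value ($3.5$), and certify optimality by constructing a dual solution forced to be tight on the support of $x^*$ via complementary slackness. The one device the paper uses that you describe only as "solve the resulting small linear system" is to sum the tight dual constraints around the closed trail so that the vertex potentials telescope to zero, which immediately pins down the sum of the dual variables on the input edges (and hence the dual objective) as $3.5$; like you, the paper leaves the off-support dual-feasibility check largely implicit.
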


\begin{proof}
We prove the optimality of $x^*$ via complementary slackness.
    We first write \gtedlp in standard form.
    \begin{equation}
    \begin{aligned}
        \underset{x\in \mathbb{R}^{|E|}}{\textrm{minimize}}&\quad \sum_{e\in E} \delta_e x_e\\
        \textrm{subject to}&\quad Ax = b\\
        & x_e \geq 0\quad\textrm{ for all }e\in E.
    \end{aligned}
    \label{eq:stdgtedlp}
    \end{equation}
    Here, $\delta$ is a vector of size $|E|$ where each entry is cost of edge $e$. The constraint matrix $A$ of the primal LP~\eqref{eq:stdgtedlp} has $|E|$ columns and $|V| + |E_1| + |E_2| = m$ rows, where $V$ is the vertex set of $\alngraph(G_1, G_2)$, and $E_1$ and $E_2$ are edge multi-sets of the input graphs. The first $|V|$ rows correspond to the constraints specified in~\eqref{eqn:cycle_cst}. The rest of the rows correspond to the constraints in~\eqref{eqn:proj_cst} that enforce the projected multi-set of edges to be equal to the multi-set of edges in each input graph. Since the input graphs both contain Eulerian tours, the vector $b$ has size $m$, where the first $|V|$ entries are zeroes and the rest of the entries are 1s. 
    
    We write the dual form of LP~\eqref{eq:stdgtedlp} as follows.
    \begin{equation}
        \begin{aligned}
            \underset{y\in \mathbb{R}^{m}}{\textrm{maximize}}&\quad \sum_{j=1}^{m} b_jy_j \\
            \textrm{subject to}&\quad A^{\top}y \leq \delta.
        \label{eq:dual}
        \end{aligned}
    \end{equation}
    Let the objective value of LP~\eqref{eq:stdgtedlp} given a $x$ as input is $\textrm{obj}^p_x$, and the objective value of LP~\eqref{eq:dual} given a $y$ as input is $\textrm{obj}^d_y$. To show that $x^*$ is an optimal solution to \gtedlp, we need to show that there exists a feasible solution to the dual LP, $y^*$, that satisfies the complementary slackness conditions and that $\textrm{obj}^d_{y^*} = \textrm{obj}^p_{x^{*}}$.
    
    Since each alignment edge has two endpoints and is projected to at most one edge in each graph, there are at most 4 non-zero entries in each column of $A$. The variables in $y$ of the dual form can be interpreted in three parts. Each of the first $|V|$ entries of $y$ can be assigned to each vertex in the alignment graph, and the next $|E_1|$ entries can be assigned to edges in $G_1$ and the last $|E_2|$ entries can be assigned to edges in $G_2$. There are $|E|$ constraints in the dual LP, and the $e$-th constraint can be assigned to one edge in the alignment graph has cost $\delta_e$. Therefore, each constraint that is assigned to a horizontal or a vertical edge can be written as
    \begin{equation}
        y_{v^{out}_e} - y_{v^{in}_e} + y_{e_i} \leq \delta_e,
    \end{equation}
    where $i=1$ if $e$ is a horizontal edge, and $i=2$ if $e$ is a vertical edge. $y_{v^{in}_e}$ and $y_{v^{out}_e}$ are the $y$ entries that are assigned to the vertices that are the start and end of edge $e$, and $y_{e_i}$ are the $y$ entries that assigned to the $\pi_i(e)$.
    
    Similarly, each constraint that is assigned to a diagonal edge is
    \begin{equation}
        y_{v^{in}_e} - y_{v^{out}_e} +y_{e_1} + y_{e_2} \leq \delta_e.
    \end{equation}
    
    We can verify that $x^*$ is a feasible solution of the primal form~\eqref{eq:stdgtedlp} by checking if constraints~\eqref{eqn:cycle_cst}-\eqref{eqn:proj_cst} are satisfied. The primal objective value can be computed in a straightforward way, and we can obtain $\textrm{obj}^p_{x^*} = 3.5.$

    According to complementary slackness conditions, since $x^*_e > 0$ for edges shown in Figure~\ref{fig:ilpexp}(b), the corresponding constraints in the dual LP~\eqref{eq:dual} must be tight, meaning that the equality must hold in these constraints. The rest of the dual constraints could have slacks.
    
    Let the subgraph of $\alngraph(G_1, G_2)$ shown in Figure~\ref{fig:ilpexp}(b) be $A'$.
    Denote the cycle that traverses from $[(0,f), (4,a)]$ to $[(3,c),(0,f)]$ be $C'$ and the 4-node cycle that traverses $((0,f), (1,a),(2,e), (3,c))$ be $C''$. Denote the concatenation of two cycles with $C$. The projected cycle from $C$ to $G_1$ is
    \begin{align}
        C_1 = (v_0, v_4, v_5, v_0, v_4, v_5, v_0, v_1, v_2, v_3, v_0, v_1, v_2, v_3, v_0).
    \end{align}
    The projected cycle from $C$ to $G_2$ is
    \begin{align}
        C_2 = (v_f, v_a, v_e, v_c, v_d, v_a, v_b, v_c, v_d, v_a, v_b,v_c, v_f, v_a, v_e, v_c, v_f).
    \end{align}
    Sum up all the constraints that are assigned edge $e$ where $x^*_e > 0$. Since these edges form a cycle, we get:
    \begin{align}
        & \sum_{e\in C}\big(y_{v_e^{out}} - y_{v_e^{in}}\big) + 2\big(\sum_{e_1\in C_1} y_{e_1} + \sum_{e_2\in C_2} y_{e_2}\big)\\
        = \quad & 0 + 2\big(\sum_{e_1\in C_1} y_{e_1} + \sum_{e_2\in C_2} y_{e_2}\big)\\
        = \quad &\sum_{e\in C} \delta_e  = 7,\\
        \Rightarrow \quad& \sum_{e_1\in C_1} y_{e_1} + \sum_{e_2\in C_2} y_{e_2} = 3.5.
        \label{eq:finaldualcst}
    \end{align}
    The summed edge cost is 7 as there are 7 edges that are either mismatch edges or vertical edges.

    All $y$ entries that correspond to vertices are free variables and are in every constraint. After fixing the $y$ variables that satisfy constraint~\eqref{eq:finaldualcst}, the rest of the $y$ variables can be set to satisfy the dual cosntraint. We now obtain $y^*$ which is a feasible solution to the dual LP.

    The only entries in $y^*$ that could have non-zero dual costs are those that correspond to edges in $E_1$ and $E_2$. Since these corresponding dual costs are all 1, 
    
    \begin{equation*}\begin{aligned}[t]
        \textrm{obj}^d_{y^*} = \sum_{e_1\in C_1} y_{e_1} + \sum_{e_2\in C_2} y_{e_2} = 3.5 = \textrm{obj}^p_{x^*}.
    \end{aligned}
    \end{equation*}
\end{proof}

Since the costs of alignment graph edges are all integers, the fact that \boundarylp and \gtedlp yield fractional optimal objective values mean that they must yield fractional solutions and assign fractional values to entries in $x$. Theorem~\ref{thm:nonint} follows. Since \boundarylp yields fractional solutions and GTED is always an integer, solving \boundarylp does not solve GTED.

\section{The average wall-clock time to solve ILPs on 3-cycle graphs}

\begin{table}[!h]
\renewcommand\thetable{S1}
\centering
\caption{The average wall-clock time to solve~\eqref{eqn:lowboundilp},~\eqref{eqn:newgtedlp_obj},~\eqref{eqn:compactgtedilp_obj} and the number of iterations for pairs of 3-cycle graphs for each $\GTED-\GTED_l$.}
\label{tab:runtime}
\begin{tabular}{@{}ccccc@{}}
\toprule
\textbf{GTED - $\GTED_l$} & \textbf{\begin{tabular}[c]{@{}c@{}}\eqref{eqn:lowboundilp} \\ runtime (s)\end{tabular}} & \textbf{\begin{tabular}[c]{@{}c@{}}GTED iterative \\ runtime (s)\end{tabular}} & \textbf{Iterations} & \textbf{\begin{tabular}[c]{@{}c@{}}GTED compact\\ runtime (s)\end{tabular}} \\ \midrule
\textbf{1.0}          & 0.06                                                                  & 0.17                                                                           & 3.55                & 0.39                                                                        \\
\textbf{2.0}          & 0.05                                                                  & 0.87                                                                           & 13.00               & 0.43                                                                        \\
\textbf{3.0}          & 0.08                                                                  & 25.41                                                                          & 67.60               & 1.24                                                                        \\
\textbf{4.0}          & 0.07                                                                  & 205.59                                                                         & 179.10              & 1.70                                                                        \\
\textbf{5.0}          & 0.08                                                                  & 943.68                                                                         & 502.85              & 5.37                                                                        \\ \bottomrule
\end{tabular}
\end{table}

\end{appendices}

\end{document}